\begin{document}
\newtheorem{theorem}{Theorem}
\newtheorem{corollary}[theorem]{Corollary}
\newtheorem{prop}[theorem]{Proposition}
\newtheorem{problem}[theorem]{Problem}
\newtheorem{lemma}[theorem]{Lemma}
\newtheorem{remark}[theorem]{Remark}
\newtheorem{observation}[theorem]{Observation}
\newtheorem{defin}{Definition}
\newtheorem{example}{Example}
\newtheorem{conj}{Conjecture}

\newcommand{\Con}{{\sf Con}}
\renewcommand{\H}{{\sf H}}
\renewcommand{\S}{{\sf S}}
\newcommand{\Term}{{\sf Term}}
\newcommand{\var}[1]{{\sf var}(#1)}
\newcommand{\Sg}[1]{{\sf Sg}(#1)}
\def\Sgg#1#2{{\sf Sg}_{#1}(#2)}
\newcommand{\tol}{{\sf tol}}
\newcommand{\lnk}{{\sf lk}}
\let\cd=\cdot
\let\eq=\equiv
\let\op=\oplus
\let\meet=\wedge
\let\join=\vee
\let\tm=\times
\newcommand{\one}{{\bf1}}
\newcommand{\two}{{\bf2}}
\let\un=\underline
\let\ov=\overline
\newcommand{\zA}{{\mathbb A}}
\newcommand{\zB}{{\mathbb B}}
\newcommand{\zC}{{\mathbb C}}
\newcommand{\zD}{{\mathbb D}}
\newcommand{\zM}{{\mathbb M}}
\newcommand{\cD}{{\mathcal D}}
\newcommand{\cE}{{\mathcal E}}
\newcommand{\cG}{{\mathcal G}}
\newcommand{\cK}{{\mathcal K}}
\newcommand{\cV}{{\mathcal V}}
\newcommand{\ba}{{\bf a}}
\newcommand{\bb}{{\bf b}}
\newcommand{\bc}{{\bf c}}
\newcommand{\bd}{{\bf d}}
\newcommand{\be}{{\bf e}}
\newcommand{\oa}{\ov a}
\newcommand{\ob}{\ov b}
\newcommand{\rel}{R}
\newcommand{\relo}{Q}
\newcommand{\rela}{S}
\newcommand{\dep}{\mathsf{dep}}
\newcommand{\Filt}{\mathrm{Ft}}
\newcommand{\amax}{\mathsf{amax}}
\newcommand{\umax}{\mathsf{umax}}
\newcommand{\as}{\mathsf{as}}
\newcommand{\asm}{\mathsf{asm}}
\newcommand{\se}[1]{\mathsf{s}(#1)}
\def\see#1#2{\mathsf{s}_{#1}(#2)}
\let\sqq=\sqsubseteq

\let\sse=\subseteq
\def\vc#1#2{#1 _1\zd #1 _{#2}}
\def\tms#1#2{#1 _1\tm\dots\tm #1 _{#2}}
\newcommand{\zd}{,\ldots,}
\newcommand{\red}[1]{\vrule height7pt depth3pt width.4pt
\lower3pt\hbox{$\scriptstyle #1$}}
\newcommand{\fac}[1]{/\lower2pt\hbox{$\scriptstyle #1$}}
\newcommand{\eqc}[1]{\stackrel{#1}{\eq}}
\def\cl#1#2{\arraycolsep0pt
\left(\begin{array}{c} #1\\ #2 \end{array}\right)}
\def\cll#1#2#3{\arraycolsep0pt \left(\begin{array}{c} #1\\ #2\\
#3 \end{array}\right)}
\newcommand{\pr}{{\mathrm{pr}}}
\newcommand{\maj}{\mathrm{maj}}

\newcommand{\lb}{$\linebreak$}
\let\al=\alpha
\let\vf=\varphi
\let\th=\theta
\let\Dl=\Delta
\newcommand{\eps}{\emptyset}

\title[Graphs of finite algebras II]{Graphs of finite algebras: maximality, rectangularity, and decomposition}
\corrauthor[A. A. Bulatov]{Andrei A.\ Bulatov}
\address{School of Computing Science, Simon Fraser University, Burnaby, Canada}
\urladdr{www.cs.sfu.ca/~abulatov}
\email{abulatov@cs.sfu.ca}
\thanks{This work was supported by an NSERC Discovery grant. }
\subjclass{08A05, 08A40, 08A70}
\keywords{finite algebras, local structure, graph of algebra, constraint satisfaction problem}

\begin{abstract}
In this paper we continue the study of edge-colored graphs associated with
finite idempotent algebras initiated in [A.Bulatov, ``Local structure of 
idempotent algebras I'', CoRR, abs/2006.09599, 2020.]. We prove stronger 
connectivity properties of such 
graphs that will allows us to demonstrate several useful structural features
of subdirect products of idempotent algebras such as rectangularity and 
2-decomposition. 
\end{abstract}

\maketitle

\section{Introduction and Preliminaries}\label{sec:intro}

We continue the study of graphs associated with algebras that was initiated 
in the first part of this paper \cite{Bulatov20:graph} (see also 
\cite{Bulatov16:graph}). 
The vertices of the graph $\cG(\zA)$ associated with an idempotent algebra
$\zA$ are the elements of $\zA$, and the edges are pairs of vertices
that have certain properties with respect to term operations of 
$\zA$, see the definitions below. Two kinds of edges were introduced, 
`thick' and `thin', where the thin version is a more technical kind, perhaps 
less intuitive, but also more suitable as a tool for the results of this part. 
Thin edges are also directed converting $\cG(\zA)$ into a digraph.
In the second part we first focus on the connectivity properties of this 
digraph, in particular, we show that more vertices are connected by directed paths 
of thin edges than one might expect (Theorem~\ref{the:connectivity}). 
Then we study the structure of subdirect products of algebras. 
An important role here is played by so-called as-components of algebras,
which are subsets of algebras defined through certain connectivity properties.
We prove that subdirect products when restricted to as-components 
have the property of rectangularity similar to Mal'tsev algebras, and also
similar to the Absorption Theorem (see, e.g.\ \cite{Barto12:absorbing}). 
Finally, we show that,
again, modulo as-components any subdirect product is 2-decomposable,
similar to Baker-Pixley Theorem \cite{Baker75:chinese-remainder}.

\subsection{Thick edges}

We start with recalling the main definitions and results from 
\cite{Bulatov20:graph}. Operation $x-y+z$ of a module $\zM$ is said 
to be \emph{affine}. Let $\zA$ be a finite algebra with universe 
$A$. Recall that for $B\sse A$ the subalgebra of $\zA$ generated by 
$B$ is denoted $\Sgg\zA B$, or just $\Sg B$ if $\zA$ is clear from the 
context. Edges of $\zA$ are defined as follows.
A pair $ab$ of vertices is an \emph{edge} if and only if
there exists a congruence $\th$ of $\Sg{a,b}$ such that either
$\Sg{a,b}\fac\th$ is a set, or it is term equivalent to the full idempotent
reduct of a module (we will simply say that $\Sg{a,b}\fac\th$ is a module) 
and there is a term operation of $\zA$ such
that $f\fac\th$ is an affine operation of $\Sg{a,b}\fac\th$, or
there exists a term operation of $\zA$ such that $f\fac\th$ is a semilattice 
operation on $\{a\fac\th,b\fac\th\}$, or $f\fac\th$
is a majority operation on $\{a\fac\th,b\fac\th\}$. 

If there exists a congruence and a term operation of $\zA$ such that
$f\fac\th$ is a semilattice operation on $\{a\fac\th,b\fac\th\}$ then $ab$ is
said to have the {\em semilattice} type. Edge $ab$ is of the
{\em majority} type if there are a congruence $\th$ and 
$f\in\Term(\zA)$ such that $f\fac\th$ is a majority
operation on $\{a\fac\th,b\fac\th\}$, but there is no term operation of $\zA$ which is
semilattice on this set. Pair $ab$ has the {\em affine type}
if there are a congruence $\th$ and $f\in\Term(\zA)$
such that $\Sg{a,b}\fac\th$ is a module and $f\fac\th$ is its 
affine operation. Finally, $ab$ is of the
\emph{unary type} if $\Sg{a,b}\fac\th$ is a set. In all cases we say that
congruence $\th$ \emph{witnesses} the type of edge $ab$. The set 
$\{a\fac\th, b\fac\th\}$ will often be referred to as a \emph{thick} edge.

In this paper we assume that $\zA$ does not have edges of the unary type,
which is equivalent to the statement that $\var\zA$, the variety generated by 
$\zA$, omits type \one\ (Theorem 5(2) of \cite{Bulatov20:graph}).
We restate the relevant results from \cite{Bulatov20:graph}.

\begin{theorem}[Theorem~5(2,3) of \cite{Bulatov20:graph}]%
\label{the:connectedness}
Let $\zA$ be an idempotent algebra $\zA$ such that 
$\var\zA$ omits type \one. Then
\begin{itemize}
\item[(1)]
$\zA$ contains no edges of the unary type, and any two elements of 
$\zA$ are connected by a sequence of edges of the semilattice, majority, 
and affine types;
\item[(2)]
$\var\zA$ omits type \two\ if and only if $\zA$
contains no edges of the unary and affine types.
\end{itemize}
\end{theorem}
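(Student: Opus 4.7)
The plan is to invoke the tame congruence theoretic (TCT) classification of finite simple idempotent algebras. Every finite simple idempotent algebra $\zS$ carries one of the five TCT types on a minimal set of the interval $[0_\zS,1_\zS]$, and each non-unary type comes with a polynomial witness: type \two\ yields an affine operation of a module structure, types \three\ and \four\ yield a majority operation on some two-element subset, and type \five\ yields a semilattice operation on such a subset. In the present setting $\zS$ always appears as $\Sg{a,b}\fac\th$ for some $a,b\in\zA$, so polynomial constants can be written as terms in the generators $a$ and $b$; hence every polynomial witness lifts to a term operation of $\zA$. I would first set up this polynomial-to-term translation explicitly, since the definition of edge insists on a term operation of $\zA$ rather than merely a polynomial of the quotient.

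For part (1), fix distinct $a,b\in\zA$ and pick any maximal congruence $\th$ of $\Sg{a,b}$. The quotient $\Sg{a,b}\fac\th$ is simple and idempotent, so by the classification it is of one of the five types, and cannot be of type \one\ since $\var\zA$ omits \one. The witnessing term of $\zA$ produced by the translation above then makes $ab$ itself an edge of semilattice, majority, or affine type according to the type of the quotient; in particular the sequence of edges asserted in the theorem can be taken of length one.

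The forward direction of part (2) is immediate by the same argument: if $\var\zA$ omits both \one\ and \two, no quotient $\Sg{a,b}\fac\th$ is a set or a module, and consequently no edge is unary or affine. The reverse direction is where I expect the main obstacle. Absence of unary edges already gives $\var\zA$ omits \one\ by part (2) of Theorem~5 of the first paper, so the task reduces to deriving omission of type \two\ from absence of affine edges. My plan is contrapositive: assume some $\zB\in\var\zA$ has a type \two\ prime congruence quotient, use $\zB\in\H\S\D(\zA)$ together with the idempotency of $\zA$ to transfer this prime quotient into a two-generated subalgebra of some subpower $\zA^n$ while preserving the type \two\ behaviour, and then project onto a coordinate to obtain $a,b\in\zA$ and a maximal congruence of $\Sg{a,b}$ whose quotient is a module admitting an affine term of $\zA$, contradicting the hypothesis. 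The delicate step, and where I expect the bulk of the work, is controlling the two reductions so that the module structure and its affine witness survive intact; this relies on the stability of TCT type~\two\ in idempotent varieties and on carefully tracking which polynomial operations remain available after passing to subalgebras and projections.
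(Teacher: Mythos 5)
This theorem is imported verbatim from the first part of the series (Theorem~5(2,3) of \cite{Bulatov20:graph}); the present paper offers no proof of it, so your attempt cannot be checked against one here. Judged on its own merits, however, your argument for part (1) has a genuine gap.

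You claim that for any $a,b\in\zA$, picking a maximal congruence $\th$ of $\Sg{a,b}$ and invoking the classification of simple idempotent algebras makes $ab$ itself an edge, so that ``the sequence of edges asserted in the theorem can be taken of length one.'' That is not how the definition of edge works. For the semilattice and majority types, the definition requires the term operation (mod $\th$) to be a semilattice or majority operation \emph{on the specific two-element set} $\{a\fac\th,b\fac\th\}$; for the affine type it requires the \emph{whole} quotient $\Sg{a,b}\fac\th$ to be term equivalent to the full idempotent reduct of a module. But the TCT type of a simple algebra only tells you what the induced structure on a $\langle 0,1\rangle$-trace looks like; it says nothing directly about the pair $\{a\fac\th,b\fac\th\}$, which typically is not a trace, and a simple idempotent algebra of type~\three, \four, or~\five\ need not have its majority/semilattice behaviour visible on any given pair of generators. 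Kearnes' classification (Theorem~\ref{the:simple-idempotent} above) makes this concrete: besides sets and modules, simple idempotent algebras split into those with an absorbing element and those with no skew congruence on the square, and in neither case does one automatically get a thick edge on $\{a\fac\th,b\fac\th\}$. This is precisely why the statement speaks of a \emph{sequence} of edges rather than a single one; the actual proof has to thread a path, e.g., through an absorbing element when one exists. Your part~(1) argument collapses the whole difficulty into a false claim.

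Two further remarks. First, your ``polynomial to term'' translation is more delicate than you indicate: if $p(x,y)=t(x,y,c_1\zd c_k)$ with $c_i=s_i(a\fac\th,b\fac\th)$, replacing each $c_i$ by $s_i(x,y)$ pins down the value at $(a\fac\th,b\fac\th)$ and on the diagonal, but \emph{not} at $(b\fac\th,a\fac\th)$, so one does not immediately get a term that is, say, a commutative semilattice operation on the pair; some additional work (available in the idempotent setting, but work nonetheless) is required. Second, the reverse direction of part~(2) is plausible along the lines you sketch, and the reduction to subalgebras of powers is the right move, but the preservation of type~\two\ through $\H$, $\S$, and coordinate projection is exactly where all the content lies and you leave it entirely unproved; this needs the standard TCT localisation theorems (e.g.\ that for a locally finite variety the type set is determined by finite subalgebras of finite powers) rather than a hand-wave about ``stability of type~\two.''
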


Algebra $\zA$ is said to be \emph{smooth} if for every edge $ab$ of
the semilattice or majority type, $a\fac\th\cup b\fac\th$, where $\th$ is a 
congruence witnessing that $ab$ is an edge, is a subalgebra of $\zA$. 

\begin{theorem}[Theorem~12 of \cite{Bulatov20:graph}]%
\label{the:smooth}
For any idempotent algebra $\zA$ such that $\zA$ does not contain edges 
of the unary type there is a reduct $\zA'$ of $\zA$ that is smooth
and does not contain edges of the unary type. 

Moreover, if $\zA$ does not contain edges of the affine types, $\zA'$
can be chosen such that it does not contain edges of the affine type.
\end{theorem}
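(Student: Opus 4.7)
I would define the reduct $\zA'$ by restricting the clone of $\zA$ to those term operations that preserve every thick semilattice- or majority-type edge as a subset. Let $\cB$ be the collection of subsets $B=a\fac\th\cup b\fac\th$ of $A$, where $ab$ ranges over edges of semilattice or majority type in $\zA$ and $\th$ over the maximal congruences of $\Sg{a,b}$ witnessing those edges. Let $\cC$ consist of every term operation $g$ of $\zA$ with $g(B,\ldots,B)\sse B$ for each $B\in\cB$. Projections clearly lie in $\cC$, and subset-preservation is closed under composition, so $\cC$ is a clone; let $\zA'=(A;\cC)$, which by construction makes each $B\in\cB$ a subuniverse of $\zA'$.

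\textbf{Smoothness and persistence of edges.} The observation driving the argument is that any term operation $f$ witnessing a semilattice or majority edge $ab$ of $\zA$ with congruence $\th$ automatically lies in $\cC$. Indeed, for $x_1,x_2\in B=a\fac\th\cup b\fac\th\sse\Sg{a,b}$ one has $f(x_1,x_2)\in\Sg{a,b}$, and $f(x_1,x_2)\fac\th=(f\fac\th)(x_1\fac\th,x_2\fac\th)\in\{a\fac\th,b\fac\th\}$, since $f\fac\th$ is a semilattice or majority operation on that two-element set; hence $f(x_1,x_2)\in B$. Consequently each semilattice or majority edge of $\zA$ survives in $\zA'$: the witness $f$ stays in $\cC$, and the restriction of $\th$ to $\Sgg{\zA'}{a,b}$ is a congruence, extendable to a maximal one, which keeps $ab$ an edge of the same type. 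Smoothness of $\zA'$ then follows directly from the construction of $\cC$.

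\textbf{The main obstacle.} The remaining, and hardest, task is to show that $\zA'$ acquires no new unary edges (and no new affine edges for the moreover claim). I would invoke Theorem~\ref{the:connectedness}(1): in $\zA$ any two elements $a',b'$ are linked by a chain of semilattice, majority, and affine edges. Semilattice and majority witnesses lie in $\cC$ by the key observation, while affine witnesses may need to be replaced by compositions with iterates of sem/maj witnesses so that the modified term still preserves every $B\in\cB$ and continues to act affinely on the relevant module quotient. Once such a replacement is achieved, $a'$ and $b'$ remain connected in $\zA'$ by non-unary edges, so $a'b'$ itself cannot be a unary edge of $\zA'$. For the moreover part, if $\zA$ has no affine edges then Theorem~\ref{the:connectedness}(2) says $\Var\zA$ omits types \one\ and \two, and the analogous construction using only sem/maj witnesses throughout produces a smooth reduct with no affine edges. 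The delicate technical point is precisely this replacement step: ensuring that witnesses of the right type can always be chosen within $\cC$ without destroying their type-defining behaviour on the appropriate quotient; this is where I expect the real work, most likely via iteration of the sem/maj witnesses to absorb images back into each $B\in\cB$.
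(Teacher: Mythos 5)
Your construction hinges on a claim that does not hold as written: you assert that any term operation $f$ witnessing a semilattice or majority edge $ab$ with congruence $\th$ ``automatically lies in $\cC$.'' But the computation you give only checks that $f$ preserves the \emph{one} set $B=a\fac\th\cup b\fac\th$ associated with the edge it witnesses. Membership in $\cC$ requires preserving \emph{every} $B'\in\cB$, and there is no reason a witness for $ab$ should preserve the thick edge $B'$ coming from an unrelated pair $a'b'$ and congruence $\th'$. (The one tool that would give uniformity across all edges, the operations $f,g,h$ of Theorem~\ref{the:uniform}, is available only \emph{after} smoothness has been established, so invoking it here would be circular.) Without this, you have not shown that $\zA'$ retains any nontrivial operations beyond projections, and the remainder of the argument collapses.

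There are two further problems. First, the reasoning for ``no new unary edges'' is not sound even in outline: whether a pair $a'b'$ is an edge of unary type is a \emph{local} question about the maximal congruences of $\Sgg{\zA'}{a',b'}$ and the structure of the quotient, not a question about connectivity of the whole graph. Showing that $a'$ and $b'$ are joined by a chain of sem/maj/affine edges in $\zA'$ does not prevent $\Sgg{\zA'}{a',b'}$ from having a maximal congruence whose quotient is a set. Moreover, the set of pairs that must be checked is all pairs $a'\ne b'$, not only those that were edges of $\zA$: passing to a reduct shrinks $\Sg{a',b'}$, can change its maximal congruences, and can in principle create set quotients where none existed. Second, you acknowledge that the affine case (``replacing affine witnesses by compositions with iterates of sem/maj witnesses'') is left entirely open; this is not a cosmetic gap but precisely the part of the argument where one must guarantee the reduct has enough operations to avoid type \one. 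In short, the broad strategy of cutting down to operations that respect thick edges is a reasonable instinct, but the claimed verification of the central closure property is incorrect, the unary-edge argument uses the wrong kind of reasoning, and the affine case is not addressed.
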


In the case of smooth algebras the operations involved in the 
definition of edges can be significantly unified.

\begin{theorem}[Theorem~21, Corollary~22 of \cite{Bulatov20:graph}]%
\label{the:uniform}
Let $\cK$ be a finite set of similar smooth idempotent algebras. 
There are term operations $f,g,h$ of $\cK$ such that for every
edge $ab$ of $\zA\in\cK$, where $\th$ is a congruence of $\Sg{a,b}$
witnessing that $ab$ is an edge and $B=\{a\fac\th,b\fac\th\}$
\begin{description}
\item[(i)]
$f\red B$ is a semilattice
operation if $ab$ is a semilattice edge, and it is the first projection if $ab$ is a
majority or affine edge;
\item[(ii)]
$g\red B$ is a majority operation if $ab$ is a majority edge, it is the first
projection if $ab$ is an affine edge, and $g\red B(x,y,z)=f\red B(x,f\red B(y,z))$ 
if $ab$ is semilattice;
\item[(iii)]
$h\red{\Sg{a,b}\fac\th}$ is an affine operation
operation if $ab$ is an affine edge, it is the first
projection if $ab$ is a majority edge, and
$h\red B(x,y,z)=f\red B(x,f\red B(y,z))$ if $ab$ is semilattice.
\end{description}
\end{theorem}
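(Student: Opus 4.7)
The plan is to build $f$, $g$, $h$ by an iterative construction that exploits the finiteness of $\cK$ and the very restricted shape of term operations on each thick edge. Smoothness (Theorem~\ref{the:smooth}) is the key input: for every semilattice or majority edge $ab$ with witness $\th$, the thick edge $B=\{a\fac\th,b\fac\th\}$ is a subalgebra, so any binary idempotent term restricts to one of $\pi_1$, $\pi_2$, $\wedge$, $\vee$ on $B$; moreover on a majority thick edge no semilattice restriction is possible, since otherwise $ab$ would itself be a semilattice edge. For affine edges the relevant computations take place inside the module $\Sg{a,b}\fac\th$ and use its coefficient arithmetic.

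To construct $f$, I would enumerate the semilattice edges $e_1,\dots,e_N$ occurring in some $\zA\in\cK$ together with binary semilattice witnesses $s_1,\dots,s_N$, and set $f^{(0)}(x,y)=x$,
$$f^{(i)}(x,y) = f^{(i-1)}\bigl(s_i(x,y),\, s_i(y,x)\bigr).$$
A four-case analysis of $s_i\red{B_j}$ shows that after step $i$ the operation $f^{(i)}$ restricts to a semilattice on every semilattice thick edge $B_j$ with $j\le i$ (possibly with the two sides swapped, which is harmless), and to $\pi_1$ or $\pi_2$ on each majority thick edge. A final symmetrization
$$f(x,y) = f^{(N)}\bigl(f^{(N)}(x,y),\, f^{(N)}(y,x)\bigr)$$
collapses $\pi_2$ into $\pi_1$ on majority thick edges and preserves semilattice restrictions; on affine edges the parallel module computation (iterated as needed) drives the coefficient of $y$ into the annihilator of $b\fac\th - a\fac\th$, yielding the required first-projection behaviour on $B$.

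Having produced $f$, I would construct $g$ and $h$ starting from $g^{(0)}(x,y,z) = h^{(0)}(x,y,z) = f(x,f(y,z))$, which already realizes the prescribed semilattice-edge identity and reduces to $\pi_1$ on majority and affine edges. Majority witnesses $m_i$ are absorbed by a cyclic symmetrization
$$g^{(i)}(x,y,z) = m_i\bigl(g^{(i-1)}(x,y,z),\, g^{(i-1)}(y,z,x),\, g^{(i-1)}(z,x,y)\bigr),$$
the point being that on any previously treated semilattice or majority edge the three arguments of $m_i$ collapse to a common value (by commutativity of the semilattice and the cyclic invariance of $\maj$ on a two-element subalgebra), so idempotency preserves the earlier behaviour while on the new majority edge $g^{(i)}$ becomes $\maj$. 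A completely analogous procedure with affine witnesses builds $h$, followed by module-level symmetrizations to suppress any residual contributions on non-matching edges.

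The main obstacle, and where the careful bookkeeping is needed, is verifying that each absorption step simultaneously preserves the correct behaviour on every previously handled edge while correctly installing it on the new one. In the semilattice and majority cases this reduces to finite case checks on two-element subalgebras controlled by the rigid classification above; in the affine case it reduces to coefficient arithmetic in $\Sg{a,b}\fac\th$. Smoothness is indispensable throughout, as it turns each restriction to $B$ into a genuine operation on $B$ in the non-affine cases, drastically limiting what can happen at each step of the induction.
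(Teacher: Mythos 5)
Your handling of the semilattice and majority cases is essentially sound: smoothness turns each such thick edge $B$ into a two-element subalgebra, a binary idempotent term restricted to $B$ is one of $\pi_1,\pi_2,\vee,\wedge$, the semilattice option is excluded on majority edges because $ab$ is not of semilattice type, and your fold-and-symmetrize construction of $f$ together with the cyclic absorption for $g$ does propagate correctly through the finitely many two-element cases. (Note also that this theorem is imported here from \cite{Bulatov20:graph}, so there is no proof in the present paper to compare against; what follows is an assessment of the argument on its own terms.)

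The affine case, which is where the real work lies, is not handled. On a thick affine edge $\Sg{a,b}\fac\th$ is a simple module, $B$ is merely a two-element subset of it, and the restriction of a term there is an affine combination $\alpha x+(1-\alpha)y$ (binary) or $\alpha x+\beta y+\gamma z$ with $\alpha+\beta+\gamma=1$ (ternary); nothing in your iteration controls these coefficients. Concretely, your final symmetrization $f(x,y)=f^{(N)}(f^{(N)}(x,y),f^{(N)}(y,x))$ induces $\alpha\mapsto\alpha^2+(1-\alpha)^2$ on the module coefficient, and this map need not reach $1$ modulo the annihilator of $b\fac\th-a\fac\th$: over $\GF(3)$ the value $\alpha=2$ is a fixed point, so if some semilattice witness $s_i$ restricts to $2x-y$ on a module the iteration stalls permanently. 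The ``completely analogous'' construction of $h$ is worse: if $h^{(i-1)}\red{M_i}=\alpha x+\beta y+\gamma z$ and $h_i\red{M_i}=x-y+z$, the cyclic composition yields $(\alpha+\beta-\gamma)x+(-\alpha+\beta+\gamma)y+(\alpha-\beta+\gamma)z$, which equals $x-y+z$ only when $(\alpha,\beta,\gamma)=(1,0,0)$; since absorbing $h_1$ already replaces $h^{(1)}\red{M_2}$ by the uncontrolled $h_1\red{M_2}$, the procedure fails at the very next module. The same problem corrupts $g$ on modules once a majority witness $m_i$ is absorbed. You defer all of this to unspecified ``module-level symmetrizations,'' but exhibit none, and the arithmetic above shows the obvious candidates do not work. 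The module case in fact requires a genuinely different mechanism and a careful ordering of the constructions (for instance, a Mal'tsev-on-modules $h$ can be used to repair $f$ and $g$, but then $h$ must be built first and cannot itself be bootstrapped from $f$ as you propose), and this is precisely the content that your proposal leaves out.
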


Operations $f,g,h$ from Theorem~\ref{the:uniform} above can be chosen 
to satisfy certain identities.

\begin{lemma}[Lemma~23 of \cite{Bulatov20:graph}]\label{lem:fgh-identities}
Operations $f,g,h$ identified in Theorem~\ref{the:uniform} can be chosen 
such that
\begin{itemize}\itemsep0pt
\item[(1)]
$f(x,f(x,y))=f(x,y)$ for all $x,y\in\zA\in\cK$;
\item[(2)]
$g(x,g(x,y,y),g(x,y,y))=g(x,y,y)$ for all $x,y\in\zA\in\cK$;
\item[(3)]
$h(h(x,y,y),y,y)=h(x,y,y)$ for all $x,y\in\zA\in\cK$.
\end{itemize}
\end{lemma}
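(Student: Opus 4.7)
The plan is to replace each of $f,g,h$ by a suitably iterated version that enforces the desired idempotency-style identity while leaving its restriction to every thick edge essentially unchanged. The underlying observation is standard: for any finite algebra $\zA$ and any unary polynomial $p\colon\zA\to\zA$, the sequence $p,p^{(2)},p^{(3)},\dots$ is eventually periodic, so there exists $N$ with $p^{(2N)}=p^{(N)}$, that is, $p^{(N)}$ is idempotent as a map. Since $\cK$ is a finite collection of finite algebras, a single uniform $N$ (for instance $\max_{\zA\in\cK}|\zA|!$) works simultaneously for every unary polynomial arising below.

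For (1), I would take $f'(x,y):=f^{(N)}(x,y)$ with $f^{(k+1)}(x,y)=f(x,f^{(k)}(x,y))$; applied to $u\mapsto f(x,u)$ the choice of $N$ yields $f'(x,f'(x,y))=f^{(2N)}(x,y)=f^{(N)}(x,y)$. The restriction $f\red B$ to a thick edge is unchanged, because the semilattice operation is idempotent and the first projection is invariant under any iteration. For (2), with $\psi_x(u):=g(x,u,u)$, I would set
\[
g'(x,y,z):=g\bigl(x,\psi_x^{(N-1)}(y),\psi_x^{(N-1)}(z)\bigr),
\]
so that $g'(x,y,y)=\psi_x^{(N)}(y)$ and $g'(x,g'(x,y,y),g'(x,y,y))=\psi_x^{(2N)}(y)=\psi_x^{(N)}(y)$. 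The restriction of $g'$ to each thick edge is again preserved: $\psi_x$ is the identity on a majority edge, is constantly $x$ on an affine edge (so $g'(x,y,z)=g(x,x,x)=x$), and equals $u\mapsto x\vee u$ on a semilattice edge, after which $g'\red B$ still computes $x\vee y\vee z$.

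The delicate identity is (3), because naively iterating $h$ in its second or third argument destroys the affine structure modulo $\th$: iterating $x-y+z$ in $y$ produces $x-ky+kz$, which represents the module's Mal'tsev operation only when $k\equiv 1$ modulo the exponent of the module, and one cannot in general arrange that simultaneously with the stabilisation condition $p^{(2N)}=p^{(N)}$. The resolution is to iterate only in the first argument, via the unary polynomial $\phi_y(u):=h(u,y,y)$, and to set
\[
h'(x,y,z):=h\bigl(\phi_y^{(N-1)}(x),y,z\bigr).
\]
On an affine edge $\phi_y(u)\equiv u\pmod\th$, so every iterate of $\phi_y$ acts as the identity on $\Sg{a,b}\fac\th$ and $h'\fac\th$ remains $x-y+z$; on a majority edge $\phi_y\red B$ is the identity, and on a semilattice edge $\phi_y(u)=u\vee y$ is already idempotent, so clause~(iii) of Theorem~\ref{the:uniform} is preserved in all three cases. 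Finally $h'(x,y,y)=\phi_y^{(N)}(x)$ and $h'(h'(x,y,y),y,y)=\phi_y^{(2N)}(x)=\phi_y^{(N)}(x)$, which is~(3). The main obstacle throughout is precisely this placement of the iteration in the affine case; once it is pinned down, the three parts follow uniformly from the same choice of $N$.
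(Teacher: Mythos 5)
Your argument is correct and is the standard iteration-to-idempotence proof that the cited paper uses for Lemma~23: replace each of $f,g,h$ by the $N$-th iterate of the naturally associated unary polynomial (namely $u\mapsto f(x,u)$, $u\mapsto g(x,u,u)$, $u\mapsto h(u,y,y)$), where $N$ is a single exponent with $p^{(2N)}=p^{(N)}$ for every unary polynomial $p$ on every algebra in the finite class $\cK$, and then check that the restrictions to thick edges are unaffected. One small remark: the ``delicate'' placement of the iteration in~(3) is not really a choice to be made -- the identity $h(h(x,y,y),y,y)=h(x,y,y)$ can only be read as stabilisation of $u\mapsto h(u,y,y)$, so the substantive observation is just the one you record, that this polynomial acts as the identity modulo $\th$ on affine edges and hence iterating it does not disturb clause~(iii) of Theorem~\ref{the:uniform}.
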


\begin{prop}[Lemmas~8,10 of \cite{Bulatov20:graph}]\label{pro:edge-extend}
Let $\zA$ be an idempotent algebra. 
Then
\begin{itemize}
\item[(1)] 
if $ab$ is an edge in $\zA$, it is an edge of the same type in any 
subalgebra $\zB$ of $\zA$ containing $a,b$;
\item[(2)] 
if $\al$ is a congruence of $\zA$ and $a\fac\al b\fac\al$, $a,b\in\zA$,
 is an edge in $\cG(\zA\fac\al)$ then $ab$ is also an edge in $\cG(\zA)$ 
 of the same type.
 \end{itemize}
\end{prop}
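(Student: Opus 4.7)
The plan is to transport the witnessing data $(\th,f)$ for an edge through, respectively, inclusion into a subalgebra (Part~(1)) and projection to a quotient (Part~(2)).

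\emph{Part (1).} The key observation is that if $\zB$ is a subalgebra of $\zA$ containing $a$ and $b$, then $\Sgg{\zB}{a,b}=\Sgg{\zA}{a,b}$: both are the smallest subset of $A$ containing $a,b$ and closed under the (common) basic operations. Consequently the two algebras admit the same congruences on $\Sg{a,b}$, and every term operation of $\zA$ restricts to a term operation of $\zB$ on this set (conversely, every term of $\zB$ is the restriction of a term of $\zA$). A witnessing pair $(\th,f)$ for the type of $ab$ in $\zA$ therefore remains a witnessing pair in $\zB$. The converse (that $\zB$ could produce a \emph{stronger} witness than $\zA$ does) is immediately excluded by the same restriction--extension correspondence, so the type is preserved exactly.

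\emph{Part (2).} Let $\pi\colon\zA\to\zA\fac\al$ denote the canonical surjection and set $\al^*=\al\cap\Sgg{\zA}{a,b}^2$. The restriction of $\pi$ is a surjective homomorphism $\Sgg{\zA}{a,b}\to\Sgg{\zA\fac\al}{a\fac\al,b\fac\al}$ with kernel $\al^*$. By the correspondence theorem, congruences $\bar\th$ of $\Sgg{\zA\fac\al}{a\fac\al,b\fac\al}$ are in bijection with congruences $\th^*$ of $\Sgg{\zA}{a,b}$ containing $\al^*$, and this bijection preserves maximality. Given a witnessing congruence $\bar\th$ of the edge $a\fac\al b\fac\al$, pull it back to such a $\th^*$; given a witnessing term $\bar f$ of $\zA\fac\al$, lift it to a term $f$ of $\zA$ (every term of the quotient is the image of a term of the original algebra). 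Under the natural isomorphism
\[
\Sgg{\zA}{a,b}\fac{\th^*}\ \cong\ \Sgg{\zA\fac\al}{a\fac\al,b\fac\al}\fac{\bar\th},
\]
the set $\{a\fac{\th^*},b\fac{\th^*}\}$ is identified with $\{a\fac{\bar\th},b\fac{\bar\th}\}$ and $f\fac{\th^*}$ with $\bar f\fac{\bar\th}$. Whichever of the semilattice, majority, or affine-with-module properties was enjoyed by $(\bar\th,\bar f)$ is therefore automatically enjoyed by $(\th^*,f)$, producing an edge of the same named type in $\cG(\zA)$.

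The main obstacle is the priority classification of types (semilattice $>$ majority $>$ affine $>$ unary): to assert that the type is preserved exactly, one must also show $ab$ admits no strictly stronger witness in $\zA$ than $a\fac\al b\fac\al$ does in $\zA\fac\al$. This is handled by running the correspondence in reverse: any hypothetical stronger witness $(\th',f')$ in $\zA$ with $\al^*\sse\th'$ descends to a stronger witness in the quotient, contradicting the assumption. Ruling out the residual case $\al^*\not\sse\th'$, where maximality of $\th'$ forces $\th'\vee\al^*=\nabla$ on $\Sgg{\zA}{a,b}$ so that $a$ and $b$ get identified through $\th'$ and $\al^*$ despite $(a,b)\notin\al^*$, is the delicate technical step and is where I expect the real work to lie; it should follow from a short analysis exploiting that the 2-element algebra $\{a\fac{\th'},b\fac{\th'}\}$ carries a specific semilattice/majority/affine structure whose operational behaviour is incompatible with the collapse prescribed by $\th'\vee\al^*$.
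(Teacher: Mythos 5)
Both halves take the right approach. Part~(1) is correct as written: since $\zB$ is a subalgebra of $\zA$ in the same signature, $\Sgg\zB{a,b}=\Sgg\zA{a,b}$ with identical induced term operations, so the witnessing pair $(\th,f)$ transfers unchanged in both directions. The correspondence-theorem lifting in Part~(2) is also right: pulling $\bar\th$ back to a maximal $\th^*\supseteq\al^*$ and interpreting the (syntactically identical) term in $\zA$ gives a witness of the same type under the induced isomorphism $\Sgg\zA{a,b}\fac{\th^*}\cong\Sgg{\zA\fac\al}{a\fac\al,b\fac\al}\fac{\bar\th}$.

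Where you go astray is the final paragraph: the residual case $\al^*\not\sse\th'$ cannot be ruled out, because a pair $ab$ can genuinely acquire a semilattice witness that $a\fac\al\,b\fac\al$ lacks. For instance, let $\zS$ and $\zM$ both have universe $\{0,1\}$ and signature one binary plus one ternary symbol, interpreted in $\zS$ as $\min$ and first projection, and in $\zM$ as first projection and majority; set $\zA=\zS\tm\zM$, $a=(1,1)$, $b=(0,0)$, and take $\al$ to be the kernel of the second coordinate projection. Then $\zA\fac\al\cong\zM$, where the only edge witness for $a\fac\al\,b\fac\al$ is a majority one (the clone of $\zM$ contains no binary operation besides the two projections), whereas in $\zA=\Sg{a,b}$ the kernel of the first coordinate projection witnesses $ab$ as a semilattice edge. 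So ``of the same type'' must be read as ``admits a witnessing congruence of the same type,'' which is the sense in which this paper actually uses the proposition (compare the proof of Lemma~\ref{lem:quotient-edge}, which carries out the thin-edge analogue in exactly that form, simply exhibiting a matching witness). With that reading your lifting argument alone already finishes Part~(2); the concluding paragraph of the proposal should be dropped rather than completed.
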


Note that, as Example~9 from \cite{Bulatov20:graph} shows that
if $ab$ is an edge in $\zA$ and $\al\in\Con(\zA)$, then $a\fac\al b\fac\al$
does not have to be an edge even if $a\fac\al\ne b\fac\al$.

\subsection{Thin edges}

Thin edges, also introduced in \cite{Bulatov20:graph}, offer a better 
technical tool. Here we generalize this concept to algebras that are not 
necessarily smooth.

Let $\cK$ be a finite class of smooth idempotent algebras closed under 
taking subalgebras and homomorphic images and $\cV$ the class of finite 
algebras from the variety it generates, that is, the pseudovariety generated 
by $\cK$. We will slightly abuse the terminology and call $\cV$ the variety 
generated by $\cK$.  If we are interested in a particular algebra $\zA$, set 
$\cK=\H\S(\zA)$. Fix operations $f,g,h$ satisfying the conditions of 
Theorem~\ref{the:uniform} and Lemma~\ref{lem:fgh-identities}. 
For $\zA\in\cK$ and $a,b\in\zA$, the pair $ab$ is called a 
\emph{thin semilattice edge} if the equality relation
witnesses that it is a semilattice edge; or in other words if $f(a,b)=f(b,a)=b$.
The binary operation $f$ from Theorem~\ref{the:uniform} can be 
chosen to satisfy a special property. 

\begin{prop}[Proposition~24, \cite{Bulatov20:graph}]\label{pro:good-operation}
Let $\cK$ be a finite class of similar smooth idempotent algebras. There 
is a binary term operation $f$ of $\cK$ such that $f$ is a semilattice 
operation on every thick semilattice edge of every $\zA\in\cK$ and for 
any $a,b\in\zA$, $\zA\in\cK$, either $a=f(a,b)$ or the pair $(a,f(a,b))$ 
is a thin semilattice edge.
\end{prop}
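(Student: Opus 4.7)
The plan is to start from the operation $f_0$ given by Theorem~\ref{the:uniform} and chosen to satisfy the identity $f_0(x,f_0(x,y))=f_0(x,y)$ of Lemma~\ref{lem:fgh-identities}(1). Setting $c=f_0(a,b)$, this identity yields $f_0(a,c)=c$ immediately, which is already one half of the thin semilattice condition at the pair $(a,c)$. What remains to produce, whenever $c\ne a$, is a term operation $t$ witnessing $t(c,a)=t(a,c)=c$. The structural observation driving the construction is that when $c=f_0(a,b)\ne a$, the pair $(a,c)$ sits inside a thick semilattice edge of $\Sg{a,c}$: the congruence that collapses $\Sg{a,c}$ to the two-element quotient on which $f_0$ acts as join witnesses such an edge. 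Lemma~\ref{lem:thin-semilattice} then guarantees, inside the same configuration, the existence of an element $c^\star$ that is a thin semilattice partner of $a$.

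My plan is then to modify $f_0$ by a uniform composition so that its output on $(a,b)$ lands on such a $c^\star$ rather than on an arbitrary element of the thick-edge block. The natural candidate is an iterated symmetrization such as $f_1(x,y):=f_0(x,f_0(y,x))$, which still collapses to the join on any thick semilattice edge since $x\vee(y\vee x)=x\vee y$; in particular, the preservation of semilattice behavior on thick edges is automatic along the entire iteration. By finiteness of each $\zA\in\cK$ and of $\cK$ itself, the sequence of iterates $f_0,f_1,f_2,\ldots$ must stabilize after a bounded number of steps $N$ that depends only on $\cK$. I would then take $f:=f_N$ and extract the witnessing term $t$ for each pair $(a,c)$ directly from the explicit compositional form of $f$, so that $t$ is itself a term operation of $\cK$ built from $f_0$.

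The main obstacle is uniformity: a single term must simultaneously handle every pair $(a,b)$ in every algebra of $\cK$, and iterating naively risks undoing progress on pairs that were already good. To control this, I plan to attach a well-founded measure to the set of bad pairs — for instance, lexicographic in the sizes of generated subalgebras $\Sg{a,b}$ and in the depth of the $\th$-block witnessing the containing thick semilattice edge — and to verify that the symmetrization step is monotone with respect to this measure across all members of $\cK$ at once. The key technical ingredient throughout is Lemma~\ref{lem:thin-semilattice}, which supplies the desired thin semilattice partners locally; the heart of the argument will be to assemble them into a single globally valid term, and that combinatorial assembly is where the real work of the proof lies.
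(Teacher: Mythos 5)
Your central structural claim --- that whenever $c=f_0(a,b)\ne a$ there is a congruence collapsing $\Sg{a,c}$ to a two-element quotient on which $f_0$ acts as join --- is unjustified, and it is essentially what the proposition asks you to prove. The absorption identity does give $f_0(a,c)=c$, and this rules out $\{a\fac\th,c\fac\th\}$ being a thick majority or affine edge for any maximal congruence $\th$ of $\Sg{a,c}$: by Theorem~\ref{the:uniform}(i), $f_0$ would then be the first projection on that quotient, contradicting $f_0(a,c)=c$. But this does \emph{not} establish that $ac$ is an edge at all. A simple quotient $\Sg{a,c}\fac\th$ may fall under cases (b) or (c) of Theorem~\ref{the:simple-idempotent} (absorbing element, or no skew congruence on the square) with no term operation acting as a semilattice operation on the pair $\{a\fac\th,c\fac\th\}$; and $f_0(a,c)=c$ is only one of the two identities needed, while nothing you have said controls $f_0(c,a)$. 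So the ``two-element quotient on which $f_0$ acts as join'' whose existence you invoke is not available, and Lemma~\ref{lem:thin-semilattice} therefore cannot be applied in the way your outline requires.

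The proposed iteration is also not secured. The symmetrization $f_1(x,y):=f_0(x,f_0(y,x))$ does preserve semilattice behaviour on thick semilattice edges and first-projection behaviour on thick majority and affine edges, but you have not checked, and a direct computation does not yield, that the absorption identity $f_1(x,f_1(x,y))=f_1(x,y)$ is preserved; nor is there any argument that some iterate $f_N(a,b)$ ever lands on a thin semilattice partner $c^\star$ of $a$, even granting the local existence of such a $c^\star$. Finally, you state explicitly that the ``combinatorial assembly is where the real work of the proof lies,'' which is an accurate self-assessment: the heart of the statement, namely producing a single term that simultaneously handles all pairs in all algebras of $\cK$ while not disturbing previously repaired pairs, is left completely open. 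As written this is a sketch of an intended strategy, with the pivotal step asserted rather than proven, not a proof.
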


We assume that operation $f$ satisfying the conditions of 
Proposition~\ref{pro:good-operation} is fixed, and use $\cdot$ to
denote it (think multiplication). If $ab$ is a thin semilattice edge,
that is, $a\cdot b=b\cdot a=b$, we write $a\le b$.

Let $\zA\in\cV$, $a,b\in\zA$, $\zB=\Sg{a,b}$, and $\th$ a congruence 
of $\zB$. Pair $ab$ is said to be 
\emph{minimal} with respect to $\th$ if for any $b'\in b\fac\th$, 
$b\in\Sg{a,b'}$. A ternary term $g'$ is said to satisfy the \emph{majority
condition} with respect to $\cK$ if it satisfies the conditions of 
Lemma~\ref{lem:fgh-identities}(2), and $g'$ is a majority operation on every 
thick majority edge of every algebra from $\cK$. A ternary term operation 
$h'$ is said to satisfy the \emph{minority condition} if it satisfies the conditions 
of Lemma~\ref{lem:fgh-identities}(3), and $h'$ is a Mal'tsev 
operation on every thick affine edge of every algebra from $\cK$. By 
Theorem~\ref{the:uniform} and Lemma~\ref{lem:fgh-identities} operations 
satisfying the majority and minority conditions exist.

Let $\zA\in\cV$ and $a,b\in\zA$. The pair $ab$ is a \emph{thin semilattice edge} 
if the term $\cdot$ of $\cV$ is a semilattice operation on $\{a,b\}$ and $ab=b$. It is 
said to be a \emph{thin majority edge} if 
\begin{itemize}
\item[(*)] 
for any term operation $g'$ satisfying the majority condition with respect to $\cK$,
the subalgebras $\Sg{a,g'(a,b,b)},\Sg{a,g'(b,a,b)},\Sg{a,g'(b,b,a)}$
contain~$b$.
\end{itemize}
If in addition to the condition above $ab$ is also a majority edge, a congruence
$\th$ witnesses that, and $ab$ is a minimal pair with respect to $\th$, we 
say that $ab$ is a \emph{special majority edge}.
The pair $ab$ is called a \emph{thin affine edge} if for any term operation $h'$ 
satisfying the minority condition \emph{with respect to $\cK$}
\begin{itemize}
\item[(**)] 
$h(b,a,a)=b$ and $b\in\Sg{a,h'(a,a,b)}$.
\end{itemize}

The operations $g,h$ from Theorem~\ref{the:uniform} do not 
have to satisfy any specific conditions on the set $\{a,b\}$, when
$ab$ is a thin majority or affine edge, except what follows from their 
definition. Also, both thin majority and thin affine edges are 
directed, since $a,b$ in the definition occur asymmetrically. Note also, that 
which pairs of an algebra $\zA$ are thin majority and affine 
edges depends not only on the algebra itself, but also on the underlying
class $\cK$ and the choice of operations $f,g,h$. 

It was shown in \cite{Bulatov20:graph} that in smooth algebras every 
edge has a thin edge associated with it. 

\begin{lemma}[Corollaries~25,29,33 Lemmas~28,32, \cite{Bulatov20:graph}]%
\label{lem:thin-semilattice}
Let $\zA\in\cK$ and let $ab$ be a semilattice (majority, affine) edge, 
$\th$ a congruence of
$\Sg{a,b}$ that witnesses this, and $c\in a\fac\th$. If $ab$ is 
a semilattice or majority edge, then for any $d\in b\fac\th$ such that 
$cd$ is a minimal pair with respect to $\th$ the pair $cd$ is a thin 
semilattice or special majority edge. If $ab$ is affine then for any 
$d\in b\fac\th$ such that $ad$ is a minimal pair with respect to 
$\th$ and $h(d,a,a)=d$ the pair $ad$ is a thin affine edge. Moreover,
$d\in b\fac\th$ satisfying these conditions exists.
\end{lemma}

First of all we observe that Proposition~\ref{pro:good-operation} generalizes to 
algebras from $\cV$. To prove Proposition~\ref{pro:very-good-operation} 
below it suffices
to observe that the property of the operation $\cdot$ stated in 
Proposition~\ref{pro:good-operation} can be expressed as identities
\[
x\cdot(x\cdot y)=(x\cdot y)\cdot x=x\cdot y.
\]

\begin{prop}\label{pro:very-good-operation}
For every $\zA\in\cV$ and for any $a,b\in\zA$ either $a=a\cdot b$ or 
the pair $(a,a\cdot b)$ is a thin semilattice edge.
\end{prop}

We will need statements similar to Lemmas~31, 35 from \cite{Bulatov20:graph} for 
algebras in $\cV$. 

\begin{lemma}\label{lem:thin-combination-variety}
\begin{itemize}
\item[(1)] 
Let $ab$ and $cd$ be thin affine edges in $\zA_1,\zA_2\in\cV$. Then 
there is an operation $h'$ such that $h'(b,a,a)=b$ and 
$h'(c,c,d)=d$. In particular, for any thin affine edge $ab$ there 
is an operation $h'$ such that $h'(b,a,a)=h'(a,a,b)=b$.
\item[(2)] 
Let $ab$ and $cd$ be thin edges in $\zA_1,\zA_2\in\cV$.
If they have different types there is a binary term operation $p$ such that
$p(b,a)=b$, $p(c,d)=d$. 
\end{itemize}
\end{lemma}

\begin{proof}
(1) Let $\rel$ be the subalgebra of $\zA_1\tm\zA_2$ generated by 
$(b,c),(a,c),(a,d)$. By the definition of thin affine edges,
\[
\cl b{d'}=h\left(\cl bc,\cl ac,\cl ad\right)\in\rel,
\]
where $h$ is the operation fixed in the beginning of the section.
Then as $d\in\Sgg{\zA_2}{c,h(c,c,d)}$, there is a term operation 
$r(x,y)$ such that $d=r(c,d')$. Therefore 
\[
\cl bd=r\left(\cl bc,\cl b{d'}\right)\in\rel.
\] 
The result follows.

(2) Let $\rel$ be the subalgebra of $\zA_1\tm\zA_2$ generated by 
$(b,c),(a,d)$. 

If $ab$ is majority and $c\le d$, let $g'(x,y,z)=g(x,y\cdot x, z\cdot x)$.
Since $x\cdot y=x$ on every (thick) majority edge of every algebra from 
$\cK$, the operation $g'$ is a majority operation on every thick majority edge
of any algebra from $\cK$.   We use the construction from the 
proof of Lemma~\ref{lem:fgh-identities}(2) (Lemma~16(2) of 
\cite{Bulatov20:graph}). Consider the
unary operation $g_x(y)=g'(x,y,y)$. Clearly, for some $n$, the operation 
$g_x^n$ is idempotent  for every $\zB\in\cK$ and $x\in\zB$. Set 
$g''(x,y,z)=g_x^{n-1}(g'(x,y,z))$. Then it is not hard to see that $g''$ is a 
majority operation on every thick majority edge and that 
$g''(x,g''(x,y,y), g''(x,y,y))=g''(x,y,y)$ is an identity in $\cV$ (see the 
proof of Lemma~16(2) of \cite{Bulatov20:graph} for details). Also, as 
$g'(d,c,c)=g(d,cd,cd)=d$, we have $g''(d,c,c)=d$.

Therefore the operation $g''$ satisfies the majority condition. 
By the definition of thin majority edges there is a binary term operation $r$ 
such that $b=r(a,g''(a,b,b))$. Then
\[
\cl bd=
r\left(\cl ad,\cl{g''(a,b,b)}d\right)=
r\left(\cl ad,g''\left(\cl ad,\cl bc,\cl bc\right)\right)\in\rel.
\]
Therefore $p(x,y)=r(y,g''(y,x,x))$ satisfies the conditions required.

If $ab$ is affine and $c\le d$, let $h'(x,y,z)=h(x,y\cdot x,z\cdot x)$.
Since $x\cdot y=x$ on every (thick) affine edge of every algebra from 
$\cK$, the operation $h'$ is a Mal'tsev operation on every thick affine 
edge of every $\zB\in\cK$. We use the construction from the 
proof of Lemma~\ref{lem:fgh-identities}(3) (Lemma~23(3) of 
\cite{Bulatov20:graph}). Let $h_y(x)=h'(x,y,y)$. Clearly, for some $n$, 
the operation $h_y^n$ is idempotent  for every $\zB\in\cK$ and $y\in\zB$. 
Let $h'_0(x,y,z)=h'(x,y,z)$ and 
$h'_{i+1}(x,y,z)=h'_i(h(x,y,y),y,z)$ for $i\ge0$. Then $h'_i(x,y,y)=h_y^i(x)$. 
Hence $h'_n(h'_n(x,y,y),y,y)=h'_n(x,y,y)$ is an identity in $\cV$. It is not hard 
to see that $h''(x,y,z)=h'_n(x,y,z)$ is a Mal'tsev operation on every thick affine 
edge of every $\zB\in\cK$. Also, as $h'(d,d,c)=h(d,dd,cd)=d$, we have
$h''(d,d,c)=d$.

Thus, $h''$ satisfies the minority condition. Hence
by the definition of thin affine edges there is a binary term operation $r$ 
such that $b=r(a,h''(a,a,b))$. Then
\[
\cl bd=
r\left(\cl ad,\cl{h''(a,a,b)}d\right)
=r\left(\cl ad,h''\left(\cl ad,\cl ad,\cl bc\right)\right)\in\rel.
\] 
Therefore $p(x,y)=r(y,h''(y,y,x))$ satisfies the conditions required.

If $ab$ is affine and $cd$ is majority, then set 
\[
\cl {b'}{d'}=h\left(\cl ad,\cl ad,\cl bc\right)\in\rel.
\]
By the definition of thin affine edges there is a binary term operation $r$ 
such that $b=r(a,h(a,a,b))$. Consider the operation
\[
g'(z,y,x)=g(r(z,h(z,y,x)),r(y,h(y,z,x)),x).
\]
It satisfies the following condition.
\begin{align*}
g'(b,a,a) & =g(r(a,h(a,a,b)),r(a,h(a,a,b)),b)\\
& =g(r(a,b'),r(a,b'),b)=g(b,b,b)=b.
\end{align*}
Also, on any thick majority edge $\{c'\fac\th,d'\fac\th\}$ of an algebra $\zB\in\cK$, 
where $\th$ witnesses that $c'd'$ is a majority edge, $h(x,y,z)=x$, therefore
\begin{align*}
g'(x,y,z) & =g(r(z,h(z,y,x)),r(y,h(y,z,x)),x)\\
& =g(r(z,z),r(y,y),x)=g(z,y,x)
\end{align*}
on $\Sgg\zB{c',d'}\fac\th$. 

Next, we use the argument from the beginning of item (2). 
Consider the unary operation $g_x(y)=g'(x,y,y)$. Let $n$ be such that 
the operation $g_y^n$ is idempotent  for every $\zB\in\cK$ and $y\in\zB$, and set 
$g''(x,y,z)=g_x^{n-1}(g'(x,y,z))$. Then $g''$ is a 
majority operation on every thick majority edge of any $\zB\in\cK$, and  
$g''(x,g''(x,y,y), g''(x,y,y))=g''(x,y,y)$ is an identity in $\cV$. Also, as 
$g'(b,a,a)=b=g_b(a)=g_b(b)$, we have $g_b^n(a)=b$, and therefore $g''(b,a,a)=b$.

Therefore the operation $g''$ satisfies the majority condition, and since $cd$ is 
a thin majority edge, there exists a binary term operation $s$ such that 
$s(c,g''(c,d,d))=d$.
Thus,
\[
\cl bd=s\left(\cl bc,\cl b{g''(c,d,d)}\right)=
s\left(\cl bc,g''\left(\cl bc,\cl ad,\cl ad\right)\right)\in\rel.
\]
The result follows.
\end{proof}

\subsection{Paths and filters in smooth algebras}

Let $\zA\in\cK$ be a smooth algebra. A \emph{path} in $\zA$ is a sequence 
$a_0,a_1\zd a_k$ such 
that $a_{i-1}a_i$ is a thin edge for all $i\in[k]$ (note that thin edges 
are always assumed to be directed).
We will distinguish paths of several types depending on what types of
edges are allowed. If $a_{i-1}\le a_i$ for all $i\in[k]$ then the path is
called a \emph{semilattice} or \emph{s-path}. If for every $i\in[k]$
either $a_{i-1}\le a_i$ or $a_{i-1}a_i$ is a thin
affine edge then the path is called \emph{affine-semilattice} or
\emph{as-path}. 
The path is called \emph{asm-path} when all types of edges are allowed. If 
there is a path $a=a_0,a_1\zd a_k=b$ which is arbitrary (semilattice, 
affine-semilattice) then $a$ is said to be
\emph{asm-connected} (or \emph{s-connected}, or 
\emph{as-connected}) to $b$. We will also 
say that $a$ is \emph{connected} to $b$ if it is asm-connected. 
We denote this by $a\sqq^{asm}b$
(for asm-connectivity), $a\sqq b$, and $a\sqq^{as}b$ for s-, and as-connectivity, respectively. 
If all the thin majority edges in an asm-path are special,
we call such path special. The following is a direct implication of 
Theorem~\ref{the:connectedness} and Lemma~\ref{lem:thin-semilattice}.

\begin{corollary}\label{cor:oriented}
Any two elements of a smooth algebra $\zA\in\cK$ are connected by an 
oriented path consisting of thin edges (i.e.\ a path in which some edges can be traversed backwards).
\end{corollary}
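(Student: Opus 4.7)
The plan is to argue by strong induction on $|\zA|$. The base case $|\zA|=1$ is vacuous. For the inductive step I first invoke Theorem~\ref{the:connectedness}(1) to obtain a sequence $a=a_0,a_1\zd a_k=b$ in which every consecutive pair $a_{i-1}a_i$ is a thick edge of semilattice, majority, or affine type. Since oriented thin-edge paths concatenate, it suffices to replace each such thick edge by an oriented thin-edge path from $a_{i-1}$ to $a_i$.

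Fix one thick edge $uv$, witnessed by a maximal congruence $\th$ of $\Sg{u,v}$. Applying Lemma~\ref{lem:thin-semilattice} with $c=u$ yields some $d\in v\fac\th$ such that $ud$ is a thin edge of the same type as $uv$, providing the first oriented step $u\to d$. To reach $v$ from $d$ I work inside $B:=v\fac\th$, which is a subalgebra of $\Sg{u,v}$ because $\zA$ is idempotent, and which satisfies $|B|<|\Sg{u,v}|\le|\zA|$ since $u\notin B$. As $\cK$ is closed under subalgebras, $B\in\cK$ is smooth; and since $\var B\sse\var\zA$ still omits type \one, $B$ contains no edges of unary type. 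The induction hypothesis applied inside $B$ then produces an oriented thin-edge path from $d$ to $v$.

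The remaining point is that a thin edge of $B$ is automatically a thin edge of $\zA$. The defining identities of a thin semilattice edge, $f(x,y)=f(y,x)=y$ for the fixed binary operation $f$ of $\cK$, persist verbatim from $B$ to $\zA$; for thin majority and thin affine edges, each of the subalgebras $\Sg{x,g'(x,y,y)}$, $\Sg{x,g'(y,x,y)}$, $\Sg{x,g'(y,y,x)}$, $\Sg{x,h'(x,x,y)}$ appearing in their definitions is generated by elements already lying in $B$ and hence coincides whether computed in $B$ or in $\zA$, while the operations $g',h'$ satisfying the majority and minority conditions with respect to $\cK$ are the same term operations in both algebras. Concatenating the step $u\to d$ with the inductively supplied path from $d$ to $v$ gives an oriented thin-edge path from $u$ to $v$, and doing this for every edge of the original sequence finishes the proof. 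The main obstacle is the bookkeeping that ensures the ``remainder'' $v\fac\th$ after the single thin step $u\to d$ really sits inside a proper subalgebra meeting all the hypotheses needed to recurse; modulo that, the statement is a routine combination of Theorem~\ref{the:connectedness}(1) and Lemma~\ref{lem:thin-semilattice}.
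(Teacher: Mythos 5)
Your proof is essentially correct, and it fills in exactly the recursion that the paper compresses into the phrase ``direct implication of Theorem~\ref{the:connectedness} and Lemma~\ref{lem:thin-semilattice}'': indeed, Lemma~\ref{lem:thin-semilattice} only delivers a thin edge from $u$ to \emph{some} $d\in v\fac\th$, not to $v$ itself, so one must recurse on the proper subalgebra $v\fac\th$ to bridge $d$ to $v$, and your observation that thin edges of the subalgebra $B\in\cK$ are automatically thin edges of $\zA$ (the generated subalgebras and the operations witnessing the majority/minority conditions are intrinsic to $\cK$) is precisely the right justification. The one spot that needs a small fix is the semilattice case: Lemma~\ref{lem:thin-semilattice} produces a thin semilattice edge $cd$ only when $c$ sits on the lower side of the witnessing congruence, i.e.\ when $u\fac\th\cdot v\fac\th=v\fac\th$ in $\Sg{u,v}\fac\th$. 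If the thick edge $uv$ is semilattice but $u$ is the upper class, you cannot apply the lemma with $c=u$; you must instead start from the $v$ side, obtaining $v\to d$ with $d\in u\fac\th$, and recurse inside $u\fac\th$ to join $u$ to $d$. This is harmless because the corollary asserts only undirected connectivity of the thin-edge digraph (a directed version would force every element to be u-maximal, contradicting the later material on $\umax$), but as written your ``first oriented step $u\to d$'' is not always available. With that orientation bookkeeping added, the argument is the paper's intended one.
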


Let $\cG_s(\zA)$ ($\cG_{as}(\zA),\cG_{asm}(\zA)$) denote the digraph 
whose nodes are the elements of $\zA$, and the edges are the thin 
semilattice edges (thin semilattice and affine edges, arbitrary thin edges, 
respectively). The strongly connected component of $\cG_s(\zA)$ 
containing $a\in\zA$ will be denoted by $\se a$. The set of strongly 
connected components of $\cG_s(\zA)$ are ordered in the natural 
way (if $a\le b$ then $\se a\le \se b$), the elements belonging to 
maximal ones will be called \emph{maximal}, and the set of all 
maximal elements from $\zA$ will be denoted by $\max(\zA)$. 

The strongly connected component of $\cG_{as}(\zA)$ containing 
$a\in\zA$ will be denoted by $\as(a)$. A maximal strongly connected 
component of this graph is called an
\emph{as-component}, an element from an as-component
is called \emph{as-maximal}, and the set of all as-maximal elements is
denoted by $\amax(\zA)$. 

Finally, the strongly connected component of $\cG_{asm}(\zA)$ containing 
$a\in\zA$ will be denoted by $\asm(a)$. A maximal strongly connected 
component of this graph is called an \emph{universally maximal 
component} (or \emph{u-maximal component} for short), an element 
from a u-component is called \emph{u-maximal}, and the set of all 
u-maximal elements is denoted by $\umax(\zA)$. 

Alternatively, maximal, as-maximal, and u-maximal elements 
can be characterized as follows: an element $a\in\zA$ is 
maximal (as-maximal, u-maximal) if for every $b\in\zA$ such 
that $a\sqq b$ ($a\sqq^{as}b$, $a\sqq^{asm} b$)
it also holds that $b\sqq a$ ($b\sqq^{as}a$, $b\sqq^{asm}a$). 
Sometimes it will be convenient to specify what the algebra is, 
in which we consider maximal components, as-components, 
or u-maximal components, and the corresponding connectivity. 
In such cases we will specify it by writing $\see \zA a$, 
$\as_\zA(a)$, or $\asm_\zA(a)$. For connectivity we will use
$a\sqq_\zA b$, $a\sqq_\zA^{as}b$, and $a\sqq_\zA^{asm}b$.

By $\Filt_\zA(a)=\{b\in\zA\mid a\sqq_\zA b\}$ we
denote the set of elements $a$ is connected to (in terms of 
semilattice paths); similarly, by 
$\Filt^{as}_\zA(a)=\{b\in\zA\mid a\sqq_\zA^{as} b\}$ and
$\Filt^{asm}_\zA(a)=\{b\in\zA\mid a\sqq_\zA^{asm} b\}$ we 
denote the set of elements $a$ is as-connected and 
asm-connected to. Also, $\Filt_\zA(C)=\bigcup_{a\in C}\Filt_\zA(a)$
($\Filt^{as}_\zA(C)=\bigcup_{a\in C}\Filt^{as}_\zA(a)$,
$\Filt^{asm}_\zA(C)=\bigcup_{a\in C}\Filt^{asm}_\zA(a)$, 
respectively) for $C\sse \zA$. Note that if $a$ is a maximal 
(as-maximal or u-maximal) element then $\se a=\Filt_\zA(a)$
($\as(a)=\Filt^{as}_\zA(a)$, and $\umax(a)\sse\Filt^{asm}_\zA(b)$).

\subsection{Paths in non-smooth algebras}

Given the notion of thin edges, (s-, as-, asm-) paths, (as-, asm-) maximal elements in 
algebras from $\cV$ can be defined in the same way as for algebras 
in $\cK$, even if those algebras are not smooth. We use the same notation 
$\le$, $\sqq$, $\sqq^{as}$, $\sqq^{asm}$, $\se{a}$, $\as(a)$, $\asm(a)$, $\max(\zA)$, 
$\amax(\zA)$, $\umax(\zA)$ as before. In this section 
we study properties of such paths and 
maximal elements and the connections between 
(as-, asm-) maximal elements of an algebra with those in a
quotient algebra or subdirect product.

\begin{lemma}\label{lem:quotient-edge}
Let $\zA\in\cV$ and $\th\in\Con(\zA)$.
\begin{itemize}
\item[(1)] 
If $\oa\ob$ is a thin edge in $\zA\fac\th$ and $a\in\oa$, 
then there is $b\in\ob$ such that $ab$ is a thin edge in $\zA$
of the same type.
Morever, if $\zA\in\cK$ and $\oa\ob$ is a special thin majority edge, then 
so is $ab$. 
\item[(2)] 
If $ab$ is a thin edge in $\zA$, then $a\fac\th b\fac\th$
is a thin edge in $\zA\fac\th$.
\end{itemize}
\end{lemma}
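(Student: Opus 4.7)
The plan for (2) is straightforward: each defining condition of a thin edge is preserved under the natural map $\zA\to\zA\fac\th$. In the semilattice case, $f(a,b)=f(b,a)=b$ descends directly to $f(\oa,\ob)=f(\ob,\oa)=\ob$. In the majority case, condition (*) transfers because homomorphisms send subalgebras to subalgebras, and because the majority condition is a property of $g'$ relative to the class $\cK$ (closed under $\H$), so the same terms $g'$ serve as witnesses in both $\zA$ and $\zA\fac\th$. The affine case combines the descent of the equation $h(b,a,a)=b$ with the analogous transfer of (**). Here one implicitly assumes $\oa\ne\ob$; otherwise the conclusion is vacuous.

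For (1), fix any $b_0\in\ob$; by Proposition~\ref{pro:edge-extend}(2), the pair $ab_0$ is a thick edge in $\zA$ of the same type as $\oa\ob$, and the task is to refine $b_0$ within $\ob$ to produce a thin edge. The semilattice case has a one-line solution: set $b:=f(a,b_0)$. Then $b\in\ob$ because $f$ respects $\th$ and acts as a semilattice operation on the thick edge $\{\oa,\ob\}$; since $\oa\ne\ob$ forces $a\ne b$, Proposition~\ref{pro:good-operation} directly identifies $(a,b)$ as a thin semilattice edge. For the majority and affine cases I invoke Lemma~\ref{lem:thin-semilattice}: letting $\tilde\th:=\th\cap\Sg{a,b_0}^2$, I pull back a maximal congruence of $\Sgg{\zA\fac\th}{\oa,\ob}\cong\Sg{a,b_0}\fac{\tilde\th}$ witnessing $\oa\ob$ as a majority (resp.\ affine) edge in the quotient, obtaining a maximal congruence $\th'$ of $\Sg{a,b_0}$ witnessing $ab_0$. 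Lemma~\ref{lem:thin-semilattice} then produces $d\in b_0\fac{\th'}$ such that $ad$ is a special thin majority edge (resp.\ a thin affine edge) in $\zA$; for the affine case one may first replace $b_0$ by $b_1:=h(b_0,a,a)\in\ob$ so that the identity $h(b_1,a,a)=b_1$ holds by Lemma~\ref{lem:fgh-identities}(3). The ``moreover'' clause about special thin majority edges comes for free, since the construction already delivers one.

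The one subtle point is confirming that the output $d$ actually lies in $\ob$: a priori $b_0\fac{\th'}$ decomposes into several $\tilde\th$-classes, only one of which (namely $b_0\fac{\tilde\th}=\ob\cap\Sg{a,b_0}$) sits inside $\ob$. The pullback construction aligns the witnessing congruence with $\tilde\th$, and the minimality of the pair $ad$ delivered by Lemma~\ref{lem:thin-semilattice}, together with the thin-edge property of $\oa\ob$ in the quotient, must then be invoked to pin down $d\fac\th=\ob$. This ``$d\in\ob$'' step is the principal technical hurdle of the proof; everything else is a routine invocation of the results recalled earlier.
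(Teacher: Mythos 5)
Your part (2) matches the paper's argument and is correct. Your semilattice case of part (1) is also correct, and is in fact a little slicker than the paper's: the paper chooses $b\in\ob$ so that $ab$ is a minimal pair with respect to $\th$ and then invokes Lemma~\ref{lem:thin-semilattice}, whereas you simply take $b:=f(a,b_0)=a\cdot b_0\in\ob$ and appeal to Proposition~\ref{pro:good-operation}.

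However, your treatment of the majority and affine cases of part (1) has a genuine gap. You begin by asserting that ``by Proposition~\ref{pro:edge-extend}(2), the pair $ab_0$ is a thick edge in $\zA$ of the same type as $\oa\ob$.'' That proposition only applies when $\oa\ob$ is an edge of $\cG(\zA\fac\th)$, i.e.\ a \emph{thick} edge witnessed by some maximal congruence. But a thin majority edge is, by definition, merely a pair satisfying the purely closure-theoretic condition (*) (that $\Sg{\oa,g'(\oa,\ob,\ob)}$ etc.\ contain $\ob$ for every $g'$ satisfying the majority condition); it need not be a thick majority edge at all. That extra property is exactly what distinguishes a \emph{special} thin majority edge from a general one. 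Likewise a thin affine edge is defined by $h(\ob,\oa,\oa)=\ob$ together with condition (**), with no witnessing congruence promised. Consequently there is nothing to ``pull back'' through $\tilde\th$, and the refinement via Lemma~\ref{lem:thin-semilattice} (which takes a thick edge and a witnessing congruence as input) never gets off the ground. Your own flagged worry about whether $d\in\ob$ is therefore downstream of a more basic problem.

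The paper's actual argument for these two cases is direct and does not pass through thick edges. One picks $b\in\ob$ so that $ab$ is a minimal pair with respect to $\th$ (in the affine case one further replaces $b$ by $b'=h(b,a,a)$, which stays in $\ob$ and still gives a minimal pair, so that $h(b',a,a)=b'$ holds via Lemma~\ref{lem:fgh-identities}(3)). Then one verifies condition (*) [resp.\ (**)] in $\zA$ from the corresponding condition in $\zA\fac\th$: for instance, $\ob\in\Sgg{\zA\fac\th}{\oa,g'(\oa,\ob,\ob)}$ yields a binary term $t$ with $t(a,g'(a,b,b))\in\ob\cap\Sgg{\zA}{a,b}$, and minimality of the pair $ab$ with respect to $\th$ then forces $b\in\Sgg{\zA}{a,t(a,g'(a,b,b))}\sse\Sgg{\zA}{a,g'(a,b,b)}$. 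The ``moreover'' clause about special thin majority edges is a separate, short extra step (lift the witnessing congruence $\eta$ of $\Sgg{\zA\fac\th}{\oa,\ob}$ to $\eta\fac\th$ on $\Sgg{\zA}{a,b}$), not something that falls out of a construction as you suggest.
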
  

\begin{remark}
Note that in Lemma~\ref{lem:quotient-edge}(2) if $ab$
is a special thin majority edge, there is no guarantee 
that $a\fac\th b\fac\th$ is also a special edge.
\end{remark}

\begin{proof}
Pick an arbitrary $b\in\ob$ such that the pair
$ab$ is minimal with respect to $\th$. Let 
$\zB=\Sgg\zA{a,b}$. 

First, if $\oa\ob$ is a thin semilattice edge, then 
$a\cdot b\in\oa\cdot\ob=\ob$. As $a\ne b$, by 
Proposition~\ref{pro:very-good-operation} $(a,ab)$ is a semilattice edge.  

Now suppose  that $\oa\ob$ is a thin majority edge. 
Let $g'$ be an operation satisfying the majority property 
with respect to $\cK$. Let $c=g'(a,b,b)$. Since $\oa\ob$ 
is a thin majority edge, there is a binary term operation $t$
such that $t(\oa,c\fac\th)=\ob$. Note that 
$t(a,c)\in\zB$, as well. By the choice of $b$, it holds 
that $b\in\Sg{a,t(a,c)}$. Therefore, $b\in\Sg{a,g'(a,b,b)}$. 
That $b\in\Sg{a,g'(b,a,b)}$ and $b\in\Sg{a,g'(b,b,a)}$ 
can be proved in the same way. Suppose in addition that $\zA\in\cK$ 
and $\oa\ob$ is a special thin majority edge, that is, it is 
a majority edge in $\zA\fac\th$ and it is witnessed by a 
congruence $\eta$ of $\Sgg{\zA\fac\th}{\oa,\ob}$. 
It is then straightforward that $ab$ is a majority edge 
as witnessed by the congruence 
$\eta\fac\th=\{(d,e)\in\zB^2\mid d\fac\th\eqc\eta e\fac\th\}$.

Finally, suppose that $\oa\ob$ is a thin affine edge, and set
$b'=h(b,a,a)$, where $h$ is the operation identified in 
Theorem~\ref{the:uniform}(iii). By Lemma~\ref{lem:fgh-identities}(3)
it holds that $h(b',a,a)=b'$. Note that $ab'$ is a minimal pair as well.
Consider $c=h'(a,a,b')$, for any $h'$ satisfying the minority condition 
with respect to $K$. Since $\oa\ob$
is a thin affine edge, $\ob\in\Sgg{\zA\fac\th}{\oa,c\fac\th}$.
This means that there is $d\in\Sgg\zA{a,c}$ such that 
$d\in\ob$. Since $d\in\Sgg\zA{a,b'}$ and $ab'$ is a minimal 
pair with respect to $\th$, we obtain $b'\in\Sgg\zA{a,d}
\sse\Sgg\zA{a,c}$. Thus $ab'$ is a thin affine edge.

\smallskip

(2) If $a\eqc\th b$, the statement of the lemma is trivial.
Otherwise if $a\le b$, then 
$a\fac\th\cdot b\fac\th=b\fac\th\cdot a\fac\th=b\fac\th$ showing that 
$a\fac\th\le b\fac\th$.
If $ab$ is a thin majority edge, then consider an abitrary
$g'$ satisfying the majority condition. Since 
$b\in\Sgg\zA{a,g'(a,b,b)}$, we also have 
$b\fac\th\in\Sgg{\zA\fac\th}{a\fac\th,g'(a\fac\th,b\fac\th,b\fac\th)}$.
The argument for the rest of condition (*) is similar.
If $ab$ is a thin affine edge, then $h(b,a,a)=b$, implying
$h(b\fac\th,a\fac\th,a\fac\th)=b\fac\th$. Also, for any $h'$ satisfying the minority condition, as 
$b\in\Sgg\zA{a,h'(a,a,b)}$, we have 
$b\fac\th\in\Sgg\zA{a\fac\th,h'(a\fac\th,a\fac\th,b\fac\th)}$.
\end{proof}

The next statement straightforwardly follows from 
Lemma~\ref{lem:quotient-edge}. 

\begin{corollary}\label{cor:quotient-path}
Let $\zA\in\cV$ and $\th\in\Con\zA$.
\begin{itemize}
\item[(1)] 
If $\vc{\oa}k$ is an s- (as-,asm-) path in $\zA\fac\th$ and 
$a\in \oa_1$, then there are $a_i\in\oa_i$ such that
$a_1=a$ and $\vc ak$ is an s- (as-, asm-) path in $\zA$.
Morever, if $\zA\in\cK$ and $\vc{\oa}k$ is a special asm-path, then 
$\vc ak$ is also a special asm-path. 
\item[(2)] If $\vc ak$ is an s- (as-, asm-) path in $\zA$, then 
$a_1\fac\th\zd a_\ell\fac\th$ is an s- (as-, asm-) 
path in $\zA\fac\th$.
\end{itemize}
\end{corollary}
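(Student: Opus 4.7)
The plan is to derive both statements by straightforward induction on the path length, applying Lemma~\ref{lem:quotient-edge} edge by edge.

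For (1), I would induct on $k$. The base case $k=1$ is immediate: set $a_1=a$. For the inductive step, suppose $a_1\zd a_{i-1}$ have already been selected with $a_j\in\oa_j$, $a_1=a$, and $\vc a{i-1}$ forming a path of the prescribed type in $\zA$. The pair $\oa_{i-1}\oa_i$ is a thin edge of one of the allowed types (semilattice for an s-path; semilattice or affine for an as-path; any thin type for an asm-path) in $\zA\fac\th$. Applying Lemma~\ref{lem:quotient-edge}(1) to this edge, with the previously chosen $a_{i-1}\in\oa_{i-1}$ playing the role of the fixed preimage of the source, yields some $a_i\in\oa_i$ such that $a_{i-1}a_i$ is a thin edge of the same type in $\zA$; moreover, if $\oa_{i-1}\oa_i$ is a special thin majority edge, so is $a_{i-1}a_i$. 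Appending $a_i$ completes the extension, and since at each step the type (and speciality, when relevant) of the edge is preserved, the resulting $\vc ak$ is a path of the same type as $\vc{\oa}k$, and is special whenever $\vc{\oa}k$ is.

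For (2), it suffices to examine each edge separately: for each $i\in\{2\zd k\}$, Lemma~\ref{lem:quotient-edge}(2) gives that $a_{i-1}\fac\th\ a_i\fac\th$ is a thin edge of the same type in $\zA\fac\th$ (the degenerate case $a_{i-1}\eqc\th a_i$ collapses the edge to a loop but does not break the path). Concatenating these thin edges yields the claimed path in $\zA\fac\th$.

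I do not anticipate any real obstacle: the corollary is essentially a bookkeeping extension of Lemma~\ref{lem:quotient-edge} from single edges to sequences of edges. The only subtlety to watch is the direction of induction in (1) --- we must traverse the path from $a_1$ forward so that at each step the preimage of the source of the next edge is already available; running the induction backward would instead require a preimage of the target, which Lemma~\ref{lem:quotient-edge}(1) does not supply. A second minor point is that the preserved \emph{type} of each edge needs to be matched against the definition of the corresponding kind of path, but this is immediate since s-, as-, and asm-paths are defined exactly by which thin edge types are permitted.
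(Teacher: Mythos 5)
Your proof is correct and is exactly the intended argument: the paper gives no explicit proof and simply notes the corollary "straightforwardly follows from Lemma~\ref{lem:quotient-edge}," which is precisely the edge-by-edge induction you carry out. Your two cautionary remarks (forward direction of the induction in (1), and the collapsed-edge case in (2)) are the right things to watch and are handled correctly.
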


\begin{corollary}\label{cor:quotient-max}
Let $\zA\in\cV$ and $\th\in\Con\zA$.
\begin{itemize}
\item[(1)] If $\oa\in\zA\fac\th$ is maximal (as-maximal, u-maximal) in 
$\zA\fac\th$, then there is $a\in\oa$ that is maximal (as-maximal, 
u-maximal) in $\zA$.
\item[(2)] If $a$ is maximal (as-maximal, u-maximal) in $\zA$, then $a\fac\th$ is 
maximal (as-maximal, u-maximal) in $\zA\fac\th$.
\end{itemize}
\end{corollary}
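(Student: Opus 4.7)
\medskip

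\noindent\textbf{Proof proposal.} Both parts are essentially bookkeeping on top of Corollary~\ref{cor:quotient-path}: part~(1) lifts s-, as-, or asm-paths from $\zA\fac\th$ to $\zA$, and part~(2) pushes them down. I will write the argument for s-paths/maximal elements; the as- and asm-cases are identical after replacing $\sqq$ by $\sqq^{as}$ or $\sqq^{asm}$ and using the corresponding clause of Corollary~\ref{cor:quotient-path}.

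For part~(2), assume $a$ is maximal in $\zA$ and suppose $a\fac\th\sqq_{\zA\fac\th}\ob$ for some $\ob\in\zA\fac\th$. Pick an s-path $a\fac\th=\oa_1\zd\oa_k=\ob$ in $\zA\fac\th$. By Corollary~\ref{cor:quotient-path}(1) this lifts, starting at $a$, to an s-path $a=a_1\zd a_k$ in $\zA$ with $a_i\in\oa_i$; in particular $a\sqq_\zA a_k$ and $a_k\in\ob$. Maximality of $a$ gives $a_k\sqq_\zA a$, and applying Corollary~\ref{cor:quotient-path}(2) to this s-path yields $\ob=a_k\fac\th\sqq_{\zA\fac\th} a\fac\th$, so $a\fac\th$ is maximal.

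For part~(1), fix any $a_0\in\oa$. Since $\zA$ is finite, starting from $a_0$ and repeatedly moving along thin semilattice edges to strictly larger $\se{\cdot}$ classes must terminate; hence there exists a maximal element $b\in\zA$ with $a_0\sqq_\zA b$. Applying Corollary~\ref{cor:quotient-path}(2) gives $\oa=a_0\fac\th\sqq_{\zA\fac\th}b\fac\th$, and maximality of $\oa$ yields $b\fac\th\sqq_{\zA\fac\th}\oa$. Now invoke Corollary~\ref{cor:quotient-path}(1) to lift this last s-path, starting at $b$, to an s-path $b=c_1\zd c_\ell$ in $\zA$ with $c_\ell\in\oa$. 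Thus $b\sqq_\zA c_\ell$, and since $b$ is maximal we also have $c_\ell\sqq_\zA b$. Therefore $c_\ell$ is maximal in $\zA$ and lies in $\oa$, which is the required element.

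The only step that requires thought is the very first move in part~(1), namely finding \emph{some} maximal element above $a_0$ inside $\zA$; this is just finiteness of $\zA$ combined with the definition of maximality, so there is no real obstacle. For the as- and u-maximal variants, the same two-step ``project, lift'' scheme works verbatim because Corollary~\ref{cor:quotient-path} handles as- and asm-paths as well, and because an analogous ``go up until you stabilise'' argument produces an as- or u-maximal element above any given $a_0$.
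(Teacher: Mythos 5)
Your argument matches the paper's proof essentially verbatim: part (2) picks an arbitrary $\ob$ reachable from $a\fac\th$, lifts the witnessing path via Corollary~\ref{cor:quotient-path}(1), uses maximality of $a$ to get a return path, and pushes it down with Corollary~\ref{cor:quotient-path}(2); part (1) first climbs to a maximal $b$ above some $a_0\in\oa$, projects down, uses maximality of $\oa$, and lifts the return path back to land on a maximal element of $\oa$. The paper also treats the s-, as-, and asm-cases uniformly via a single parameter $x$, just as you indicate. Correct, same approach.
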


\begin{proof}
(1) We will use notation $c\sqq^x d$, $x\in\{s,as,asm\}$, to denote that
there is an $x$-path from $c$ to $d$. Pick an arbitrary $a'\in\oa$ and 
let $b$ be any $x$-maximal element of $\zA$ such that $a'\sqq^x b$.
This means that there is an $x$-path from $a'$ to $b$, and by
Corollary~\ref{cor:quotient-path}(2) there is also an $x$-path from 
$a'\fac\th$ to $b\fac\th$ in $\zA\fac\th$. Since by the assumption $\oa$ 
is $x$-maximal, there is also an $x$-path in $\zA\fac\th$ from
$b\fac\th$ to $\oa$. By Corollary~\ref{cor:quotient-path}(1)
there is an $x$-path in $\zA$ from $b$ to some element $a''\in\oa$.
Since $b$ is an $x$-maximal element, so is $a''$.

\smallskip

(2) Suppose that $a\fac\th\sqq^x b\fac\th$ for some $b\in\zA$. It suffices
to show that in this case $b\fac\th\sqq^x a\fac\th$. By 
Corollary~\ref{cor:quotient-path}(1) there is an $x$-path in $\zA$
from $a$ to some $b'\in b\fac\th$. Since $a$ is $x$-maximal, there also
exists an $x$-path $b'=b_1\zd b_k=a$ from $b'$ to $a$. By 
Corollary~\ref{cor:quotient-path}(2) 
$b\fac\th=b_1\fac\th\zd b_k\fac\th=a\fac\th$ 
is an $x$-path in $\zA\fac\th$ from $b\fac\th$ to $a\fac\th$.
\end{proof}

Next we study connectivity in subalgebras of direct products.

\begin{lemma}\label{lem:product-edge}
Let $\rel$ be a subalgebra of $\tms\zA n$, $\vc\zA n\in\cV$, 
$I\sse[n]$.
\begin{itemize}
\item[(1)] 
For any $\ba\in\rel$, $\ba^*=\pr_I\ba$, $\bb\in\pr_I\rel$ 
such that $\ba^*\bb$ is a thin edge, there is 
$\bb'\in\rel$, $\pr_I\bb'=\bb$, such that $\ba\bb'$ is a thin edge of the same 
type. 
\item[(2)] 
If $\ba\bb$ is a thin edge in $\rel$ then $\pr_I\ba\,\pr_I\bb$
is a thin edge in $\pr_I\rel$ of the same type (including the 
possibility that $\pr_I\ba=\pr_I\bb$).
\end{itemize}
\end{lemma}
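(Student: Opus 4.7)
The plan is to reduce both parts to Lemma~\ref{lem:quotient-edge} by exploiting the isomorphism $\rel\fac\th\cong\pr_I\rel$, where $\th=\Ker\pr_I$ is the kernel of the projection homomorphism $\pr_I\colon\rel\to\pr_I\rel$. Under this identification, the $\th$-class of $\ba\in\rel$ is exactly the fibre $\pr_I^{-1}(\ba^*)\cap\rel$ with $\ba^*=\pr_I\ba$, and an edge in $\pr_I\rel$ is literally an edge in the quotient $\rel\fac\th$.

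For part (1), the assumption that $\ba^*\bb$ is a thin edge in $\pr_I\rel$ translates to saying that $(\ba\fac\th)(\bb\fac\th)$ is a thin edge of the same type in $\rel\fac\th$, where we view $\bb$ as naming a $\th$-class. Applying Lemma~\ref{lem:quotient-edge}(1) to the algebra $\rel$, the congruence $\th$, and the distinguished representative $\ba$ of its $\th$-class, we obtain some $\bb'\in\rel$ with $\bb'\fac\th$ equal to the class named by $\bb$ — i.e.\ $\pr_I\bb'=\bb$ — such that $\ba\bb'$ is a thin edge in $\rel$ of the same type. The ``special thin majority'' clause transfers directly from the corresponding clause of Lemma~\ref{lem:quotient-edge}(1).

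For part (2), apply Lemma~\ref{lem:quotient-edge}(2) to $\rel$ with the same $\th$: a thin edge $\ba\bb$ in $\rel$ descends to a thin edge $(\ba\fac\th)(\bb\fac\th)$ in $\rel\fac\th$, which via the isomorphism is the pair $\pr_I\ba\,\pr_I\bb$ in $\pr_I\rel$. The parenthetical case $\pr_I\ba=\pr_I\bb$ corresponds to the trivial subcase $\ba\eq_\th\bb$ already flagged at the start of the proof of Lemma~\ref{lem:quotient-edge}(2). The only technical point to verify before invoking Lemma~\ref{lem:quotient-edge} is that the algebra $\rel$ lies in (or can be added to) $\cK$, so that $\pr_I\rel$ lies in $\cK$ by closure under homomorphic images and the definitions of thin edges make sense in both. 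The compatibility of the operations satisfying the majority and minority conditions with passing to subdirect products is exactly Lemma~\ref{lem:condition-to-product}, so no extra work is required there. The main conceptual step is just recognising the ``subdirect projection = quotient by kernel'' correspondence; once that is in place, both parts are essentially citations.
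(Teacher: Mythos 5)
Your proof is correct and takes essentially the same approach as the paper: reduce both parts to Lemma~\ref{lem:quotient-edge} via the identification $\pr_I\rel\cong\rel\fac{\eta_I}$, where $\eta_I$ is the projection (kernel) congruence, and cite parts (1) and (2) of that lemma respectively. Your additional remark invoking Lemma~\ref{lem:condition-to-product} to justify that the majority/minority-condition operations behave correctly on the subdirect product $\rel$ is a legitimate point the paper glosses over with ``clearly follows.''
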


\begin{proof}
Observe that we can consider $\pr_I\rel$ as the quotient algebra
$\rel\fac{\eta_I}$, where $\eta_I$ is the projection congruence
of $\rel$, that is, $(\bc,\bd)\in\eta_I$ if and only if 
$\pr_I\bc=\pr_I\bd$. Then item (1) can be rephrased as follows:
For any $\ba\in\rel$ and $\ob\in\rel\fac{\eta_I}$ such that
$\ba\fac{\eta_I}\ob$ is a thin edge in $\rel\fac{\eta_I}$, there is 
$\bb'\in\ob$ such that $\ba\bb'$ is a thin edge of the same 
type. It clearly follows from Lemma~\ref{lem:quotient-edge}(1).
Similarly, item (2) can be rephrased as: If $\ba\bb$ is a thin
edge in $\rel$, then $\ba\fac{\eta_I}\bb\fac{\eta_I}$ is a thin 
edge in $\rel\fac{\eta_I}$ of the same type. It follows 
straightforwardly from Lemma~\ref{lem:quotient-edge}(2).
\end{proof}

The next statement follows from Lemma~\ref{lem:product-edge}.

\begin{corollary}\label{cor:product-path}
Let $\rel$ be a subalgebra of $\tms\zA n$, $\vc\zA n\in\cV$, and 
$I\sse[n]$.
\begin{itemize}
\item[(1)] 
For any $\ba\in\rel$, and an s- (as-, asm-) path 
$\vc\bb k\in\pr_I\rel$ with $\pr_I\ba=\bb_1$, there is an
s- (as-, asm-) path $\vc{\bb'}k\in\rel$ such that 
$\bb'_1=\ba$ and $\pr_I\bb'_i=\bb_i$, $i\in[k]$. Moreover, if $\vc\bb k$ is a
special asm-path, so is $\vc{\bb'}k$.
\item[(2)] If $\vc\ba k$ is an s- (as-, asm-) path in $\rel$, then 
$\pr_I\ba_1\zd\pr_I\ba_k$ is an s- (as-, asm-) path in $\pr_I\rel$.
\end{itemize}
\end{corollary}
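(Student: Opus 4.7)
The plan is to reduce both parts to their single-edge analogues, namely Lemma~\ref{lem:product-edge}(1) and (2), via a routine induction on the number of edges. In particular, since each application of the edge-level statement produces exactly one lifted (or projected) thin edge, the lifted path has the same length as the original, so one may take $\ell=k$.

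For part (1), I induct on $k$. The base case $k=1$ is immediate, as the path consists of the single vertex $\ba$. For the inductive step, assume one has already constructed $\bb'_1=\ba,\bb'_2\zd\bb'_k\in\rel$ forming an $s$- (as-, asm-)path with $\pr_I\bb'_i=\bb_i$ for each $i\le k$. Since $\pr_I\bb'_k=\bb_k$ and $\bb_k\bb_{k+1}$ is a thin edge in $\pr_I\rel$, Lemma~\ref{lem:product-edge}(1) produces $\bb'_{k+1}\in\rel$ with $\pr_I\bb'_{k+1}=\bb_{k+1}$ such that $\bb'_k\bb'_{k+1}$ is a thin edge of the same type. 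Appending $\bb'_{k+1}$ yields the required extension, and the type of the full path matches that of $\vc\bb k$ because each edge type is preserved step-by-step. The special asm-path claim follows from the ``moreover'' clause of Lemma~\ref{lem:product-edge}(1): every special thin majority edge in $\vc\bb k$ lifts to a special thin majority edge in $\vc{\bb'}k$.

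For part (2), apply Lemma~\ref{lem:product-edge}(2) to each consecutive pair $\ba_{i-1}\ba_i$ of the given path in $\rel$. Each projected pair $\pr_I\ba_{i-1}\,\pr_I\ba_i$ is either a thin edge in $\pr_I\rel$ of the same type as $\ba_{i-1}\ba_i$, or the two projections coincide. After discarding any consecutive repetitions from $\pr_I\ba_1\zd\pr_I\ba_k$, the remaining sequence is an $s$- (as-, asm-)path in $\pr_I\rel$; the types are preserved since only edges of the same type (or degenerate pairs) occur.

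There is no genuine obstacle here: the statement is a purely formal bootstrap from the single-edge lemma to the path-level version. The only minor subtlety is the possibility in part (2) that projection collapses a thin edge of $\rel$ to a degenerate pair in $\pr_I\rel$ — a case explicitly permitted by Lemma~\ref{lem:product-edge}(2) and handled trivially by deleting repeated consecutive entries.
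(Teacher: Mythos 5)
Your proof is correct and takes exactly the approach the paper intends: the paper offers no argument beyond the remark that the corollary ``follows from Lemma~\ref{lem:product-edge},'' and your step-by-step induction on path length for part (1), together with coordinatewise application of Lemma~\ref{lem:product-edge}(2) for part (2), is the routine bootstrap being alluded to. Your observation that $\ell=k$ suffices is accurate (each edge lifts to exactly one edge), and the remark about discarding degenerate pairs in part (2) matches how the paper itself treats the case $\pr_I\ba=\pr_I\bb$ in Lemma~\ref{lem:product-edge}(2).
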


There is a connection between maximal (as-maximal, 
u-maximal) elements of a subdirect product and its projections.

\begin{corollary}\label{cor:product-maximal}
Let $\rel$ be a subalgebra of $\tms\zA n$, $\vc\zA n\in\cV$, 
and $I\sse[n]$.
\begin{itemize}
\item[(1)] For any maximal (as-maximal, u-maximal) (in $\pr_I\rel$) element 
$\bb\in\pr_I\rel$, there is $\bb'\in\rel$ which is maximal (as-maximal, 
u-maximal) in $\rel$ and such that $\pr_I\bb'=\bb$. In particular, 
$\pr_{[n]-I}\bb'$ is a maximal (as-maximal, u-maximal) in 
$\pr_{[n]-I}\rel$.
\item[(2)] If $\ba$ is maximal (as-maximal, u-maximal) in $\rel$, then 
$\pr_I\ba$ is maximal (as-maximal, u-maximal) in $\pr_I\rel$.
\end{itemize}
\end{corollary}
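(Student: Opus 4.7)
The plan is to mimic, essentially verbatim, the argument of Corollary~\ref{cor:quotient-max}, but replacing the role of Corollary~\ref{cor:quotient-path} by its subdirect-product analogue Corollary~\ref{cor:product-path}. As there, I will use the notation $\bc\sqq^x\bd$ with $x\in\{s,as,asm\}$ to indicate that there is an $x$-path from $\bc$ to $\bd$; the three flavours of maximality are then handled uniformly by the same argument.

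For part (1), fix any preimage $\ba\in\rel$ of $\bb$ under $\pr_I$, and let $\bb'\in\rel$ be any $x$-maximal element of $\rel$ with $\ba\sqq^x\bb'$; such $\bb'$ exists because $\rel$ is finite. By Corollary~\ref{cor:product-path}(2) the image $\pr_I\bb'$ satisfies $\bb=\pr_I\ba\sqq^x\pr_I\bb'$ in $\pr_I\rel$. Since $\bb$ is $x$-maximal in $\pr_I\rel$, there is also an $x$-path in $\pr_I\rel$ from $\pr_I\bb'$ back to $\bb$. Applying Corollary~\ref{cor:product-path}(1) starting at $\bb'$ lifts this path to an $x$-path in $\rel$ ending at some $\bb''\in\rel$ with $\pr_I\bb''=\bb$. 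The element $\bb''$ lies in the same strongly connected component of $\cG_x(\rel)$ as $\bb'$, hence is itself $x$-maximal; so $\bb''$ is the desired witness. The ``in particular'' clause is then immediate by applying part (2) of the corollary to the index set $[n]-I$.

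For part (2), suppose $\ba$ is $x$-maximal in $\rel$ and let $\bc\in\pr_I\rel$ be any element with $\pr_I\ba\sqq^x\bc$. We must show $\bc\sqq^x\pr_I\ba$. By Corollary~\ref{cor:product-path}(1) we can lift the given path starting at $\ba$ to an $x$-path in $\rel$ from $\ba$ to some $\bc'\in\rel$ with $\pr_I\bc'=\bc$. Since $\ba$ is $x$-maximal, there is an $x$-path in $\rel$ from $\bc'$ back to $\ba$, which by Corollary~\ref{cor:product-path}(2) projects to an $x$-path in $\pr_I\rel$ from $\bc$ to $\pr_I\ba$, as required.

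There is no real obstacle; the only point worth checking is that the lifting and projection of paths in Corollary~\ref{cor:product-path} preserve the path type ($s$, $as$, or $asm$) uniformly, which is exactly what that corollary asserts. Consequently the same proof text covers all three notions of maximality simultaneously.
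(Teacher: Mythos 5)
Your proof is correct, and the argument you give for each part is essentially the one the paper used to establish Corollary~\ref{cor:quotient-max} itself. The paper, however, does not repeat that argument for the present corollary: instead it observes that $\pr_I\rel$ is isomorphic to the quotient $\rel\fac{\eta_I}$ by the projection kernel $\eta_I$, and then directly invokes Corollary~\ref{cor:quotient-max}(1) and (2) under that identification (exactly as was already done in the proof of Lemma~\ref{lem:product-edge}). Since Corollary~\ref{cor:product-path} is itself obtained from Corollary~\ref{cor:quotient-path} through that same isomorphism, your proof and the paper's are morally identical; the paper's version just short-circuits the repetition. One minor point worth tightening in your write-up: the claim that $\bb''$ lies in the same strongly connected component of $\cG_x(\rel)$ as $\bb'$ needs the one-line justification that $x$-maximality of $\bb'$ together with $\bb'\sqq^x\bb''$ forces $\bb''\sqq^x\bb'$; the paper's phrasing ``Since $b$ is an $x$-maximal element, so is $a''$'' leaves this equally implicit, so no real gap. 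Your handling of the ``in particular'' clause by applying part (2) to $[n]-I$ is exactly right.
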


\begin{proof}
(1) We again can use the isomorphism between $\pr_I\rel$ and
the quotient algebra $\rel\fac{\eta_I}$. Then the statement of the
lemma translates into: If $\oa\in\rel\fac{\eta_I}$ is maximal (as-maximal, 
u-maximal) in $\rel\fac{\eta_I}$, then there is $a\in\oa$ that is maximal 
(as-maximal, u-maximal) in $\rel$. The latter statement is true by 
the second part of this corollary.

\smallskip

(2) Using the isomorphism between $\pr_I\rel$ and
the quotient algebra $\rel\fac{\eta_I}$, the statement follows from 
Corollary~\ref{cor:quotient-max}(2). Indeed, as $\ba$ is maximal 
(as-maximal, u-maximal)  in $\rel$, the element $\ba\fac{\eta_I}$ that 
corresponds to $\pr_I\ba$ is maximal (as-maximal, u-maximal) 
in $\rel\fac{\eta_I}$.
\end{proof}

The following lemma considers a special case of maximal components 
of various types in subdirect products.

\begin{lemma}\label{lem:as-product}
Let $\rel$ be a subdirect product of $\zA_1\tm\zA_2$, $B,C$ 
maximal components (as-components, u-components) of 
$\zA_1,\zA_2\in\cV$, respectively, and $B\tm C\sse\rel$. 
Then $B\tm C$ is a maximal component (as-component, u-component) 
of $\rel$.
\end{lemma}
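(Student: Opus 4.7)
The plan is to verify both (i) that $B\tm C$ is strongly connected in the appropriate graph $\cG_s(\rel)$, $\cG_{as}(\rel)$, or $\cG_{asm}(\rel)$, and (ii) that the strongly connected component of this graph containing any point of $B\tm C$ is exactly $B\tm C$ and is maximal. The technical engine is a \emph{lifting principle}: if $ab$ is a thin semilattice, affine, or majority edge in $\zA_1$ and both $(a,c),(b,c)\in\rel$, then $(a,c)(b,c)$ is a thin edge in $\rel$ of the same type, and symmetrically in the second coordinate.

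Establishing the lifting principle is the technical core. For semilattice edges, coordinatewise idempotence suffices: $(a,c)\cd(b,c)=(a\cd b,c\cd c)=(b,c)$ and symmetrically $(b,c)\cd(a,c)=(b,c)$. For thin majority and thin affine edges, Lemma~\ref{lem:condition-to-product} ensures that any operation $g'$ or $h'$ satisfying the majority or minority condition on $\cK$ continues to satisfy it on $\rel$; conditions (*) and (**) then lift coordinatewise. For instance, if $b=t(a,g'(a,b,b))$ in $\zA_1$ for some binary term $t$, then applying $t$ to $(a,c)$ and $g'((a,c),(b,c),(b,c))=(g'(a,b,b),c)$ yields $(b,c)$ by idempotence on the second coordinate; the affine case is analogous, using $h((b,c),(a,c),(a,c))=(b,c)$ together with Lemma~\ref{lem:fgh-identities}(3).

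With the lifting principle in hand, strong connectivity of $B\tm C$ is routine. For $(b_1,c_1),(b_2,c_2)\in B\tm C$, strong connectedness of $B$ in the appropriate graph yields a path of thin edges from $b_1$ to $b_2$; this path stays inside $B$ because $B$ is a maximal strongly connected component, so every outgoing thin edge of a vertex in $B$ lands in $B$. Lifting this path with second coordinate fixed at $c_1$ gives a path from $(b_1,c_1)$ to $(b_2,c_1)$ in $\rel$, and a symmetric lift in $C$ gives a path from $(b_2,c_1)$ to $(b_2,c_2)$; concatenation produces the required path.

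For maximality, suppose $(b,c)\sqq^x_{\rel}(d,e)$ with $(b,c)\in B\tm C$ and $x\in\{s,as,asm\}$. Projecting via Corollary~\ref{cor:product-path}(2) gives $b\sqq^x_{\zA_1}d$ and $c\sqq^x_{\zA_2}e$; $x$-maximality of $B$ and $C$ then forces $d\in B$ and $e\in C$, so $(d,e)\in B\tm C\sse\rel$. A second application of the lifting principle, to paths from $d$ back to $b$ inside $B$ and from $e$ back to $c$ inside $C$, produces a path $(d,e)\sqq^x_{\rel}(b,c)$. Hence the strongly connected component of the appropriate graph on $\rel$ containing any point of $B\tm C$ is exactly $B\tm C$, and it is maximal because no element reachable from $B\tm C$ can escape $B\tm C$. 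The main obstacle is the lifting principle for thin majority and thin affine edges, whose defining conditions (*) and (**) quantify over all operations satisfying the majority or minority conditions on $\cK$; Lemma~\ref{lem:condition-to-product} is precisely what is needed to make this quantification survive when passing to subdirect products.
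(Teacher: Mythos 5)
Your proof is correct and uses the same core mechanism as the paper's own proof: lift a path in $B$ with the $C$-coordinate frozen, then a path in $C$, and concatenate. You are in fact more complete than the paper, which only verifies that $B\tm C$ is strongly connected --- asserting without justification that the interleaved sequence of tuples is a path --- and leaves the maximality step (that the strongly connected component containing $B\tm C$ is exactly $B\tm C$ and is a maximal one) entirely implicit; you supply both the coordinatewise edge-lifting argument and the maximality argument via Corollary~\ref{cor:product-path}(2) together with the $x$-maximality of $B$ and $C$. Your invocation of Lemma~\ref{lem:condition-to-product} to handle the universal quantifier over $g'$, $h'$ in conditions (*) and (**) is a reasonable precaution, though under the paper's convention of a fixed class $\cK$ the coordinatewise computation already carries the quantifier through unchanged.
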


\begin{proof}
We need to show that for any $\ba,\bb\in B\tm C$ there is an
s-path (as-path, asm-path) from $\ba$ to $\bb$. By the assumption
there is a (s-,as-,asm-) path $\ba[1]=a_1,a_2\zd a_k=\bb[1]$ from
$\ba[1]$ to $\bb[1]$ and a (s-,as-,asm-) path 
$\ba[2]=b_1,b_2\zd b_\ell=\bb[2]$ from $\ba[2]$ to $\bb[2]$.
Then the sequence 
$\ba=(a_1,b_1),(a_2,b_1)\zd(a_k,b_1)$, $(a_k,b_2)\zd(a_k,b_\ell)=\bb$
is a (s-,as,asm-) path from $\ba$ to $\bb$. Thus, $B\tm C$ is a connected component. 
It remains to observe that by Corollary~\ref{cor:product-maximal} it contains 
a maximal (as-maximal, u-maximal) element, and therefore is a maximal 
component (as-component, u-component) of $\rel$.
\end{proof}

\section{Connectivity}\label{sec:connectivity}

Recall that $\cK$ is a finite class of finite smooth idempotent algebras omitting type \one, and $\cV$ is the variety it generates.

\subsection{General connectivity}

The main result of this section is that all maximal elements are connected to
each other. The undirected connectivity easily follows from the definitions,
and Theorem~\ref{the:connectedness}, so the challenge is to prove 
directed connectivity, as defined above. We start with an auxiliary lemma.

Let $\rel\le\zA_1\tm\dots\tm\zA_k$ be a relation. Also, let 
$\tol_i(\rel)$ (or simply $\tol_i$ if $\rel$ is clear from the context),
$i\in[k]$, denote the \emph{link} tolerance
\begin{align*}
& \{(a_i,a'_i)\in\zA_i^2\mid (a_1\zd a_{i-1},a_i,a_{i+1}\zd a_k),\\
& \quad
(a_1\zd a_{i-1},a'_i,a_{i+1}\zd a_k)\in\rel, \text{ for some
$(a_1\zd a_{i-1},a_{i+1}\zd a_k)$}\}.
\end{align*}
Recall that a tolerance is said to be \emph{connected} if its transitive 
closure is the full relation. The transitive closure 
$\lnk_i(\rel)$ of $\tol_i(\rel)$, $i\in[k]$, is called the 
\emph{link congruence}, and it is, indeed, a congruence. 

If $\rel$ is binary, that is, a subdirect product of $\zA_1,\zA_2$, 
then by $\rel[c], \rel^{-1}[c']$ for $c\in\zA_1,c'\in\zA_2$ 
we denote the sets $\{b\mid (c,b)\in\rel\}, \{a\mid (a,c')\in\rel\}$, 
respectively, and for $C\sse\zA_1,C'\sse\zA_2$ we use 
$\rel[C]=\bigcup_{c\in C}\rel[c]$, 
$\rel^{-1}[C']=\bigcup_{c'\in C'}\rel^{-1}[c']$, respectively. 
Relation $\rel$ is said to be \emph{linked} 
if the link congruences $\lnk_1(\rel),\lnk_2(\rel)$ are full 
congruences.

\begin{lemma}\label{lem:going-maximal}
Let $\zA\in\cV$ and let $\rela$ be a 
tolerance of $\zA$. Suppose that $(a,b)$, $a,b\in\zA$, 
belongs to the transitive closure of $\rela$, that is, 
there are $\vc d{k-1}$ such that $(d_i,d_{i+1})\in\rela$ for 
$i\in\{0,1\zd k-1\}$, where $d_0=a,d_k=b$. If for some 
$i\in\{0,1\zd k\}$ there is $d'_i\in\zA$ such that $d_i\sqq d'_i$, 
then there are $d'_j\in\zA$ for $j\in\{0,1,\zd k\}-\{i\}$ 
such that $d_j\sqq d'_j$ for $j\in\{0,1\zd k\}$, 
and $(d'_j,d'_{j+1})\in\rela$ for $j\in\{0,1\zd k-1\}$. 

Moreover, if $d_0\zd d_{i-1}$ are maximal, there are 
$d''_0\zd d''_k$ such that $d''_j=d_j$ for $j\in\{0\zd i-1\}$,
$d_i\sqq d'_i\sqq d''_i$, $d_j\sqq d''_j$ for $j\in\{i+1\zd k\}$, 
and $(d''_j,d''_{j+1})\in\rela$ for $j\in\{0\zd k-1\}$.
\end{lemma}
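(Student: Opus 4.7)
The plan is to prove the first statement by a direct propagation along the s-path at position $i$ using the operation $\cdot$ from Proposition~\ref{pro:good-operation}, and then to deduce the ``moreover'' clause by applying the first statement twice, exploiting the maximality of $d_{i-1}$ to first lift and then come back down.

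For the first statement, I would reduce to the case where the s-path from $d_i$ to $d'_i$ consists of a single thin semilattice edge $d_i\le d'_i$; the general case follows by iterating along each edge of the path. In this single-edge case, define $d'_j=d_j\cdot d'_i$ for every $j\ne i$. That $d_j\sqq d'_j$ is immediate from Proposition~\ref{pro:good-operation}, and at position $i$ we have $d_i\cdot d'_i=d'_i$ by the definition of $\le$. Since $\rela$ is a tolerance, in particular a reflexive subalgebra of $\zA^2$, applying $\cdot$ componentwise to $(d_j,d_{j+1})\in\rela$ and $(d'_i,d'_i)\in\rela$ yields $(d'_j,d'_{j+1})\in\rela$, as required.

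For the ``moreover'' clause, I would proceed in two stages. First, apply the first statement to the one-step $\rela$-path $(d_{i-1},d_i)$ with the extension $d_i\sqq d'_i$ at position $1$; this produces $c\in\zA$ with $d_{i-1}\sqq c$ and $(c,d'_i)\in\rela$. Since $d_{i-1}$ is maximal, there is an s-path back from $c$ to $d_{i-1}$, and a further application of the first statement to $(c,d'_i)$ with the extension $c\sqq d_{i-1}$ at position $0$ yields $d''_i$ with $d'_i\sqq d''_i$ and $(d_{i-1},d''_i)\in\rela$. Second, set $d''_j=d_j$ for $j<i$ (the required $\rela$-edges at positions $j<i-1$ are already given, and the edge from $i-1$ to $i$ was just constructed), and for $j>i$ iteratively apply the first statement to the one-step $\rela$-path $(d_{j-1},d_j)$ with the extension $d_{j-1}\sqq d''_{j-1}$ at position $0$, obtaining $d''_j$ with $d_j\sqq d''_j$ and $(d''_{j-1},d''_j)\in\rela$.

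The main subtlety lies in the ``moreover'' clause, where a naive single application of the first statement would also push the entries at positions $j<i$ away from $d_j$. The maneuver of lifting once and then invoking maximality of $d_{i-1}$ to return exactly to $d_{i-1}$ localizes all the motion at positions $\ge i$; in particular, the maximality hypotheses on $d_0,\dots,d_{i-2}$ are not actually used, since those coordinates are simply kept fixed throughout. No appeal to commutativity or associativity of $\cdot$ is made, which is good because the operation from Proposition~\ref{pro:good-operation} need not satisfy either.
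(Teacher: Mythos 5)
Your proof is correct and follows essentially the same route as the paper: the first statement is the same propagation of the s-path at position $i$ via $\cdot$ and the tolerance property, and the ``moreover'' clause uses the same lift-then-return maneuver at position $i-1$, exploiting only the maximality of $d_{i-1}$ (which the paper's proof also tacitly does). The only difference is bookkeeping---the paper reapplies the first-part construction globally to the whole tail $d_{i-1},\dots,d_k$ and then fixes the single bad $\rela$-edge, whereas you propagate the change edge by edge---and this is immaterial.
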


\begin{proof}
Let $d_i=d_i^1\le\ldots\le d_i^s=d'_i$ be an s-path from 
$d_i$ to $d'_i$. For each $j\in\{0,1\zd k\}-\{i\}$, we construct a 
sequence $d_j=d_j^1\le\ldots\le d_j^s$ by setting
\[
d_j^q=d_j^{q-1}\cdot d_i^q.
\]
Now, we prove by induction that $(d_j^q,d_{j+1}^q)\in\rela$ for all
$q\in[s]$ and $j\in\{0,1\zd k-1\}$. For $q=1$ it follows from the 
assumptions of the lemma. If $(d_j^q,d_{j+1}^q)\in\rela$, then 
$(d_j^{q+1},d_{j+1}^{q+1})=
(d_j^q\cdot d_i^{q+1},d_{j+1}^q\cdot d_i^{q+1})\in\rela$, since
$\rela$ is a tolerance. The first part of the lemma is proved.

For the second statement we apply the same construction, but only to 
elements $d_{i-1}\zd d_k$, and do not change $d_0\zd d_{i-2}$, that 
is, we set $d'_j=d_j$ for $j\in\{0\zd i-2\}$. Then, as before, we have 
$(d'_j,d'_{j+1})\in\rela$ for $j\in\{i-1\zd k\}$ and for $j\in\{0\zd i-3\}$. 
However, there is no guarantee that $(d'_{i-2},d'_{i-1})\in\rela$. 
To remedy this we again apply the same construction as follows. Since 
$d_{i-1}$ is a maximal element and $d_{i-1}\sqq d'_{i-1}$, it also holds
$d'_{i-1}\sqq d_{i-1}$. Let $d'_{i-1}=e_{i-1}^1\le\dots\le 
e_{i-1}^\ell=d_{i-1}$ be an s-path connecting $d'_{i-1}$ to $d_{i-1}$.
We set $d''_j=d_j$ for $j\in\{0\zd i-2\}$, and $e^1_j=d'_j$, 
$e^{q+1}_j=e^q_j\cdot e^{q+1}_{i-1}$ and $d''_j=e^\ell_j$ for
$j\in\{i-1\zd k\}$. Then, as before, $(d''_j,d''_{j+1})\in\rela$ for
$j\in\{i-1\zd k-1\}$ and for $j\in\{0\zd i-3\}$. However, since
$d''_{i-2}=d_{i-2}$ and $d''_{i-1}=d_{i-1}$, we also have 
$(d''_{i-2},d''_{i-1})\in\rela$. The result follows.
\end{proof}

\begin{corollary}\label{cor:going-maximal}
Let $\zA\in\cV$ and $\rela$ be a 
tolerance of $\zA$. Suppose that $(a,b)$, $a,b\in\zA$, 
belongs to the transitive closure of $\rela$, that is, there are 
$\vc d{k-1}$ such that $(d_i,d_{i+1})\in\rela$ for 
$i\in\{0,1\zd k-1\}$, where $d_0=a,d_k=b$.
If $a,b$ are maximal in
$\zA$, then there are $d'_0\zd d'_k$ such that $d'_0=a$, each
$d'_j$ is maximal in $\zA$, $(d'_j,d'_{j+1})\in\rela$ for 
$j\in\{0\zd k-1\}$, and $d'_k\in\see\zA b$.
\end{corollary}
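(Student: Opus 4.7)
\noindent\emph{Proof plan.} The plan is to iteratively promote the entries $d_1,d_2\zd d_{k-1}$ of the $\rela$-chain to maximal elements using the ``moreover'' clause of Lemma~\ref{lem:going-maximal}. Since the endpoints $a=d_0$ and $b=d_k$ are already maximal, only the intermediate vertices need adjustment, and the ``moreover'' clause is tailor-made to update them one at a time without disturbing the already-maximal prefix.

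The construction proceeds inductively. I maintain chains $d^{(i)}_0\zd d^{(i)}_k$ with $(d^{(i)}_j,d^{(i)}_{j+1})\in\rela$ satisfying, after stage $i$: (a)~$d^{(i)}_0\zd d^{(i)}_i$ are maximal in $\zA$, and (b)~$d^{(i)}_k\in\se b$. The base case $d^{(0)}_j:=d_j$ is immediate from the hypotheses. For the step from $i-1$ to $i$ (with $1\le i\le k-1$), pick any maximal $d'_i$ with $d^{(i-1)}_i\sqq d'_i$; such $d'_i$ exists because $\cG_s(\zA)$ has finitely many strongly connected components and every element $\sqq$-reaches a maximal one. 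Apply the moreover clause of Lemma~\ref{lem:going-maximal} to the chain $d^{(i-1)}_0\zd d^{(i-1)}_k$, whose maximal prefix has length~$i$ by the inductive hypothesis, together with the s-path witnessing $d^{(i-1)}_i\sqq d'_i$. The output chain $d^{(i)}_0\zd d^{(i)}_k$ agrees with the previous one on positions $<i$, satisfies $d'_i\sqq d^{(i)}_i$ and $d^{(i-1)}_j\sqq d^{(i)}_j$ for $j>i$, and is $\rela$-linked.

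The verification of (a) and (b) at each stage rests on a single observation: if $c$ is maximal and $c\sqq c'$, then $c'\in\se c$, for otherwise $\se{c'}$ would strictly dominate the maximal component $\se c$ in the poset of strongly connected components of $\cG_s(\zA)$. Applied to $d'_i\sqq d^{(i)}_i$, this yields $d^{(i)}_i\in\se{d'_i}$, so $d^{(i)}_i$ is maximal; applied to $d^{(i-1)}_k\sqq d^{(i)}_k$, it gives $d^{(i)}_k\in\se{d^{(i-1)}_k}=\se b$, which preserves (b).

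After $k-1$ stages, setting $d'_j:=d^{(k-1)}_j$ produces the desired chain: $d'_0=a$, the entries $d'_0\zd d'_{k-1}$ are maximal by~(a), $d'_k\in\se b$ is maximal because $b$ is, consecutive pairs lie in $\rela$, and $d'_k\in\se b$ by~(b). I do not anticipate a real obstacle beyond careful bookkeeping; the only delicate point is the SCC-stability observation above, which is what guarantees both that the previously promoted entries stay maximal and that the tail $d^{(i)}_k$ never escapes $\se b$ during the iterative update.
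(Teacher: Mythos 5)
Your proof is correct and follows essentially the same route as the paper's: an induction on the position index, at each step choosing a maximal element above the next non-maximal entry and invoking the ``moreover'' clause of Lemma~\ref{lem:going-maximal} to push the tail of the chain forward while freezing the already-maximal prefix. The only cosmetic differences are a shift in index bookkeeping (you run stages $1,\ldots,k-1$; the paper runs $0,\ldots,k$) and that you make explicit the SCC-stability observation (if $c$ is maximal and $c\sqq c'$ then $c'\in\se c$), which the paper leaves implicit; both are immaterial.
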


\begin{proof}
We show by induction on $i$ that there are $d'_0\zd d'_k$ such that
$d'_0=a$, $(d'_j,d'_{j+1})\in\rela$ for $j\in\{0\zd k-1\}$, 
$d'_k\in\see\zA b$ and $d'_0\zd d'_i$ are maximal. The base case of 
induction, $i=0$, follows from the conditions of the lemma. Suppose 
some $d'_j$s with the required properties exist for $i\in\{0,1\zd k-1\}$. 
Let $d^*_{i+1}$ be a maximal element with $d'_{i+1}\sqq d^*_{i+1}$. 
By Lemma~\ref{lem:going-maximal} there are $d''_{i+1}\zd d''_k$ 
such that $d^*_{i+1}\sqq d''_{i+1}$, $d'_j\sqq d''_j$ for 
$j\in\{i+2\zd k\}$, $(d'_i,d''_{i+1})\in\rela$,
and $(d''_j,d''_{j+1})\in\rela$ for $j\in\{i+1\zd k-1\}$. Elements
$d'_0\zd d'_i,d''_{i+1}\zd d''_k$ satisfy the required conditions.
\end{proof}

The next theorem is the main result of this section. Note that we consider smooth algebras first.

\begin{theorem}\label{the:connectivity}
Let $\zA\in\cK$. 
Any $a,b\in\max(\zA)$ (or $a,b\in\amax(\zA)$, or $a,b\in\umax(\zA)$)
are asm-connected. Moreover if $a,b\in\max(\zA)$ or 
$a,b\in\amax(\zA)$, they are connected by a special path. 
\end{theorem}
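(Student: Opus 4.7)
The goal is to establish a directed asm-path from $a$ to $b$, and, in the $\max(\zA)$ and $\amax(\zA)$ cases, to refine it to a special one. Corollary~\ref{cor:oriented} supplies some thin-edge connectivity between $a$ and $b$; the bulk of the work is to convert this into a genuine directed asm-path through vertices of the appropriate maximality type. The plan is to induct on the length of an initial thin-edge path $a = c_0, c_1, \ldots, c_m = b$. I maintain a prefix $d_0 = a, d_1, \ldots, d_j$ that is already a directed asm-path through maximal (respectively as-maximal or u-maximal) elements, together with a suffix still connecting $d_j$ to $b$. To extend by one step, use the next thin edge: if it is forward-directed, keep it (adjusting the new endpoint to a maximal element above it via the semilattice operation if needed, since maximality of $d_j$ propagates through thin semilattice edges by definition of $\se{\cdot}$); if not forward-directed, invoke the semilattice-push machinery of Lemma~\ref{lem:going-maximal} applied to a tolerance generated by the problematic edge, using the \emph{moreover} clause to keep the already-maximal prefix fixed while producing a forward-directed extension through maximals.

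For the special-path conclusion in the $\max$ and $\amax$ cases, I would ensure each thin majority edge in the final path is special by applying Lemma~\ref{lem:thin-semilattice}: given a majority edge with witness congruence $\th$, the lemma produces a special thin majority edge with endpoints in the same $\th$-classes, which is then used in place of the original. In the $\amax$ case, thin affine edges are also produced by the same lemma from affine edges, and together with the semilattice and special majority edges they form a special asm-path. No specialness is demanded in the $\umax$ case, so the first construction suffices there.

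The main obstacle is showing that the pushed-up pairs produced by Lemma~\ref{lem:going-maximal} remain thin edges of compatible types, rather than merely arbitrary members of the ambient tolerance. To control this, I would set up the tolerance on a per-edge basis (generated by a single thin edge, its reverse, and the diagonal), and then use Proposition~\ref{pro:good-operation} together with the identities of Lemma~\ref{lem:fgh-identities} and the operation-combination properties of Lemma~\ref{lem:thin-combination} to show that the semilattice multiplication involved in the lemma's argument preserves the edge type. Coordinating these local edge-preservation arguments with the global inductive construction, and simultaneously keeping majority edges special in the $\max$ and $\amax$ cases, is the most delicate part of the proof; the natural measure for the induction is the total length of the remaining suffix path, which strictly decreases at each stage.
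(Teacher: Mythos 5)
Your plan has a genuine gap at the step you yourself flag as the main obstacle: converting the push-up machinery of Lemma~\ref{lem:going-maximal} into preservation of thin-edge \emph{type}. That lemma keeps pairs inside a tolerance $\rela$ when you multiply along a semilattice path, but $\rela$ is just a reflexive symmetric subalgebra of $\zA^2$; membership of $(d'_j,d'_{j+1})$ in $\rela$ carries no information about whether $d'_jd'_{j+1}$ is a thin semilattice, majority, or affine edge, and in general it need not be any of these. Your proposed remedy --- take $\rela$ to be generated by a single thin edge and its reverse, then invoke Proposition~\ref{pro:good-operation}, Lemma~\ref{lem:fgh-identities}, and Lemma~\ref{lem:thin-combination} --- does not close this hole: those results supply operations that behave well on \emph{given} thin edges, but none of them certifies that the translate of a thin edge under $x\mapsto x\cdot e$ is again a thin edge. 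The defining conditions of thin majority and thin affine edges are subalgebra-membership statements (the conditions (*) and (**)) that are not stable under such translations, and there is also no reason for the direction of a translated edge to agree with what your directed-prefix construction requires.

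There is a second, structural problem: your induction on the length of the remaining suffix never reduces the size of the ambient algebra, so it has no leverage in the hard case where $\Sg{a,b}=\zA$ and every oriented thin-edge path from $a$ to $b$ dips through non-maximal elements. The paper's proof is built around precisely this obstacle: it inducts on $|\zA|$, reduces to $\zA=\Sg{a,b}$ simple, analyses the subalgebra $\rel_{ab}\le\zA^2$ generated by $(a,b)$ and $(b,a)$, and splits into the cases where $\rel_{ab}$ is or is not the graph of an automorphism, using the link tolerance to find a chain of proper subalgebras through which $a$ and $b$ are connected and then recursing into those. Lemma~\ref{lem:going-maximal} appears there to push such a chain through maximal elements while staying in the tolerance --- not to manufacture thin edges --- and the thin-edge structure is obtained from the size induction, Theorem~\ref{the:connectedness}, Lemma~\ref{lem:thin-semilattice}, and the further analysis of automorphisms, v-connectivity, and the depth of non-maximal elements. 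Without an analogue of this size reduction and case analysis your local, one-edge-at-a-time argument cannot get off the ground.
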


\begin{proof}
We start by showing asm-connectivity by a special path (most of the
time we will not mention that we are looking for a special path, except
when it is essential) for maximal elements, so let
$a,b\in\max(\zA)$. We proceed by induction on the size of $\zA$ 
through a sequence of claims. In the base case of induction, when 
$\zA=\{a,b\}$, elements $a,b$ are connected by a thin majority 
or affine edge, as they are both maximal, and the claim is 
straightforward.

\smallskip

{\sc Claim 1.}
$\zA$ can be assumed to be $\Sg{a,b}$.

\smallskip

If $a,b\in\max(\zB)$, $\zB=\Sg{a,b}$, and $\zB\ne\zA$, then we are 
done by the induction hypothesis. Suppose one of them is not 
maximal in $\zB$, and let 
$c,d\in\max(\zB)$ be such that $a\sqq c$ and $b\sqq d$. By the 
induction hypothesis $c$ is asm-connected to $d$. As $a\sqq c$, $a$ is 
asm-connected to $c$. It remains to show that $d$ is asm-connected to $b$. 
This, however, follows straightforwardly from the assumption that 
$b$ is maximal. Indeed, it implies $d\in \see\zA b$, and hence $d\sqq b$ 
in $\zA$.

\smallskip

{\sc Claim 2.}
$\zA$ can be assumed to be simple.

\smallskip

Suppose $\zA$ is not simple and $\al$ is its maximal congruence. Let
$\zB=\zA\fac\al$. By the induction hypothesis $a\fac\al$ is asm-connected to
$b\fac\al$ with a special path, that is, there is a sequence
$a\fac\al=\oa_0,\oa_1\zd \oa_k=b\fac\al$ such that $\oa_i\le\oa_{i+1}$
or $\oa_i\oa_{i+1}$ is a thin affine or special majority edge in $\zB$.
By Corollary~\ref{cor:quotient-path}(1) there is a special path 
$\vc ak$ such that $a_1=a$ and $a_i\in\oa_i$ in $\zA$. 

It remains to show that $a_k$ is asm-connected to $b$. Since $b$ is 
maximal, it suffices to take elements $a',b'$ maximal in $b\fac\al$ 
and such that $a_k\sqq_{\oa_k} a'$ and $b\sqq_{\oa_k} b'$. 
Then $a_k\sqq_{\oa_k} a'$, element $a'$ is asm-connected to $b'$ 
in $\oa_k$ by the induction hypothesis, and $b'$ is 
asm-connected to $b$ in $\zA$, as $b'\in \see\zA b$.

\smallskip

{\sc Claim 3.}
$\Sg{a,b}$ can be assumed to be equal to $\Sg{a',b'}$ for any 
$a'\in\see\zA a$, $b'\in\see\zA b$.

\smallskip

If $\Sg{a',b'}\subset \Sg{a,b}$ for some $a'\in \see\zA a$, 
$b'\in\see\zA b$, then by the induction hypothesis $a''$ is 
asm-connected to $b''$ for some $a''\in \see\zA a$, $b''\in \see\zA b$. 
Therefore $a$ is also asm-connected to $b$.

\smallskip

From now on we assume $\zA$ and $a,b$ to satisfy all the conditions of 
Claims~1--3.

We say that elements $c,d\in\zA$ with $c\in\max(\zA)$ are connected by proper 
subalgebras of $\zA$ if there are subalgebras $\vc\zB\ell$ such
that $\zB_i\ne\zA$ for $i\in[\ell]$, $c\in \zB_1$, $d\in\zB_\ell$, and 
$\zB_i\cap\zB_{i+1}\cap\max(\zA)\ne\eps$ for $i\in[\ell-1]$.

\smallskip

{\sc Claim 4.}
Let $\rel_{ab}$ be the 
subalgebra of $\zA^2$ generated by $(a,b),(b,a)$.
Then 
either $\rel_{ab}$ is the graph of an automorphism $\vf$ of $\zA$ 
such that $\vf(a)=b$, $\vf(b)=a$, or $a,b $ are connected by 
proper subalgebras of $\zA$.

\smallskip

Suppose that $\rel_{ab}$ is not the graph of a mapping, or, 
in other words, there is no automorphism of $\zA$ that maps 
$a$ to $b$ and $b$ to $a$. We consider the link tolerance 
$\rela=\tol_1(\rel_{ab})$. Since $\zA$ is simple and $\rel$ is 
not the graph of a mapping, the transitive closure of $\rela$ 
is the full relation. Suppose first that for every $e\in\zA$ the set 
$\rel^{-1}_{ab}[e]=\{d'\mid (d',e)\in\rel\}$ (which is a 
subalgebra of $\zA$) does not equal $\zA$. There are $\vc ek\in\zA$ 
such that the subalgebras $\vc\zB k$, $\zB_i=\rel^{-1}_{ab}[e_i]$ 
are such that $a\in \zB_1$, $b\in\zB_k$, and 
$\zB_i\cap\zB_{i+1}\ne\eps$ for every $i\in[k-1]$. 
Choose $d_i\in\zB_i\cap\zB_{i+1}$ for $i\in[k-1]$ and note that 
$(d_i,d_{i+1})\in\rela$ and also $(a,d_1),(d_{k-1},b)\in\rela$.
By Corollary~\ref{cor:going-maximal} it is possible to choose 
$\vc{d'}k\in\zA$ such that $(a,d'_1)\in\rela$, 
$(d'_i,d'_{i+1})\in\rela$ for $i\in[k-1]$, all the $d'_i$ are maximal, and
$b\sqq b'=d'_k$. 
The conditions $(a,d'_1)\in\rela$, $(d'_i,d'_{i+1})\in\rela$ mean
that there are $e'_0,\vc{e'}k\in\zA$ such that 
$d'_i,d'_{i+1}\in\rel^{-1}[e'_i]$ for $i\in\{2\zd k\}$ and 
$c,d'_1\in\rel^{-1}_{ab}[e'_1]$. Therefore elements $a,b'$ are connected 
by proper subalgebras of~$\zA$. Since $b$ is maximal, $b'\sqq b$, and so $b'$ is connected to $b$ with proper subalgebras as well, which are the thin semilattice edges in the path from $b'$ to $b$. 

Suppose that there is $e\in\zA$ such that $\zA\tm\{e\}\sse\rel$.
If $e\not\in\max(\zA)$, choose $e'\in\max(\zA)$ with $e\sqq e'$.
By Corollary~\ref{cor:product-path} the path from $e$ to $e'$ can
be extended to paths from $(a,e),(b,e)$ to some $(a',e'),(b',e')$,
respectively. Then, as $a\sqq a'$ and $a\in\max(\zA)$, we also have
$a'\sqq a$. The path from $a'$ to $a$ can again be extended to 
a path from $(a',e')$ to $(a,e'')$ for some $e''$. Also, the path from 
$e'$ to $e''$ can be extended to a path from $(b',e')$ to $(b'',e'')$.
As is easily seen, $b''\in\se b$ and $e''$ is maximal.
Since $\zA=\Sg{a,b''}$ by Claim~3, we have 
$\zA\tm\{e''\}\sse\rel$. Thus, $e$ can be assumed to be maximal. 
We have therefore $(a,b),(a,e),(b,e),(b,a)\in\rel$. If both 
$\Sg{b,e}$ and $\Sg{e,a}$ are proper subalgebras of $\zA$, 
then $a,b$ are connected by proper subalgebras of $\zA$. Otherwise 
suppose $\Sg{b,e}=\zA$. This means $\{a\}\tm\zA\sse\rel_{ab}$, and, 
in particular, $(a,a)\in\rel_{ab}$. Therefore there is a binary term 
operation $f$ such that $f(a,b)=f(b,a)=a$. This means $b\le a$
or $a\le b$, where the latter case is possible if there is also 
another semilattice operation $f'$ on $\{a,b\}$ with $f'(a,b)=f'(b,a)=b$,
and this operation is picked for the relation $\le$. 
If $\Sg{a,e}=\zA$ then by a similar argument we get $a\le b$ or $b\le a$. However, $a\le b$ or $b\le a$ is an impossibility, because in this case $\zA=\{a,b\}$, and only one of $a,b$ is maximal.

\smallskip

{\sc Claim 5.}
Assuming the induction hypothesis, if $c\in\max(\zA)$ and $d\in\zA$ are
connected by proper subalgebras, then $c$ and $d'$ for some $d'$, $d\sqq d'$,
are asm-connected.

\smallskip

There are proper subalgebras $\vc\zB k$ of $\zA$ such that $c\in \zB_1$, 
$d\in\zB_k$, and $\zB_i\cap\zB_{i+1}\cap\max(\zA)\ne\eps$ for every 
$i\in[k-1]$. Let $e_i\in B_i\cap B_{i+1}$ be an element maximal in $\zA$. 
For each $i\in[k-1]$ choose $c_i,c'_i\in\max(\zB_i)$ with 
$e_{i-1}\sqq_{\zB_i} c_i$ and $e_i\sqq_{\zB_i} c'_i$. By the induction 
hypothesis $c_i$ is asm-connected to $c'_i$ (in $\zB_i$). Then clearly 
$e_{i-1}$ is asm-connected to $c_i$, and, as $e_i$ is maximal in $\zA$ and 
$c'_i\in\see\zA{e_i}$, $c'_i$ is asm-connected to $e_i$, as well. The element 
$d'$ is then any with $d'\in\max(\zB_k)$ and $d\sqq d'$.

\smallskip

Let $\rel=\rel_{ab}$. We consider two cases.

\smallskip

{\sc Case 1.}
$\rel$ is not the graph of a mapping.

\smallskip

By Claim~4 
$a,b$ are connected by 
proper subalgebras. By Claim~5 $a$ is asm-connected to some $b'\in\see\zA b$,
which is s-connected to $b$.

\smallskip

{\sc Case 2.}
$\rel$ is the graph of a mapping, or, in other words, there is an
automorphism of $\zA$ that maps $a$ to $b$ and $b$ to $a$.

\smallskip

We again consider several cases.

\smallskip

{\sc Subcase 2a.}
There is no nonmaximal element $c\le a'$ or $c\le b'$ for any
$a'\in\see\zA a$, $b'\in\see\zA b$. In other words, whenever $c\sqq a$
or $c\sqq b$, we have $c\in\see\zA a$ or $c\in\see\zA b$.

\smallskip

Suppose first that there is a maximal $d$ such that $d'\le d$ for some nonmaximal
$d'\in\zA$ (that is, $\zA\ne\max(\zA)$). In this case there is no automorphism of $\zA$ that swaps $a$ and $d$ or $b$ and $d$, because there is no nonmaximal $a'$ [and $b'$] with $a'\le a$ [respectively, $b'\le b$], while $d'\le d$. Therefore we are in the conditions of Case~1, and $a$ is asm-connected to $d$ and $d$ is 
asm-connected to $b$. 

Suppose all elements in $\zA$ are maximal.
By Theorem~\ref{the:connectedness} there are
$a=a_1,a_2\zd a_k=b$ such that for any $i\in[k-1]$ the pair
$a_ia_{i+1}$ or $a_{i+1}a_i$ is a semilattice, affine or majority 
edge (not a thin edge). We need to show that 
$a_i$ is asm-connected
to $a_{i+1}$. Let $\th$ be a congruence of $\zB=\Sg{a_i,a_{i+1}}$
witnessing that $a_ia_{i+1}$ or $a_{i+1}a_i$ is a semilattice, affine, 
or majority edge. Except for the case when $a_{i+1}a_i$ is a 
semilattice edge, by Lemma~\ref{lem:thin-semilattice} there is 
$b\in \zC=a_{i+1}\fac\th$ such that $a_ib$ is a
thin edge. Then take $c,d\in\max(\zC)$ such that $b\sqq_\zC c$
and $a_{i+1}\sqq_\zC d$. By the induction hypothesis $c$ is 
asm-connected to $d$ in $\zC$. Finally, as all elements in $\zA$ are 
maximal, $d$ is asm-connected with $a_{i+1}$ in $\zA$ by a 
semilattice path. If $a_{i+1}a_i$ is a semilattice edge, by 
Lemma~\ref{lem:thin-semilattice} there is $b\in\zD=a_i\fac\th$ such 
that $a_{i+1}\le b$. Since $b\in\see\zA{a_{i+1}}$, there is a 
semilattice path from $b$ to $a_{i+1}$. Then again we choose
$c,d\in\max(\zD)$ with $b\sqq_\zD c$ and $a_i\sqq_\zD d$.
By the induction hypothesis $d$ is asm-connected to $c$.
Since $a_i\sqq_\zD d$ and $c\in\see\zA b$, element $a_i$ is
asm-connected to $a_{i+1}$. Subcase~2a is thus completed.

\smallskip

{\sc Claim 6.}
Let $c\le b$, $c\not\in\max(\zA)$. Then 
$a$ is asm-connected to an element $d$ such that $c\sqq d$. 

\smallskip

If $\Sg{a,c}=\zB\ne\zA$, then there are $a',d\in\zB$ such that 
$a\sqq a'$ and $c\sqq d$ and by the induction hypothesis $a'$ 
is asm-connected to $d$. So, assume that $\Sg{a,c}=\zA$.

Consider the relation $\rel=\rel_{ac}$, that is, the binary relation 
generated by $(a,c)$ and $(c,a)$. Since $a$ is maximal and $c$
is not, there is no automorphism of $\zA$ that swaps $a$ and $c$.
Therefore $\tol_1(\rel)$ is a nontrivial tolerance, in particular,
$(a,b)$ is in its transitive closure. If $\rel^{-1}[e]=\zA$ for no $e\in\zA$,
by Corollary~\ref{cor:going-maximal} $a$ and $b'$ for some 
$b'\in\see\zA b$ are connected by proper subalgebras. By the induction 
hypothesis and Claim~5 $a$ and $b'$ are asm-connected, thus taking 
$d$ to be $b'$ we obtain the claim. So, suppose there is $e\in\zA$ 
such that $\zA\tm\{e\}\sse\rel$. 

By Corollary~\ref{cor:product-path} there are $a'\in\see\zA a$ and 
$e''\in\max(\zA)$ such that $(a',e'')\in\rel$ and $(a,e)\sqq(a',e'')$ in $\rel$. 
Again by Corollary~\ref{cor:product-path} there is $e'\in\max(\zA)$ 
such that $(a,e')\in\rel$ and $(a,e)\sqq(a',e'')\sqq(a,e')$ in $\rel$. Let 
$(a,e)=(a_1,e_1)\le\dots\le(a_k,e_k)=(a,e')$ in $\rel$. 
Consider the sequence $c=c_1\zd c_k$ given by $c_{i+1}=c_i\cdot a_{i+1}$  
for $i\in[k-1]$.
Then, since $(c,e)\in\rel$, we have $(c_1,e_1)\in\rel$, and 
\[
\cl{c_{i+1}}{e_{i+1}}=\cl{c_i}{e_i}\cdot\cl{a_{i+1}}{e_{i+1}}\in\rel,
\qquad \text{for $i\in[k-1]$}
\]
implies that $(a,e'),(c',e')\in\rel$, where $c'=c_k$ and $e'=e_k$. 

We consider several cases. First, suppose that $\zB=\Sg{a,c'}\ne\zA$. 
Then let $a'',c''\in\max(\zB)$ be such that $a\sqq_\zB a''$,
$c'\sqq_\zB c''$. By the induction hypothesis $a''$ is asm-connected 
to $c''$, and, hence, $a$ is asm-connected to $c''$. Since $c\sqq c''$,
take $c''$ for $d$ and the result follows.

Suppose $\Sg{a,c'}=\zA$. Then $(c,e')\in\rel$, as $(a,e'),(c',e')\in\rel$. 
Therefore $(a,c),(a,e'),(c,e'),(c,a)\in\rel$.
If both $\zB_1=\Sg{a,e'}$ and $\zB_2=\Sg{e',c}$ are not equal to $\zA$, 
then $a$ and $c$ are connected with proper subalgebras. By the 
induction hypothesis and Claim~5 $a$ and $d$ are asm-connected 
some $d$ with $c\sqq d$. 
The result holds in this case as well.

Suppose $\Sg{a,e'}=\zA$ or $\Sg{e',c}=\zA$. Then 
$(c,c)\in\rel$ or $(a,a)\in\rel$, which means there is a 
binary operation $f$ such that $f(a,c)=f(c,a)=c$ or 
$f(a,c)=f(c,a)=a$, implying $ac$ is a thin semilattice edge.
Since $c$ is not maximal, we have $c\le a$, and the result follows.

\smallskip

Elements $a,b$ are said to be \emph{v-connected} if there is
$c\in\Sg{a,b}$ such that $c\sqq a$ and $c\sqq b$. 

\smallskip 

{\sc Subcase 2b.}
Elements $a,b$ are v-connected. 

\smallskip

Recall that there is an automorphism of $\zA$ that swaps $a$ and $b$.
The depth of element $c$, denoted $\dep(c)$, is defined to be the 
maximal number of 
s-components on an s-path $c=c_1\le\dots\le c_k$ such that $c_k$ is maximal. 
We proceed  by induction on the size of $\Sg{a,b}$ and minimal $\dep(c)$, 
such that $c\sqq a$, $c\sqq b$. Let $c=a_1\le a_2\le\dots\le a_k=a$ and 
$c=b_1\le b_2\le\ldots\le b_m=b$. 

Suppose first that $k=m=2$, that is, $c\le a$, $c\le b$.
As $c\in\Sg{a,b}$, there is a binary
term operation $f$ such that $f(a,b)=c$. Let $d=f(b,a)$. Since there
is an automorphism swapping $a$ and $b$, $d\le a$ and $d\le b$.
Set $r(x,y,z)=(f(y,x)\cdot f(y,z))\cdot f(x,z)$. We have
\begin{eqnarray*}
r(a,a,b) &=& (ac)c=a,\\
r(a,b,a) &=& (dd)a=a,\\
r(b,a,a) &=& (ca)d=a.
\end{eqnarray*}
Since $a$ and $b$ are swapped by an automorphism, $r$ is a majority operation on
$\{a,b\}$. Therefore, $ab$ is not only a thin majority edge,
but also a majority edge, as is witnessed by the equality relation. 
Since $\zA$ is smooth, $\Sg{a,b}=\{a,b\}$, and this case is in fact
impossible.

Next we consider two cases that cover both the base and inductive steps. 
Let $k>2$ or $m>2$. We may assume $m>2$ 
and $d=b_{m-1}$ to be a nonmaximal element. Indeed, if $b_{m-1}\in\see\zA b$ then we can replace $b$ with $b_{m-1}$.

\smallskip

{\sc Subsubcase I.} $c\not\in\see\zA d$.\\
Suppose the result is proved for all algebras and pairs of elements
v-connected through an element of depth less than $\dep(c)$. 
Note that $\dep(d)<\dep(c)$, because
$d$ is on an s-path from $c$ to a maximal element. 
First we observe that there is $e\in\max(\zA)$ such that $a$ is asm-connected to $e$ and $d\sqq_\zA e$. By Claim~6
$a$ is asm-connected to an element $d'$ such that $d\sqq_\zA d'$. 
Then choose $e\in\max(\zA)$ such that $d'\sqq_\zA e$; clearly $a$ is asm-connected to $e$. 
Elements $e$ 
and $b$ are v-connected through $d$. Let $\zC=\Sg{e,b}$. If $\zC=\zA$, the result follows 
by the induction hypothesis, since $\dep(d)<\dep(c)$. Otherwise the induction hypothesis does not apply directly, as $e,b$ may not be maximal in $\zC$ and $\dep(d)$ may change.
We choose $e',b'\in\max(\zC)$ and such that
$e\sqq_\zC e'$, $b\sqq_\zC b'$. By the induction hypothesis 
$e',b'$ are asm-connected in $\zC$, which implies $e,b$ are 
asm-connected in $\zA$, since $e',b'$ belong to $\see\zA e$ and $\see\zA b$, respectively.  

\smallskip

{\sc Subsubcase II.} $c\in\see\zA d$. This covers in particular the base case.\\
Let $\zB=\Sg{a,d}$. 
We show that there is $e\in\zB$ such that $a$ is asm-connected to $e$ and
either $e\in\see\zA d$ (this includes $e=d$) or $d\le e$. Observe that this
implies that $a$ is asm-connected to $b$. Indeed, if $e\in\see\zA d$ then $a$
is asm-connected to $e$, which is asm-connected to $b$. Note that 
this includes the case $e=a$. Otherwise if 
$e\in\max(\zA)$, as $d\le e$ and $d\le b$, we argue as in the beginning of 
Subcase~2b. Suppose $e\not\in\max(\zA)$. Take any $e'\in\max(\zA)$ 
with $e\sqq_\zA e'$, $a$ is asm-connected to $e'$. For $b$ and $e'$ we consider 
connections $d\sqq b$ and $d\le e\le\dots\le e'$. Since 
$e\not\in\see\zA d$, we have $\dep(e)<\dep(d)$, and we are in the 
conditions of Subsubcase~I (with $d$ playing the role of $c$, $e'$ and $b$ playing the role of $b$ and $a$, respectively and some element on the s-path from $e$ to $e'$ playing the role of $d$). 

Suppose first that for some $a'\in\see\zA a$, $\Sg{a',d}\ne\zA$. We may assume that $a'=a$. Then by the induction hypothesis, $a$ is asm-connected to all maximal elements of $\zB$. 
Let $\zC$ be a minimal subalgebra of $\zB$ of the form $\Sg{d,e''}$, where 
$a$ is asm-connected to $e''$ and $d\sqq_\zB e''$. Take $e\in\max(\zC)$ such that
$d\sqq_\zC e$. Then by the inductive hypothesis 
$a$ is asm-connected to $e$. If $e\in\see\zA d$, we are done. Otherwise 
$\Sg{d,e}=\zC$ by the choice of $e''$. We show that $d\le e$. Consider the relation 
$\relo$ generated by $(e,d),(d,e)$. It suffices to show that $(e,e)\in\relo$. 
Let $d=d_1\le d_2\le\dots\le d_x=e$ be an s-path in $\zC$, and 
$(d,e)=(d_1,e_1)\le(d_2,e_2)\le\dots\le(d_x,e_x)$ its extension in $\relo$. Then
$d_x=e$, $e_x\in\see\zC e$. By the choice of $\zC$ and $e$, $e\in\Sg{d,e_x}=\zC$. 
Therefore the pairs $(e,d),(e,e_x)\in\relo$ generate $(e,e)$.

Suppose now that $\Sg{a',d}=\zA$ for all $a'\in\see\zA a$. Then $d\sqq_\zB a$, 
because $d\sqq_\zA c$, and by the above argument $d\le a$.

{\sc Subcase 2c.}
Elements $a,b$ are not v-connected

\smallskip

Note first that we may assume that, for any $b'\in\see\zA b$, there is an
automorphism that sends $b'$ to $a$ and $a$ to $b'$, as otherwise we
are in the conditions of Case~1. Recall that we also assume
$\Sg{a,b'}=\zA$. Because of this and the automorphism swapping $a$
and $b$, without loss of generality we may assume that there is
nonmaximal $c\le b$. 
By Claim~6 
$a$ is asm-connected to some $d$ such that $d$ is v-connected with $b$. 
The result follows from Subcase~2b.

\medskip

In the remaining statements of the theorem, when $a,b\in\amax(\zA)$ or
$a,b\in\umax(\zA)$, we let $a',b'\in\zA$ be maximal elements of 
$\zA$ such that $a\sqq a'$ and $b\sqq b'$. Then by what is proved 
above $a'$ is asm-connected to $b'$, and so $a$ is asm-connected to $b'$. 
Finally, as $b'\in\as(b)$ [respectively, $b'\in\umax(\zA)$],
we have $b'\sqq_{as}b$ [respectively, $b'\sqq_{asm}b$], and
$b'$ is connected to $b$. Note that in the case of u-maximal elements 
the path showing that $b'\sqq_{asm}b$ cannot be assumed to be special.
\end{proof}

Theorem~\ref{the:connectivity} can be extended to algebras from the 
variety $\cV$ generated by $\cK$.

\begin{corollary}\label{cor:var-connectivity}
Let $\zA\in\cV$. 
Any $a,b\in\max(\zA)$ (or $a,b\in\amax(\zA)$, or $a,b\in\umax(\zA)$)
are asm-connected. 
\end{corollary}

\begin{proof}
We need to prove that the property in the corollary is true for subalgebras of 
direct products of algebras from $\cK$ and is preserved by taking quotient 
algebras.

Let $\zA\in\cV$ be such that any $a,b\in\max(\zA)$ (or $a,b\in\amax(\zA)$, or 
$a,b\in\umax(\zA)$) are asm-connected, and $\th\in\Con(\zA)$. Take any 
maximal (as-maximal, u-maximal) $\oa,\ob\in\zA\fac\th$. 
By Corollary~\ref{cor:quotient-max}(1)
there are $a\in\oa,b\in\ob$ that are maximal (as-maximal, u-maximal) 
in $\zA$. Then there
is an asm-path $a=a_1\zd a_k=b$ connecting $a$ and $b$ in $\zA$. By
Corollary~\ref{cor:quotient-path}(1) $\oa=a_1\fac\th\zd a_n\fac\th=\ob$ is
an asm-path in $\zA\fac\th$.

Now let $\rel$ be a subalgebra of the direct product of $\vc\zA n\in\cK$. Since
$\cK$ is closed under taking subalgebras, we may assume that $\rel$ is subdirect.
We proceed by induction on $n$. The base case $n=1$ is given by 
Theorem~\ref{the:connectivity}, suppose the claim is true for $n-1$. Take
maximal (as-maximal, u-maximal) $\ba,\bb\in\rel$. By 
Corollary~\ref{cor:product-maximal}(2) 
$\ba'=\pr_{[n-1]}\ba,\bb'=\pr_{[n-1]}\bb$ are maximal (as-maximal, 
u-maximal) in 
$\pr_{[n-1]}\rel$. By the induction hypothesis $\ba',\bb'$ are connected with an 
asm-path $\ba'=\ba'_1\zd\ba'_k=\bb'$. By Corollary~\ref{cor:product-path}(1) 
this path can be expanded to an asm-path 
$\ba=(\ba'_1,\ba[n])=(\ba'_1,a_1)\zd(\ba'_k,a_k)=(\bb',a_k)$. Consider 
$B=\{b\mid (\bb',b)\in\rel\}$. This set contains $a_k$ and $\bb[n]$. 
Choose maximal elements $c,d\in B$ with $a_k\sqq_Bc$ and $\bb[n]\sqq_Bd$.
By Theorem~\ref{the:connectivity} $a_k$ is asm-connected to $d$ implying
that $\ba$ is asm-connected to $(\bb',d)$ in $\rel$. Then also $\bb\sqq(\bb',d)$,
and, as $\bb$ is maximal (as-maximal, u-maximal), $(\bb',d)\sqq\bb$ (respectively, 
$(\bb',d)\sqq^{as}\bb$, $(\bb',d)\sqq^{asm}\bb$.
\end{proof}

As by Theorem~\ref{the:connectivity} any u-maximal elements 
are connected by an asm-path, we have the following 

\begin{corollary}\label{cor:unique-umax}
Any algebra $\zA\in\cV$ has a unique u-maximal component.
\end{corollary}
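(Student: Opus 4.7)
The plan is to extract the corollary directly from Theorem~\ref{the:connectivity}. Recall that a u-maximal component is, by definition, a maximal (under the natural ordering on SCCs) strongly connected component of $\cG_{asm}(\zA)$, and an element is u-maximal precisely when it belongs to some u-maximal component. So the task reduces to showing that any two u-maximal elements lie in a common SCC of $\cG_{asm}(\zA)$.

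The argument is then a one-line application of Theorem~\ref{the:connectivity}. Given $a, b \in \umax(\zA)$, the theorem yields an asm-path from $a$ to $b$, hence $a \sqq^{asm} b$; applying the same theorem with the roles of $a$ and $b$ interchanged gives $b \sqq^{asm} a$. By the definition of $\cG_{asm}(\zA)$ and strong connectivity, $a$ and $b$ belong to the same strongly connected component, and therefore to the same u-maximal component.

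Since this was true for an arbitrary pair of u-maximal elements, all u-maximal components coincide, establishing uniqueness. There is essentially no obstacle here beyond invoking the theorem in both directions; the real work was done in proving Theorem~\ref{the:connectivity} itself (in particular ensuring the directed, not merely undirected, connectivity of u-maximal elements).
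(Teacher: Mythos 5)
Your argument is correct and is exactly the paper's own (one-line) justification: invoke Theorem~\ref{the:connectivity} to get asm-paths in both directions between any two u-maximal elements, which forces them into a single strongly connected component of $\cG_{asm}(\zA)$.
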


\section{Rectangularity}\label{sec:rectangularity}

In this section we prove a result that, on one hand, is a 
generalization of the results by the author 
\cite{Bulatov06:semilattice} (Lemma~3.5), \cite{Bulatov11:conservative} 
(Lemma~4.2, Proposition~5.4), and, on the other
hand, is analogous to the Rectangularity Lemma from \cite{Barto12:absorbing}.

We will need two auxiliary lemmas.

\begin{lemma}\label{lem:as-rectangularity} 
Let $\rel$ be a subalgebra of $\zA_1\tm\zA_2$, $\zA_1,\zA_2\in\cV$.\\
(1) Let $a,b\in\zA_1$, $c,d\in\zA_2$ 
be such that $(a,c),(a,d),(b,c)\in\rel$ and $ab$, $cd$ are thin edges that are not 
both thin majority edges. Then $(b,d)\in\rel$.\\ 
(2) Let $B=\rel[a]$. For any $b\in\zA_1$ such that $ab$ is thin edge, 
and any $c\in\rel[b]\cap B$, $\Filt^{as}_B(c)\sse\rel[b]$.
\end{lemma}

\begin{proof}
(1) Without loss of generality assume that $cd$ is not majority. Suppose that $ab$ 
is semilattice or majority, or $ab$ is affine and $cd$ is semilattice. Then by 
Lemma~\ref{lem:thin-combination-variety}(2) there is a term operation $p$ 
such that $p(a,b)=b$ and $p(d,c)=d$ (if both $ab$ and $cd$
are of the semilattice type then $p$ can be chosen to be $\cdot$). Then 
\[
\cl bd=p\left(\cl ad,\cl bc\right)\in\rel.
\]

If both $ab$ and $cd$ are affine, then by 
Lemma~\ref{lem:thin-combination-variety}(1)
there is a term operation $h'$ such that $h'(a,a,b)=b$ and 
$h'(d,c,c)=d$. Then
\[
\cl bd=h'\left(\cl ad,\cl ac,\cl bc\right)\in\rel.
\] 

(2) Let $D=\Filt^{as}_B(c)\cap\rel[b]$. The set $D$ is nonempty, as 
$c\in D$. If $D\ne \Filt^{as}_B(c)$, there are $b_1\in D$ 
and $b_2\in\Filt^{as}_B(c)-D$ such that $b_1b_2$ is a thin semilattice
or affine edge. By item (1) $(b,b_2)\in\rel$, a contradiction with the choice of $b_2$.
\end{proof}

\begin{lemma}\label{lem:buket}
Let $\rel$ be a subdirect product of algebras $\zA_1,\zA_2\in\cV$, let
$B_1,B_2$ be as-components (maximal components)
of $\zA_1,\zA_2$, respectively, and $a\in\zA_1$ such that 
$\rel\cap(B_1\tm B_2)\ne\eps$
and $\{a\}\tm B_2\sse\rel$. Then $B_1\tm B_2\sse\rel$.
\end{lemma}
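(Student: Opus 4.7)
The plan is to introduce
\[
S=\{x\in\zA_1 : \{x\}\tm B_2\sse\rel\},
\]
a subalgebra of $\zA_1$ (by idempotency) that contains $a$, and to prove $B_1\sse S$; this immediately gives $B_1\tm B_2\sse\rel$. The engine of the argument will be the following consequence of Lemma~\ref{lem:as-rectangularity}: if $x\in S$, $xy$ is a thin edge of $\zA_1$, and there is some $c\in\rel[y]\cap B_2$, then $y\in S$. Indeed, taking $B=\rel[x]\supseteq B_2$, the lemma yields $\Filt^{as}_B(c)\sse\rel[y]$; since $B_2$ is an as-component (respectively maximal component) and $c\in B_2\sse B$, an as-path (respectively s-path) from $c$ to any other element of $B_2$ stays inside $B_2\sse B$, so $B_2\sse\Filt^{as}_B(c)\sse\rel[y]$, i.e., $y\in S$.

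To feed the engine I first check that every element of $B_1$ has a partner in $B_2$. Start from $(b_1,b_2)\in\rel\cap(B_1\tm B_2)$; for arbitrary $c_1\in B_1$, take an as-path (resp.\ s-path) from $b_1$ to $c_1$ inside the strongly connected set $B_1$. By Corollary~\ref{cor:product-path}(1) this path lifts to an as-path (resp.\ s-path) in $\rel$ starting at $(b_1,b_2)$ and ending at some tuple $(c_1,z)$; by Lemma~\ref{lem:product-edge} the second coordinate traces an as-path (resp.\ s-path) in $\zA_2$ starting at $b_2\in B_2$, and because $B_2$ is a sink for as-connectivity (resp.\ s-connectivity), this coordinate never leaves $B_2$, hence $z\in B_2$.

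Now fix any $c_1\in B_1$ and choose $c_2'\in B_2$ with $(c_1,c_2')\in\rel$. The set $\rel^{-1}[c_2']$ is a subalgebra of $\zA_1$ (again by idempotency) containing both $a$ and $c_1$, so Corollary~\ref{cor:oriented} produces an oriented thin-edge path $a=x_0,x_1,\ldots,x_m=c_1$ inside this subalgebra. Every vertex $x_i$ satisfies $(x_i,c_2')\in\rel$, so $c_2'\in\rel[x_i]\cap B_2$; each edge $x_ix_{i+1}$ is a thin edge in $\rel^{-1}[c_2']$, and hence also in $\zA_1$. Starting with $x_0=a\in S$ and applying the engine inductively along this path yields $c_1=x_m\in S$. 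The main obstacle is precisely this last step: because $a$ need not lie in $B_1$, a naive thin-edge path from $a$ to $c_1$ in $\zA_1$ might pass through vertices whose $\rel$-image misses $B_2$, which would break the induction; the trick of restricting to the fibre $\rel^{-1}[c_2']$ automatically forces every intermediate vertex to have $c_2'$ as a partner and removes this difficulty.
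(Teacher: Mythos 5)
Your overall strategy is the same as the paper's: define the set $S=\{x\in\zA_1:\{x\}\tm B_2\sse\rel\}$ (the paper calls the analogous set $C$), use Lemma~\ref{lem:as-rectangularity} as the ``engine'' to propagate membership in $S$ along thin edges, and try to reach every element of $B_1$ from $a$ by such a path. Your refinement of working inside the fibre $\rel^{-1}[c_2']$ so that every intermediate vertex automatically has the fixed $B_2$-partner $c_2'$ is a genuine and useful improvement: the hypothesis $c\in\rel[b]\cap B$ of Lemma~\ref{lem:as-rectangularity} needs exactly such a partner, and the paper's own induction over $\Filt^{asm}_{\zA_1}(a)$ is rather terse on this point.

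However, there is a real gap in the step where you invoke Corollary~\ref{cor:oriented}. That corollary only guarantees an \emph{oriented} path of thin edges between $a$ and $c_1$, i.e.\ a walk in which each step is a (directed) thin edge, but the walk is not asserted to respect the directions throughout. Indeed it cannot be a directed path in general: in the 2-element algebra $\{0,1\}$ with $0\cdot 1=1\cdot 0=1$, the only nontrivial thin edge is $0\to 1$, and there is no directed thin-edge path from $1$ back to $0$. Your engine is strictly one-directional --- Lemma~\ref{lem:as-rectangularity} says that from knowledge of $\rel[b']$ and a thin edge $b'\to b''$ you learn about $\rel[b'']$, not the other way around. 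So if the oriented path from $a$ to $c_1$ in $\rel^{-1}[c_2']$ traverses even one thin edge against its orientation, the induction cannot continue, and you have not shown $c_1\in S$.

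The paper avoids this by working with directed reachability: it shows the propagation along all of $\Filt^{asm}_{\zA_1}(a)$, and then uses the s-path from $a$ up to an as-maximal element together with Theorem~\ref{the:connectivity} (which produces genuinely directed asm-paths between maximal elements) to conclude $B_1\sse\Filt^{asm}_{\zA_1}(a)$. To repair your argument you would need to produce a \emph{directed} asm-path from $a$ to $c_1$ along which every vertex has a partner in $B_2$ --- for instance by combining the fibre restriction with Theorem~\ref{the:connectivity} applied to suitable maximal elements of $\rel^{-1}[c_2']$, which requires additional care --- rather than appealing to Corollary~\ref{cor:oriented}, which does not give direction control.
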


\begin{proof}
We prove the lemma for as-components; for maximal components 
the proof is nearly identical.

Let $(b,c)\in\rel\cap(B_1\tm B_2)\ne\eps$. For every 
$b'\in\zA'_1=\Sg{a,b}$ we have
$(b',c)\in\rel$. By Lemma~\ref{lem:as-rectangularity} this means that 
$\{b'\}\tm B_2\sse\rel$ for all $b'\in\Filt^{asm}_{\zA'_1}(a)$. Indeed,
let $C$ be the set of all elements $b'$ from $\Filt^{asm}_{\zA'_1}(a)$
such that $\{b'\}\tm B_2\sse\rel$. The set $C$ is nonempty, as $a\in C$.
If $C\ne\Filt^{asm}_{\zA'_1}(a)$, there are $b'\in C$ and 
$b''\in\Filt^{asm}_{\zA'_1}(a)-C$ such that $b'b''$ is a thin edge. Then 
by Lemma~\ref{lem:as-rectangularity} $\Filt^{as}_{\rel[b']}(c)\sse\rel[b'']$.
Since $B_2\sse\Filt^{as}_{\rel[b']}(c)$, we have a contradiction.

By Theorem~\ref{the:connectivity} there is an asm-path from $a$ to every 
maximal element from $\zA'_1$. Therefore, the set $\Filt^{asm}_{\zA'_1}(a)$ 
contains all the maximal elements of $\zA'_1$ 
including some $b'$ such that $b\sqq^{as}_{\zA'_1}b'$. Thus, $\{b'\}\tm B_2\sse\rel$ 
for some $b'\in B_1$. By Corollary~\ref{cor:product-path}(1) for every 
$b''\in B_1$ we have $\rel[b'']\cap B_2\ne\eps$. Then 
Lemma~\ref{lem:as-rectangularity} implies that $B_2\sse\rel[b'']$ for 
every $b''\in B_1$.
\end{proof}

\begin{prop}\label{pro:max-gen}
Let $\rel\le\zA_1\tm \zA_2$, $\zA_1,\zA_2\in\cV$, be a linked subdirect 
product and let 
$B_1,B_2$ be as-components (maximal components) of $\zA_1,\zA_2$, respectively, such that $\rel\cap(B_1\tm B_2)\ne\eps$. Then $B_1\tm B_2\sse\rel$.
\end{prop}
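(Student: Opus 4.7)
The plan is to reduce via Lemma~\ref{lem:buket} to producing some $a\in\zA_1$ with $\{a\}\tm B_2\sse R$, since then Lemma~\ref{lem:buket} together with the hypothesis $R\cap(B_1\tm B_2)\ne\eps$ yields the claim. Fix a witness $(b_0,c_0)\in R\cap(B_1\tm B_2)$.

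I would first establish the preliminary fact that $R\cap(B_1\tm B_2)$ projects onto $B_2$ (and symmetrically onto $B_1$): for every $c\in B_2$, since $B_2$ is an as-component there is an as-path from $c_0$ to $c$ in $\zA_2$, which by Corollary~\ref{cor:product-path}(1) lifts to an as-path in $R$ starting at $(b_0,c_0)$ and terminating at some $(a(c),c)\in R$; the first-coordinate projection of the lifted path is an as-path in $\zA_1$ starting at the as-maximal vertex $b_0$, so its endpoint $a(c)$ lies in $\as_{\zA_1}(b_0)=B_1$. In particular, for every $c\in B_2$ some fibre $R[a]$ with $a\in B_1$ contains $c$.

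For the main step I pick $a^{*}\in B_1$ maximizing $|R[a^{*}]\cap B_2|$ and argue by contradiction that $R[a^{*}]\supseteq B_2$. If some $c'\in B_2\setminus R[a^{*}]$ existed, the preliminary would yield $b'\in B_1$ with $(b',c')\in R$, and Theorem~\ref{the:connectivity} provides a special asm-path $a^{*}=d_0,d_1,\ldots,d_k=b'$ in $\zA_1$ consisting of thin edges. I then walk this path and, using linkedness of $R$ (that $\tol_1(R)$ is connected on $\zA_1$) together with a Corollary~\ref{cor:going-maximal}-style refinement, supply at each transition $d_id_{i+1}$ a common element $c_i\in R[d_i]\cap R[d_{i+1}]$ that can be routed to lie in $B_2$. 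Lemma~\ref{lem:as-rectangularity} then propagates $\Filt^{as}_{R[d_i]}(c_i)\sse R[d_{i+1}]$; because between two elements of a maximal as-component all as-paths in $\zA_2$ remain inside that component, the $B_2$-content carries across the transitions, ultimately producing some $R[d_j]$ whose intersection with $B_2$ strictly exceeds that of $R[a^{*}]$, contradicting the choice of $a^{*}$.

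The principal obstacle is the bridging step: Lemma~\ref{lem:as-rectangularity} demands a common element of adjacent fibres across a thin edge in $\zA_1$, whereas linkedness directly supplies common elements only across pairs in $\tol_1(R)$, which need not coincide with thin-edge transitions. Combining Corollary~\ref{cor:going-maximal} (to route the link-tolerance witnesses through as-maximal intermediaries) with the special-path guarantee of Theorem~\ref{the:connectivity} (to interpolate thin-edge asm-paths between consecutive link witnesses) so that at every thin-edge step a $B_2$-element can be produced in the intersection of adjacent fibres is where the technical heart of the argument lies.
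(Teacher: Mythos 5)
Your opening reduction is exactly right and matches the paper: by Lemma~\ref{lem:buket} it suffices to exhibit a single $a\in\zA_1$ with $\{a\}\tm B_2\sse\rel$ (or symmetrically); and your preliminary observation that every $c\in B_2$ is realized in some fibre $\rel[a]$ with $a\in B_1$, obtained via Corollary~\ref{cor:product-path}(1) and the as-maximality of $b_0$, is sound.

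The main step, however, has a gap that you yourself flag, and it is not cosmetic. You want to walk a special asm-path $a^*=d_0,\dots,d_k=b'$ supplied by Theorem~\ref{the:connectivity} and, at each thin edge $d_id_{i+1}$, invoke Lemma~\ref{lem:as-rectangularity} to push $B_2$-content from $\rel[d_i]$ to $\rel[d_{i+1}]$. But that lemma requires a common element $c_i\in\rel[d_i]\cap\rel[d_{i+1}]$, and nothing in your setup provides one: linkedness only says that the transitive closure of $\tol_1(\rel)$ is full, which gives intersecting fibres along \emph{some} chain of pairs $(e,e')\in\tol_1(\rel)$, not along the thin-edge steps of an asm-path. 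The two chains are unrelated — in particular, for a thin affine or majority edge $d_id_{i+1}$ there is no reason at all that $\rel[d_i]\cap\rel[d_{i+1}]\ne\eps$. Your proposal to ``interpolate'' thin-edge paths between consecutive tolerance witnesses does not resolve this: lifting a thin edge $d_id_{i+1}$ to $\rel$ via Lemma~\ref{lem:product-edge}(1) only produces a tuple $(d_{i+1},c^{i+1})$ with $(d_i,c^i)(d_{i+1},c^{i+1})$ a thin edge in $\rel$, and in general $c^i\ne c^{i+1}$ with neither lying in the other's fibre. Even where a common element does exist, a second gap remains: $\Filt^{as}_{\rel[d_i]}(c_i)$ consists only of elements reached by as-paths \emph{staying inside} the subalgebra $\rel[d_i]$, so it need not contain all of $\rel[d_i]\cap B_2$ unless $B_2\sse\rel[d_i]$ already, which is what one is trying to prove. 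Hence the extremal argument (``$\rel[d_j]$ meets $B_2$ in strictly more than $\rel[a^*]$'') is not forced.

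The paper's proof avoids both issues by a different strategy: it inducts on $|\zA_1|+|\zA_2|$ and uses linkedness to construct increasing chains of subalgebras $\zB_1\sse\zB_2\sse\dots$ of $\zA_1$ and $\zC_1\sse\zC_2\sse\dots$ of $\zA_2$ (starting from a singleton) until one hits the full factor; the penultimate stage yields a proper subalgebra $\zA'_1$ (or $\zA'_2$) such that the restriction $\rel\cap(\zA'_1\tm\zA_2)$ is still a linked subdirect product, to which the induction hypothesis applies. The result then follows from Lemma~\ref{lem:buket}, exactly as in your plan. The key advantage of the paper's route is that it never needs to carry as-component content across thin edges at all — it only needs the chain structure implied by linkedness and the closure of linkedness under restriction. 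If you want to salvage your approach you would need to supply a mechanism for producing the common elements $c_i$, and the natural way to do that is precisely to fall back to the linkedness chain and an induction on size, which is the paper's argument.
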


\begin{proof}
We prove by induction on the size of $\zA_1,\zA_2$ that for any 
as-components $C_1,C_2$ of $\zA_1,\zA_2$, respectively, such that 
$\rel\cap(C_1\tm C_2)\ne\eps$, there are 
$a_1\in\zA_1$, $a_2\in\zA_2$ 
such that $\{a_1\}\tm C_2\sse\rel$ and $C_1\tm\{a_2\}\sse\rel$.
The result then follows by Lemma~\ref{lem:buket}. The base
case of induction when $|\zA_1|=1$ or $|\zA_2|=1$ is obvious.

Take $b\in C_1$ and construct two sequences
of subalgebras $\vc\zB k$ of $\zA_1$ and $\vc\zC k$ of $\zA_2$, where
$\zB_1=\{b\}$, $\zC_i=\rel[\zB_i]$, and $\zB_i=\rel^{-1}[\zC_{i-1}]$, 
such that $k$ is the minimal number with $\zB_k=\zA_1$ or 
$\zC_k=\zA_2$. Such a number exists, because $\rel$ is linked. Observe 
that for each $i\le k$ the relation $\rel_i=\rel\cap(\zB_i\tm\zC_i)$ is linked. 
Therefore, there is a proper subalgebra $\zA'_1$ of $\zA_1$ or $\zA'_2$ 
of $\zA_2$ such that $\rel'=\rel\cap(\zA'_1\tm\zA_2)$ or 
$\rel'=\rel\cap(\zA_1\tm\zA'_2)$, respectively, is linked and subdirect.
Without loss of generality suppose there is $\zA'_1$ with the required 
properties. By the induction hypothesis for any as-component $C_2$ 
of $\zA_2$ there is $a_1\in\zA'_1\sse\zA_1$ with
$\{a_1\}\tm C_2\sse\rel'\sse\rel$. By Lemma~\ref{lem:buket} $C_1\tm C_2\sse\rel$,
and therefore for any $a_2\in C_2$ we have $C_1\tm\{a_2\}\sse\rel$.

The argument above also works in the case when $B_1,B_2$ are maximal components.
\end{proof}

\begin{corollary}\label{cor:linkage-rectangularity}
Let $\rel$ be a subdirect product of $\zA_1$ and $\zA_2$ from $\cV$, 
$\lnk_1,\lnk_2$ the link congruences, and let $B_1,B_2$ be 
as-components (maximal components) of an $\lnk_1$-block and an $\lnk_2$-block, respectively, 
such that $\rel\cap(B_1\tm B_2)\ne\eps$. Then $B_1\tm B_2\sse\rel$.
\end{corollary}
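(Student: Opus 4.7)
The plan is to reduce Corollary~\ref{cor:linkage-rectangularity} to Proposition~\ref{pro:max-gen} by restricting $\rel$ to a single pair of link-congruence blocks. Let $T_1$ be the $\lnk_1$-block of $\zA_1$ that contains $B_1$, and $T_2$ the $\lnk_2$-block of $\zA_2$ that contains $B_2$. Since $\lnk_1,\lnk_2$ are congruences, $T_1$ and $T_2$ are subalgebras of $\zA_1,\zA_2$ respectively, and $B_1,B_2$ are by assumption as-components of these subalgebras. Set $\rel'=\rel\cap(T_1\tm T_2)$. I would then verify that $\rel'$ is a \emph{linked} subdirect product of $T_1$ and $T_2$, and finally apply Proposition~\ref{pro:max-gen}.

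The technical heart of the argument is what I think of as ``link transport'': if $(a,b)\in\rel$ with $a\in T_1$, then automatically $b\in T_2$. To see this, fix $(a_0,b_0)\in\rel\cap(B_1\tm B_2)$, which exists by hypothesis. Given $(a,b)\in\rel$ with $a\in T_1$, we have $a\,\lnk_1\,a_0$, so there is a chain $a=a^{(0)},a^{(1)}\zd a^{(k)}=a_0$ with $(a^{(j)},a^{(j+1)})\in\tol_1$. By definition of $\tol_1$ there exist $c^{(j)}\in\zA_2$ with $(a^{(j)},c^{(j)}),(a^{(j+1)},c^{(j)})\in\rel$. Starting from $b$, we have $(a,b),(a,c^{(0)})\in\rel$, hence $b\,\tol_2\,c^{(0)}$; then $(a^{(1)},c^{(0)}),(a^{(1)},c^{(1)})\in\rel$ gives $c^{(0)}\,\tol_2\,c^{(1)}$, and so on, ending with $c^{(k-1)}\,\tol_2\,b_0$. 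Concatenating, $b\,\lnk_2\,b_0$, so $b\in T_2$.

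This single observation yields everything needed for $\rel'$. Subdirectness: for any $a\in T_1$, subdirectness of $\rel$ gives some $(a,b)\in\rel$, and the observation forces $b\in T_2$, so $(a,b)\in\rel'$; symmetrically for $T_2$. Linkedness: if $a,a'\in T_1$ and $(a,a')\in\tol_1(\rel)$, the witness $c$ satisfies $(a,c),(a',c)\in\rel$ with $a\in T_1$, so $c\in T_2$, which means $(a,a')\in\tol_1(\rel')$. Therefore $\tol_1(\rel')$ restricted to $T_1^2$ coincides with $\tol_1(\rel)\cap T_1^2$, whose transitive closure is $T_1^2$ since $T_1$ is a single $\lnk_1(\rel)$-block. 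Thus $\lnk_1(\rel')$ is the full congruence on $T_1$, and symmetrically for $\lnk_2(\rel')$ on $T_2$; hence $\rel'$ is linked.

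Finally, $\rel'\cap(B_1\tm B_2)=\rel\cap(B_1\tm B_2)\ne\eps$, so Proposition~\ref{pro:max-gen} applied to the linked subdirect product $\rel'\le T_1\tm T_2$ with as-components $B_1\sse T_1$ and $B_2\sse T_2$ yields $B_1\tm B_2\sse\rel'\sse\rel$, as required. The only real obstacle is the link-transport step above; everything else is a direct appeal to Proposition~\ref{pro:max-gen} once $\rel'$ has been set up.
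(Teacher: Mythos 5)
Your proof is correct and follows essentially the same route as the paper's: restrict $\rel$ to the product of the two linkage blocks, observe the restriction is a linked subdirect product, and apply Proposition~\ref{pro:max-gen}. The only difference is that the paper asserts without comment that the restriction is subdirect onto $T_1\tm T_2$ and linked, while you supply the careful ``link transport'' verification that those assertions actually hold.
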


\begin{proof}
Let $C_1,C_2$ be the $\lnk_1$- and $\lnk_2$-blocks containing $B_1$ 
and $B_2$, respectively, and $\relo=(C_1\tm C_2)\cap\rel$. By definition 
$\relo$ is a subdirect product of $C_1\tm C_2$, as $\rel$ is subdirect, 
and $\relo$ is linked. The result follows by Proposition~\ref{pro:max-gen}.
\end{proof}

\begin{prop}\label{pro:umax-rectangular}
Let $\rel$ be a subdirect product of $\zA_1,\zA_2\in\cV$, 
$\lnk_1,\lnk_2$ the link
congruences, and let $B_1$ be an as-component of an $\lnk_1$-block and
$B'_2=\rel[B_1]$; let $B_2=\umax(B'_2)$. Then $B_1\tm B_2\sse\rel$.
\end{prop}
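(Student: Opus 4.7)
The plan is to first establish $B_1\tm\amax(B'_2)\sse\rel$ via Corollary~\ref{cor:linkage-rectangularity}, and then propagate this along asm-paths in $B'_2$ to obtain the full rectangle $B_1\tm B_2\sse\rel$.

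First, let $C_1$ be the $\lnk_1$-block containing $B_1$ and $C_2=\rel[C_1]$ the matched $\lnk_2$-block, so that $\rel\cap(C_1\tm C_2)$ is a linked subdirect product of $C_1\tm C_2$. For each as-component $D_2$ of $C_2$ that meets $B'_2$, Corollary~\ref{cor:linkage-rectangularity} gives $B_1\tm D_2\sse\rel$, which forces $D_2\sse\rel[B_1]=B'_2$. A lifting argument using Corollary~\ref{cor:product-path}(1) inside $\rel\cap(C_1\tm C_2)$ shows that whenever $c\in B'_2$ and $c\sqq^{as}_{C_2}c'$, one has $c'\in B'_2$ and in fact $c\sqq^{as}_{B'_2}c'$: the lifted as-path starts at $(a,c)$ with $a\in B_1$ and its first-coordinate trajectory stays in $B_1$ because $B_1$ is an as-component of $C_1$. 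Consequently $\amax(B'_2)=\amax(C_2)\cap B'_2$ is precisely the union of those as-components $D_2$, and so $B_1\tm\amax(B'_2)\sse\rel$.

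Next, fix $c\in\umax(B'_2)$. Choose $c^*\in\amax(B'_2)$ with $c\sqq^{as}_{B'_2}c^*$ (climb up via any as-path from $c$); since $c$ is u-maximal in $B'_2$, we also have $c^*\sqq^{asm}_{B'_2}c$, yielding an asm-path $c^*=c_0,c_1,\ldots,c_k=c$ in $B'_2$. Propagate $B_1\tm\{c^*\}\sse\rel$ along this path by induction on $i$: given $B_1\tm\{c_{i-1}\}\sse\rel$, pick $a_0\in B_1$ with $(a_0,c_i)\in\rel$ (available since $c_i\in\rel[B_1]$), and for any target $b\in B_1$ take an as-path $a_0=a^{(0)},a^{(1)},\ldots,a^{(m)}=b$ in $B_1$ (which exists because $B_1$ is strongly as-connected). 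If $c_{i-1}c_i$ is a thin semilattice or affine edge, the element $c_{i-1}$ lies in $\rel[a^{(j-1)}]\cap\rel[a^{(j)}]$ at each step and $c_i\in\Filt^{as}_{\rel[a^{(j-1)}]}(c_{i-1})$, so Lemma~\ref{lem:as-rectangularity} yields $c_i\in\rel[a^{(j)}]$. If instead $c_{i-1}c_i$ is a thin majority edge, applying
\[
g\bigl((a^{(j)},c_{i-1}),(a^{(j-1)},c_i),(a^{(j-1)},c_i)\bigr)=\bigl(g(a^{(j)},a^{(j-1)},a^{(j-1)}),g(c_{i-1},c_i,c_i)\bigr)
\]
produces $(a^{(j)},c_i)\in\rel$ at every step, since $g(c_{i-1},c_i,c_i)=c_i$ by Lemma~\ref{lem:thin-combination}(1) and $g(a^{(j)},a^{(j-1)},a^{(j-1)})=a^{(j)}$ by Theorem~\ref{the:uniform}(ii) on the thin as-edge $a^{(j-1)}a^{(j)}$.

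The main obstacle is the majority step of the asm-path propagation. The identity $g(a^{(j)},a^{(j-1)},a^{(j-1)})=a^{(j)}$ fails on thin majority edges in $B_1$ (there $g$ is itself a majority and returns $a^{(j-1)}$), but holds on thin affine edges (where $g$ restricts to a projection) and on thin semilattice edges with $a^{(j-1)}\le a^{(j)}$ (where $g(x,y,z)=f(x,f(y,z))$ collapses to $a^{(j)}$). It is precisely the hypothesis that $B_1$ is an as-component of $C_1$ that guarantees a path of exactly these admissible edge types between any two of its elements, making the passage from $\amax(B'_2)$ to $\umax(B'_2)$ go through.
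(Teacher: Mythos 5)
Your overall strategy matches the paper's: cover $B_1\tm\{a\}$ for some element $a$ that is simultaneously in $B'_2$ and as-maximal in the relevant $\lnk_2$-block, and then propagate this along an asm-path in $B'_2$ to reach all of $\umax(B'_2)$. (Your first step establishes the stronger statement $B_1\tm\amax(B'_2)\sse\rel$, which is more than is needed but correct; the paper just picks one such element via Corollary~\ref{cor:product-maximal}(1).) The semilattice/affine step of your propagation is also fine, via a clean application of Lemma~\ref{lem:as-rectangularity} with the thin edge taken in the first coordinate.

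However, there is a genuine gap in the majority step. You apply a ternary operation $g$ to $(a^{(j)},c_{i-1})$, $(a^{(j-1)},c_i)$, $(a^{(j-1)},c_i)$ and need \emph{one single} term operation that simultaneously satisfies $g(c_{i-1},c_i,c_i)=c_i$ on the thin majority edge $c_{i-1}c_i$ and $g(a^{(j)},a^{(j-1)},a^{(j-1)})=a^{(j)}$ on the thin as-edge $a^{(j-1)}a^{(j)}$. You source the first identity from Lemma~\ref{lem:thin-combination}(1), which produces \emph{some} $g'$ satisfying the majority condition with $g'(c_{i-1},c_i,c_i)=c_i$, and the second from Theorem~\ref{the:uniform}(ii). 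But Theorem~\ref{the:uniform}(ii) describes the behaviour of the \emph{specific} $g$ only on \emph{thick} edges, while the $g'$ produced by Lemma~\ref{lem:thin-combination}(1) is in general a different operation; and for a \emph{thin} affine edge $a^{(j-1)}a^{(j)}$ (which is not a thick affine edge — the definition imposes only $h(b,a,a)=b$ and $b\in\Sg{a,h'(a,a,b)}$) there is no guarantee that any operation satisfying the majority condition restricts to the first projection, so neither $g$ nor $g'$ is forced to satisfy $g(a^{(j)},a^{(j-1)},a^{(j-1)})=a^{(j)}$. As the paper explicitly notes, ``the operations $g,h$ from Theorem~\ref{the:uniform} do not have to satisfy any specific conditions on the set $\{a,b\}$ when $ab$ is a thin majority or affine edge.''

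The fix is to stop treating the majority edge differently and instead run the propagation in the other coordinate, which is exactly what the paper does: for each thin edge $b_1b_2$ on the asm-path in $B'_2$ (of whatever type), apply Lemma~\ref{lem:as-rectangularity} to $\rel^{-1}\sse\zA_2\tm\zA_1$ with $a:=b_1$, $b:=b_2$, $B:=\rel^{-1}[b_1]\supseteq B_1$, and $c\in B_1\cap\rel^{-1}[b_2]$. The conclusion $\Filt^{as}_B(c)\sse\rel^{-1}[b_2]$ then gives $B_1\sse\rel^{-1}[b_2]$, since $B_1$ is strongly as-connected and contained in $B$. This handles all edge types in $\zA_2$ at once and needs only as-paths in $\zA_1$, which $B_1$ supplies. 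Alternatively, one can patch your two-coordinate computation by replacing the $g$-identity with the binary $p$ from Lemma~\ref{lem:thin-combination}(3), which is exactly designed for a pair of thin edges of different types.
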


\begin{proof}
Let $B'_2$ be a subset of a $\lnk_2$-block $C$. By 
Corollary~\ref{cor:product-maximal}(1) for any $a_0\in B'_2$ the set $B'_2$ contains an as-maximal 
element $a$ of $C$ such that $a_0\sqq^{as}a$. By Corollary~\ref{cor:linkage-rectangularity} 
$B_1\tm\{a\}\sse\rel$. It then suffices to show that 
$B_1\tm\Filt^{asm}_{B'_2}(a)\sse\rel$.

Suppose for $D\sse\Filt^{asm}_{B'_2}(a)$ it holds $B_1\tm D\sse\rel$.
If $D\ne \Filt^{asm}_{B'_2}(a)$, there are $b_1\in D$ and
$b_2\in\Filt^{asm}_{B'_2}(a)-D$ such that $b_1b_2$ is a thin edge. 
By Lemma~\ref{lem:as-rectangularity} $B_1\tm\{b_2\}\sse\rel$; the 
result follows.
\end{proof}

\section{Quasi-2-decomposability}\label{sec:quasi-2-decomposability}


An ($n$-ary) relation over a set $A$ is called
\emph{2-decomposable} if, for any tuple $\ba\in A^n$, $\ba\in\rel$ if
and only if, for any $i,j\in[n]$, $\pr_{ij}\ba\in\pr_{ij}\rel$ 
\cite{Baker75:chinese-remainder,Jeavons98:consist}. 2-decomposability 
is closely related to
the existence of majority polymorphisms of the relation. 
Relations over general smooth algebras do not have a majority 
polymorphism, but they
still have a property close to 2-decomposability. We say that a
relation $\rel$, a subdirect product of $\vc{\zA}n\in\cV$, is
\emph{quasi-2-decomposable}, if for any elements $\vc an$,
such that $(a_i,a_j)\in\pr_{ij}\rel$ for any
$i,j$, there is a tuple $\bb\in\rel$ with $(a_i,a_j)\sqq^{as}(\bb[i],\bb[j])$
for all $i,j\in[n]$. In particular, if $(a_i,a_j)\in\amax(\pr_{ij}\rel)$ for any
$i,j$, then $\bb\in\rel$ can be chosen such that $(\bb[i],\bb[j])\in\as{(a_i,a_j)}$,
$i,j\in[n]$.

\begin{theorem}\label{the:quasi-2-decomp}
Let $\vc\zA n\in\cK$. Then any subdirect product $\rel$ of
$\tms\zA n$ is quasi-2-decomposable.

Moreover, if $J$ is a collection of subsets of $[n]$ containing all the 
2-element subsets, $X\in J$, tuple $\ba$ is such that 
$\pr_Y\ba\in\pr_Y\rel$ for every $Y\in J$, and $\pr_X\ba
\in\amax(\pr_X\rel)$, there is a tuple $\bb\in\rel$ with
$\pr_Y\ba\sqq^{as}\pr_Y\bb$ for $Y\in J$, and
$\pr_X\bb=\pr_X\ba$. 
\end{theorem}

\subsection{The ternary case}

We start with ternary relations assuming every $(a_i,a_j)$ is as-maximal.

\begin{lemma}\label{lem:3-quasi}
Let $\rel$ be a subalgebra of $\zA_1\tm\zA_2\tm\zA_3$, and let 
$(a_1,a_2,a_3)$ be such that $(a_i,a_j)\in\amax(\pr_{ij}\rel)$ for 
$i,j\in\{1,2,3\}$, $i\ne j$. Then there is $(a'_1,a'_2,a'_3)\in\rel$ such that 
$(a'_i,a'_j)$ is in the as-component of $\pr_{ij}\rel$ containing $(a_i,a_j)$ for 
$i,j\in\{1,2,3\}$, $i\ne j$.
\end{lemma}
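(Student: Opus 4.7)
The plan is to start with a tuple in $\rel$ lying over $(a_1,a_2)$ and then use the rectangularity machinery from Section~\ref{sec:rectangularity} to replace its third coordinate by an element of $\as_{\zA_3}(a_3)$. Viewing $\rel$ as a subdirect product of $\pr_{12}\rel$ and $\zA_3$, I will first invoke Corollary~\ref{cor:product-maximal}(1) to extend $(a_1,a_2)\in\amax(\pr_{12}\rel)$ to an as-maximal tuple $\bc=(a_1,a_2,c_3)\in\amax(\rel)$, which forces $(a_1,c_3)\in\amax(\pr_{13}\rel)$, $(a_2,c_3)\in\amax(\pr_{23}\rel)$ and $c_3\in\amax(\zA_3)$ by Corollary~\ref{cor:product-maximal}(2). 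If $c_3$ already lies in $\as_{\zA_3}(a_3)$, the as-maximality of the pairs $(a_i,c_3)$ together with $(a_i,a_3)\in\amax(\pr_{i3}\rel)$ places $(a_i,c_3)\in\as(a_i,a_3)$ and we are done.

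To handle the general case, I plan to apply Proposition~\ref{pro:umax-rectangular} to the binary subdirect product $\rel\le\pr_{12}\rel\tm\zA_3$. The as-component $B=\as_{\pr_{12}\rel}(a_1,a_2)$ is contained in a single $\lnk_1$-block $L$ of $\pr_{12}\rel$: as-paths inside $B$ lift by Corollary~\ref{cor:product-path}(1) to as-paths in $\rel$, and combining consecutive tuples along such a lift with $\bc$ under the semilattice operation $\cd$ supplies the common third coordinates that put consecutive 12-projections into a single $\tol_1$-class. Hence $B$ is an as-component of $L$, and Proposition~\ref{pro:umax-rectangular} yields $B\tm\umax(\rel[B])\sse\rel$.

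The hypotheses $(a_1,a_3)\in\pr_{13}\rel$ and $(a_2,a_3)\in\pr_{23}\rel$ now enter: they provide tuples $(a_1,x_2,a_3),(y_1,a_2,a_3)\in\rel$, whose $12$-projections are $\tol_1$-related through the common third coordinate $a_3$, and chaining these with $\bc$ places $a_3$ in the $\lnk_2$-block of $\zA_3$ paired with $L$. After a further transport of $a_3$ inside $\as_{\zA_3}(a_3)$ using Lemma~\ref{lem:as-rectangularity}, I obtain $a'_3\in\as_{\zA_3}(a_3)\cap\umax(\rel[B])$, so the rectangle above gives $(a_1,a_2,a'_3)\in\rel$. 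Its pairwise projections satisfy $(a_1,a'_3)\in\as(a_1,a_3)$ and $(a_2,a'_3)\in\as(a_2,a_3)$ because both pairs are as-maximal in their respective projections and share the coordinate $a'_3\in\as_{\zA_3}(a_3)$.

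The main obstacle I expect is the link-compatibility step: establishing that $a_3$ really sits in the $\lnk_2$-block of $\zA_3$ matched to the $\lnk_1$-block of $(a_1,a_2)$, and that $\as_{\zA_3}(a_3)$ meets $\umax(\rel[B])$. This is where all three pairwise as-maximality assumptions are used simultaneously—one cannot extract this link purely from $(a_1,a_2)$ and the single as-maximal extension $\bc$—and it is the technical heart of the argument; the rest is a straightforward application of the rectangularity results.
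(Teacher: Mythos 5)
Your approach is genuinely different from the paper's. The paper proceeds by induction on $|\zA_1|+|\zA_2|+|\zA_3|$, first reducing (Claims~1--3) to the case where each $\zA_i=\Sg{a_i,b_i}$ is simple and $b_i$ is as-maximal, then splitting on whether some $\rel[b_i]$ contains $\as(a_j)\tm\as(a_\ell)$, and in the residual case invoking the classification of simple idempotent algebras (Theorem~\ref{the:simple-idempotent}) to force the $\zA_i$ to be modules. Your plan instead treats $\rel$ as a binary subdirect product of $\pr_{12}\rel$ and $\zA_3$ and tries to derive everything from Proposition~\ref{pro:umax-rectangular}. That is an attractive idea, but as written it has several unresolved gaps, and you flag the central one yourself.

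The first gap is the assertion that $B=\as_{\pr_{12}\rel}(a_1,a_2)$ sits inside a single $\lnk_1$-block. Lifting an as-path from $\pr_{12}\rel$ into $\rel$ via Corollary~\ref{cor:product-path}(1) gives tuples $\be_i,\be_{i+1}\in\rel$ joined by a thin edge, but it does not give them a \emph{common} third coordinate, which is what $\tol_1$ requires; applying $\cdot$ to either of them with $\bc$ changes both the $12$-part and the third coordinate and does not synchronise them, and on a thin affine step $\cdot$ acts as first projection and gives nothing. The second gap is the step placing $a_3$ in the right $\lnk_2$-block: you have $(a_1,x_2,a_3)$, $(y_1,a_2,a_3)$, and $(a_1,a_2,c_3)$ in $\rel$, but no pair among these shares its $12$-projection, so no $\tol_2$-chain from $c_3$ to $a_3$ is produced, and it is also not established that $\as_{\zA_3}(a_3)$ even intersects $\rel[B]$, let alone its u-maximal part. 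The third gap is the early reduction: from $(a_1,c_3),(a_1,a_3)\in\amax(\pr_{13}\rel)$ and $c_3\in\as_{\zA_3}(a_3)$ you infer $(a_1,c_3)\in\as(a_1,a_3)$, but lifting an as-path $a_3\sqq^{as}c_3$ into $\pr_{13}\rel$ from $(a_1,a_3)$ only lands you at $(x,c_3)$ for some $x\in\as(a_1)$, not at $(a_1,c_3)$. Each of these is exactly the kind of ``rectangularity against the diagonal'' obstacle that the paper's induction is designed to circumvent by shrinking to $\Sg{a_i,b_i}$, passing to simple quotients, and only then applying the binary rectangularity results; as matters stand your argument would need substantial additional work at precisely the point you identify as the technical heart.
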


\begin{proof}
By replacing $\zA_i$ with $\pr_I\rel$ for $i=1,2,3$, the algebra $\rel$ can be 
assumed to be a subdirect product of $\zA_1\tm\zA_2\tm\zA_3$.
We proceed by induction on the size of $\zA_1,\zA_2,\zA_3$. The base case 
of induction is comprised of the following situations: (1) for some $i\in[3]$, 
$|\zA_i|=2$ and $\zA_i$ is a semilattice edge; (2) for some $i\in[3]$, 
$\zA_i$ is a module (not necessarily 2-element); and (3) $|\zA_i|=2$ and $\zA_i$ is a 
majority edge for all $i\in[3]$. By the assumption 
some tuples $\ba_1=(b_1,a_2,a_3)$, $\ba_2=(a_1,b_2,a_3)$, 
$\ba_3=(a_1,a_2,b_3)$ belong to $\rel$. If one of $\zA_1,\zA_2,\zA_3$ 
is a semilattice edge, say, $b_1\le a_1$, then from the as-maximality of 
$a_1,a_2,a_3$, we obtain $(a_1,a'_2,a_3)=\ba_1\cdot\ba_2\in\rel$,
$a'_2=a_2\cdot b_2$. As is easily seen, this tuple satisfies the requirements of 
the lemma. If one 
of $\zA_1,\zA_2,\zA_3$ is a module, say, $\zA_1$ is, then $\ba_1$ satisfies 
the requirements of the lemma. Finally, if all $\zA_1,\zA_2,\zA_3$ are 
majority edges, then $(a_1,a_2,a_3)=g(\ba_1,\ba_2,\ba_3)$, where $g$ is 
the operation from Theorem~\ref{the:uniform}. 

Suppose that the lemma is proved for any subdirect product of 
$\zA'_1\tm\zA'_2\tm\zA'_3$, where $\zA'_i$ is a subalgebra or a factor 
of $\zA_i$, $i\in[3]$, and at least one of them is a proper subalgebra 
or a factor. Let $\ba_1,\ba_2,\ba_3\in\rel$ be as before. Also let $\cD$ 
denote the set of $(c_1,c_2,c_3)\in\zA_1\tm\zA_2\tm\zA_3$ such that
$(c_i,c_j)$, $i,j\in[3]$, belongs to the as-component of $\pr_{ij}\rel$ 
containing $(a_i,a_j)$, for $i\ne j$. The set $\cD$ is nonempty, as 
$\ba=(a_1,a_2,a_3)\in\cD$.

\smallskip

{\sc Claim 1.}
Every $\zA_i$ can be assumed to be $\Sg{a_i,b_i}$, and $b_i$ can be 
chosen to be an as-maximal element.

\smallskip

Suppose $\zA_1\ne\zB=\Sg{a_1,b_1}$. Let $(a'_1,a'_2)$ be an 
as-maximal element in $\relo=(\zB\tm\zA_2)\cap\pr_{12}\rel $ such 
that $(a_1,a_2)\sqq^{as} (a'_1,a'_2)$ in $\relo$. Let also 
$(a_1,a_2)=(c^1_1,c^1_2)\zd(c^k_1,c^k_2)=(a'_1,a'_2)$ be an
as-path from $(a_1,a_2)$ to $(a'_1,a'_2)$ in $\relo$. By 
Corollary~\ref{cor:product-path}(1) it can be extended to an as-path 
$\bc_1\zd\bc_k\in\rel$ with $\bc_1=(a_1,a_2,b_3)$ in 
$\rel'=(\zB\tm\zA_2\tm\zA_3)\cap\rel$. Using 
Lemma~\ref{lem:product-edge}(1) we 
define a sequence $\vc{\bc'}k$ in $\cD$ as follows: $\bc'_1=\ba$, and 
$\bc'_{i+1}$ is such that 
$(\bc'_{i+1}[1],\bc'_{i+1}[2])=(c^{i+1}_1,c^{i+1}_2)$ and 
$\bc'_i\bc'_{i+1}$ is a thin semilattice or affine edge in 
$\Sgg{\zA_1\tm\zA_2\tm\zA_3}{\bc'_i,\bc_{i+1}}$. Note that 
$(\bc'_i[u],\bc'_i[v])(\bc'_{i+1}[u],\bc'_{i+1}[v])$ is a thin edge of 
the same type as $\bc'_i\bc'_{i+1}$ for $u,v\in[3]$. In particular, 
$(a_u,a_v)\sqq^{as}(\bc'_{i+1}[u],\bc'_{i+1}[v])$ in $\pr_{uv}\rel'$. 
Replace $\ba$ with $\bc'_k$. Repeating the process for the other binary 
projections if necessary we obtain $(a'_1,a'_2,a'_3)\in\cD$ such 
that $(a'_i,a'_j)$ is as-maximal in $\pr_{ij}\rel'$. By the induction 
hypothesis there is 
$(a''_1,a''_2,a''_3)\in\rel\cap(\zB\tm\zA_2\tm\zA_3)$ 
such that $(a''_i,a''_j)$ is in the as-maximal component containing 
$(a'_i,a'_j)$. Clearly, $(a''_1,a''_2,a''_3)$ is as required.

If, say, $b_1$ is not an as-maximal element, then choose an as-path 
$b_1=c_1\le\dots c_k$ and its extension $\bc_1\zd\bc_k\in\rel$, $\bc_1=\ba_1$, 
such that $c_k$ is a maximal element and $\bc_k=(c_k,a'_2,a'_3)$. 
Then we choose an as-path in $\pr_{23}\rel$ 
from $(a'_2,a'_3)$ to $(a_2,a_3)$. Extending this path as before 
we get $(d,a_2,a_3)\in\rel$ such that $d\in\amax(\zA_1)$.

\smallskip

{\sc Claim 2.}
For every $i,j\in[3]$, it holds that $\as(a_i)\tm\zA_j\sse\pr_{ij}\rel$ and 
$\as(a_i)\tm\as(a_j),\as(a_i)\tm\as(b_j)$ are as-components
of $\pr_{ij}\rel$.

\smallskip

Since $(a_i,a_j),(a_i,b_j)\in\pr_{ij}\rel$ and $\zA_j=\Sg{a_j,b_j}$, 
we have $\{a_i\}\tm\zA_j\sse\pr_{ij}\rel$. By Lemma~\ref{lem:buket} 
$\as(a_i)\tm\as(a_j),\as(a_i)\tm\as(b_j)\sse\pr_{ij}\rel$. Therefore, for any 
$c\in\as(a_i)$ it holds that $(c,a_j),(c,b_j)\in\pr_{ij}\rel$, implying $\as(a_i)\tm\zA_j\sse\pr_{ij}\rel$. The second statement is obvious. 

\smallskip

{\sc Claim 3.}
Every $\zA_i$ can be assumed to be simple.

\smallskip

Suppose $\th$ is a nontrivial congruence of $\zA_1$ and 
$\rel\fac\th=\{(c_1\fac\th,c_2,c_3)\mid\lb (c_1,c_2,c_3)\in\rel\}$. By the 
induction hypothesis there is $(a''_1,a'_2,a'_3)\in\rel\fac\th$ satisfying 
the conditions of the lemma, that is, there is $(b_1,a'_2,a'_3)\in\rel$ 
such that $b_1\fac\th=a''_1$, and $(a_2,a_3)\sqq^{as}(a'_2,a'_3)$, 
$(a_1\fac\th,a_i)\sqq^{as}(a''_1,a'_i)$ for $i\in\{2,3\}$, where the latter
as-paths are in $\pr_{1i}\rel\fac\th$. Let $a'_1\in b_1\fac\th$ be any 
element such that $a_1\sqq^{as} a'_1$ and $a'_1$ is maximal 
in $b_1\fac\th$. Such an element exists, because 
$a_1\fac\th\sqq^{as}b_1\fac\th$. Then for $(a'_1,a'_2,a'_3)$ we have 
$(a'_i,a'_j)\in\as(a_i)\tm\as(a_j)\sse\pr_{ij}\rel$, for any 
$i,j\in\{1,2,3\}$, where the last inclusion is by Claim~2.  
Therefore $(a_i,a_j)\sqq^{as}(a'_i,a'_j)$. Since 
$\Sg{a'_1,b_1}\ne\zA_1$, the claim follows by the induction hypothesis.

\smallskip

We now prove the induction step. Suppose now that $|\zA_i|>2$ 
for some $i$ and $\zA_i$ is not a 
module for any $i$. For an $n$-ary relation $\relo\le\tms\zA n$,
$j\in[n]$, and $c_j\in\zA_j$, let $\relo[c_j]$ denote the set 
$\{(c_1\zd c_{j-1},c_{j+1}\zd c_n)\in\pr_{\{1\zd j-1,j+1\zd n\}}\relo\mid 
(\vc cn)\in\relo\}$. We still use the tuples $\ba_1,\ba_2,\ba_3\in\rel$. 
There are two cases to consider. 

\smallskip

{\sc Case 1.} For some $i\in[3]$ the set $\rel[b_i]$ contains 
$\as(a_j)\tm\as(a_\ell)$, where $\{j,\ell\}=[3]-\{i\}$.

\smallskip

Assume $i=1$. Since $a_1$ is as-maximal, by Theorem~\ref{the:connectivity} 
there is a special asm-path $P$ from $b_1$ to $a_1$. We prove that 
for any element $c$ on this path $\{c\}\tm\as(a_2)\tm\as(a_3)\sse\rel$. 
This is true for $c=b_1$ by the assumption made. Assume the 
contrary, and let $c$ be the first element in $P$ for which this 
property is not true. Let also $d$ be the element preceding $c$ in 
$P$; we may assume $d=b_1$. If $b_1c$ is semilattice or affine, 
then by Lemma~\ref{lem:product-edge}(1)
applied to $(b_1,a_2,a_3)$ and $c$ there is $\bc=(c,a'_2,a'_3)\in\rel$ 
such that $(a'_2,a'_3)\in\as(a_2,a_3)$. Therefore by 
Lemma~\ref{lem:as-rectangularity} $\{c\}\tm\as(a_2)\tm\as(a_3)\sse\rel$.

Let $b_1c$ be a special thin majority edge, $\zB=\Sg{b_1,c}$, and $\th$ 
a congruence witnessing that $b_1c$ a majority edge; in particular, 
$\zB=b_1\fac\th\cup c\fac\th$, as $\zA_1$ is smooth. Suppose first that 
$\zB=\zA_1$. Then $\th$ is the equality relation, as $\zA_1$ is simple, 
and so $|\zA_1|=2$. In this case $c=a_1$ and $\as(a_1)=\{a_1\}$. 
We prove that $\as(a_1)\tm\as(a_3)\sse\rel[b_2]$. By Claim~2 
$\as(b_2)\tm\as(a_3)\sse\pr_{23}\rel$ and $(a_1,b_2,a_3)\in\rel$.
Extending an as-path in $\pr_{23}\rel$ from $(b_2,a_3)$ to an arbitrary
$(d_2,d_3)\in\as(b_2)\tm\as(a_3)$ Corollary~\ref{cor:product-path} implies 
that $(a_1,d_2,d_3)\in\rel$. A similar
argument shows that $\as(a_1)\tm\as(a_2)\sse\rel[b_3]$. Therefore,
either $|\zA_1|=|\zA_2|=|\zA_3|=2$, which is the base case, or we may
assume that $\zB\ne\zA_1$.

Consider $\rel'=\rel\cap(\zB\tm\zA_2\tm\zA_3)$. 
Take any 
$e\in\amax(\zB)\cap c\fac\th$. Such an element exists, because by 
Corollary~\ref{cor:quotient-path}(2) any as-path that starts in $c\fac\th$
remains inside $c\fac\th$. For the tuple $(e,a_2,a_3)$ we have 
the following. 
By Claim~2 $(e,a_i)\in\pr_{1i}\rel'$ for 
$i\in\{2,3\}$. Also, $(a_2,a_3)\in\pr_{23}\rel'$, as 
$(b_1,a_2,a_3)\in\rel$ by the assumption made. By the induction 
hypothesis there is $(e',a'_2,a'_3)\in\rel'$ with $e'\in\as_\zB(e)$ 
(and so $e'\in c\fac\th$) and $a'_i\in\as_{\zA_i}(a_i)$, $i\in\{2,3\}$.
Let $e''=g(b_1,e',e')$, where $g$ is the operation satisfying the majority 
condition from Theorem~\ref{the:uniform}. 
Then $g(b_1,e'',e'')=e''$. Since $b_1c$ is a minimal pair with respect 
to $\th$, it holds that $c\in\Sg{b_1,e''}$. By 
Lemma~\ref{lem:thin-semilattice}
$b_1e''$ is also a thin majority edge. Moreover, 
\[
\cll{e''}{a'_2}{a'_3}=g\left(\cll{b_1}{a'_2}{a'_3},
\cll{e'}{a'_2}{a'_3},\cll{e'}{a'_2}{a'_3}\right)\in\rel'.
\]
By Lemma~\ref{lem:as-rectangularity} 
\[
\as(a_2)\tm\as(a_3)\sse\Filt^{as}_{\pr_{23}\rel'}((a'_2,a'_3))
\sse\rel[e''].
\]

Since $\Sg{b_1,e''}=\Sg{b_1,c}$, $c=r(b_1,e'')$ for some term 
operation $r$. It remains to notice that
\[
\cll c{a''_2}{a''_3}=r\left(\cll{b_1}{a''_2}{a''_3},
\cll{e''}{a''_2}{a''_3}\right)\in\rel'
\]
for any $a''_2\in\as(a_2)$, $a'' _3\in\as(a_3)$, a contradiction 
with the choice of $c$.

\smallskip

{\sc Case 2.} For all $i\in[3]$, $\as(a_j)\tm\as(a_\ell)\not\sse\rel[b_i]$, 
where $\{j,\ell\}=[3]-\{i\}$.

\smallskip

Let $\lnk_{j\ell}$ be the link congruence of $\pr_{j\ell}\rel$ when 
$\rel$ is viewed as a subdirect product of $\zA_i$ and 
$\pr_{j\ell}\rel$; and let $\lnk_i$ be the link congruence of $\zA_i$. 
Since $b_i$ is as-maximal, if $\lnk_i$ is the total congruence, then 
by Proposition~\ref{pro:max-gen} and Claim~2 $\as(a_j)\tm\as(a_\ell)\sse\rel[b_i]$, a 
contradiction with the assumption made. Therefore $\lnk_i$ is the equality 
relation for all $i\in[3]$. Consider the $\lnk_{j\ell}$-block
$\relo=\rel[a_i]$. By Claim~2 $\relo$ is a subdirect product of 
$\zA_j\tm\zA_\ell$. If $\relo$ is linked, $\as(b_j)\tm\as(a_\ell)\sse\relo$.
In this case, if $|\as(a_i)|=1$ then $\as(a_i)\tm\as(a_\ell)\sse\rel[b_j]$,
a contradiction with the assumptions of Case~2. If $|\as(a_i)|>1$, for
any $c_i\in\zA_i$ such that $a_ic_i$ is a thin semilattice or affine edge
by Lemma~\ref{lem:product-edge}(1)
$(c_j,c_\ell)\in\rel[c_i]$ for some $(c_j,c_\ell)\in\as(b_j)\tm\as(a_\ell)$,
a contradiction with the assumption that $\lnk_i$ is the equality relation.
Therefore $\relo$ is not linked, and, since $\zA_2,\zA_3$ are simple, 
$\relo$ is the graph of a bijection. Thus, $\zA_j$ and $\zA_\ell$ are 
isomorphic and there is an isomorphism that maps $a_j$ to $b_\ell$ and 
$b_j$ to $a_\ell$. In a similar way $\zA_i$ and $\zA_j$ are isomorphic. 

Next, we prove that for any $i\in[3]$, $\as(a_i)$ and $\as(b_i)$ are subalgebras
of $\zA_i$. Let $\zB_i=\Sg{\as(a_i)}$ and $\zC_i=\Sg{\as(b_i)}$. Consider 
$\rel''=\rel\cap(\zB_1\tm\zB_2\tm\zC_3)$. Since $(a_1,a_2,b_3)\in\rel$, by 
Corollary~\ref{cor:product-path} $\rel''$ is a subdirect product, and by Claim~2 
$\zB_2\tm\zC_3\sse\pr_{23}\rel''$. Let $\lnk''_1,\lnk''_{23}$ denote the link 
congruences  of $\zB_1,\zB_2\tm\zC_3$ when $\rel''$ is viewed as a subdirect 
product of $\zB_1$ and $\zB_2\tm\zC_3$. Clearly, $\lnk''_1\sse\lnk_1$, and so, 
$\lnk''_1$ is the equality relation, and therefore $\lnk''_{23}$ is a proper congruence 
of $\zB_2\tm\zC_3$. Moreover, $\rel''[a_1]$ is a block of this congruence that,
as we proved above is the graph of an isomorphism $\vf$ from $\zB_2$ to $\zC_3$.
Replacing $\zC_3$ with a copy of $\zB_2$ and $\vf$ with the identity
mapping, we see that $\zB_2^2$ has a congruence one of whose blocks
is $\Dl=\{(c,c)\mid c\in\zB_2\}$. Therefore $\zB_2$ is Abelian and, as $\zB_2$ 
omits type \one, by Theorem~9.6 of \cite{Hobby88:structure} 
and the results of~\cite{Herrmann79:affine} $\zB_2$ is term equivalent 
to a module. This 
means that for any $c\in\zB_2$ the pair $a_2c$ is a thin affine edge implying
$c\in\as(a_2)$. A similar argument shows that $\as(a_i)$ and $\as(b_j)$
are isomorphic modules for any $i,j\in[3]$. 

If, say, $a_1\in\as(b_1)$ then $\zA_1$ is a module and the result follows from 
the base case. If $a_1\not\in\as(b_1)$, then, as $a_1$ is as-maximal, by 
Theorem~\ref{the:connectivity} there is a special thin majority edge $dc$ such that 
$d\in\as(b_1)$ and $c\not\in\as(b_1)$. We show that this imples that $\lnk_1$
is the full relation contradicting the results above. Let $\zB=\Sg{d,c}$, and $\th$ 
a congruence witnessing that $dc$ a majority edge; in particular, 
$\zB=d\fac\th\cup c\fac\th$. The case $\zB=\zA_1$ is considered in Case~1, so, 
assume $\zB\ne\zA_1$. As in Case~1 consider 
$\rel'=\rel\cap(\zB\tm\zA_2\tm\zA_3)$ and take any $e\in\amax(\zB)\cap c\fac\th$.  
It was proved in the previous paragraph that $\rel\cap(\zC_1\tm\zB_2\tm\zB_3)$ 
is subdirect, therefore $(d,a'_2,a'_3)\in\rel$ for some 
$(a'_2,a'_3)\in\as(a_2)\tm\as(a_3)$. Then as before by Claim~2 
$(e,a'_i)\in\pr_{1i}\rel'$ for $i\in\{2,3\}$, and $(a'_2,a'_3)\in\pr_{23}\rel'$. 
By the induction hypothesis there is $(e',a''_2,a''_3)\in\rel'$ with $e'\in\as_\zB(e)$ 
(and so $e'\in c\fac\th$) and $a''_i\in\as_{\zA_i}(a_i)$, $i\in\{2,3\}$. 
Note that as $\as(b_1)$ is a module, $e'\not\in\as(b_1)$. On the other hand,
as $\rel\cap(\zC_1\tm\zB_2\tm\zB_3)$ is subdirect, there exists 
$b'\in\as(b_1)$ with $(b',a''_2,a''_3)\in\rel$, implying that $\lnk_1$ is not the
equality relation.
\end{proof}

\subsection{Proof of Theorem~\ref{the:quasi-2-decomp}}

In this section we prove Theorem~\ref{the:quasi-2-decomp}.

\begin{proof}[Proof of Theorem~\ref{the:quasi-2-decomp}]
Let $\ba$ be a tuple satisfying the conditions of 
quasi-2-decom\-po\-sa\-bi\-li\-ty  and such that $\pr_Y\ba\in\pr_Y\rel$, 
for $Y\in J$, and $\pr_X\ba\in\amax(\pr_X\rel)$. By induction on ideals of the 
power set of $[n]$ (i.e.\ subsets of the power set closed under 
taking subsets) we prove that for any ideal $I$ there is $\ba'$ 
such that $\pr_Y\ba\sqq^{as}\ba'$ for any $Y\in J$, 
$\pr_X\ba'\in\as(\pr_X\ba)$ (in particular, $\pr_X\ba\sqq^{as}\pr_X\ba'$), and for any 
$U\in I$ it holds $\pr_U\ba'\in\pr_U\rel$. Note that we cannot claim here that $\pr_U\ba\sqq^{as}\pr_U\ba'$, because $\pr_U\ba\not\in\pr_U\rel$, and it is not even clear what relation such a path may belong to. Instead we introduce a relation $\cE(I)$, specific for each $I$ that contains such paths, but only for sets $U\in I$. Then if this 
statement is proved for the entire power set, 
$\ba'=\pr_{[n]}\ba'\in\pr_{[n]}\rel=\rel$ implies the result.
The base case, where the ideal consists of all sets from $J$ and their 
subsets (this includes all at most 2-element sets and the set $X$ and its subsets) 
is given by the assumptions of the theorem and tuple $\ba$. 

Suppose that the claim is true for an ideal $I$, the set $W$ does not
belong to $I$, but all its proper subsets do. Let $\cE(I)$ be the set of
all tuples $\bc$, not necessarily from $\rel$ such that 
$\pr_U\bc\in\pr_U\rel$ for every $U\in I$. Clearly, $\rel\sse\cE(I)$
and $\cE$ is a subdirect product of $\tms\zA n$. By $\cD(I)$ we denote
the set of all tuples $\bc\in\cE(I)$ such that $\pr_Y\ba\sqq^{as}\pr_Y\bc$, 
for $Y\in J$.
It does not have to be a subalgebra of 
$\tms\zA n$. If a tuple belongs to $\cD(I)$ it is said to 
\emph{support} $I$. We show that $\cD(I)$ contains a tuple $\bb$ 
such that $\pr_W\bb\in\pr_W\rel$, that is, $\bb$ supports $I\cup\{W\}$. 

Assume that $W=[\ell]$. For a subalgebra $\relo$ of $\pr_W\rel$ a 
tuple $\bc\in\cD(I)$ is said to be \emph{$\relo$-approximable}
if $\pr_U\bc\in\pr_U\relo$. If $\bc$ is $\relo$-approximable, for every $U\subset W$ we choose $\bc_U\in\rel$ such that $\pr_U\bc=\pr_U\bc_U$.
We prove the following statement:
\begin{quote} 
Let $\relo$ be a subalgebra of $\pr_W\rel$. If there exists 
a $\relo$-approximable tuple from $\cD(I)$, then there is 
$\bd\in\cD(I)$ such that $\pr_W\bd\in\relo$. 
\end{quote}

Note that if $\relo=\pr_W\rel$, then any tuple in $\cD(I)$ 
is $\relo$-approximable. Therefore the statement 
implies that $\cD(I)$ contains a tuple $\bd$ with 
$\pr_W\bd\in\pr_W\rel$, which would prove the induction step.
We prove the statement by induction on the sum of sizes of unary
projections of $\relo$. If one of these projections is 1-element then
the statement trivially follows from the assumption that for some 
$\bd\in\cD(I)$ it holds that $\pr_U\bd\in\pr_U\relo$
for $U$ including all coordinate positions whose projections contain
more than 1 element, and so $\pr_W\bd\in\pr_W\relo$. So suppose that 
the statement is proved for all relations with unary projections 
smaller than $\relo$.  Also let $\bd$ be a $\relo$-approximable tuple 
from $\cD(I)$. We need the following auxiliary statements.

\smallskip

{\sc Claim 1.}
Let $\bd\in\cD(I)$ and $\vc\be k$ an as-path in $\cE(I)$. Then every 
tuple from an as-path $\vc{\be'}k$ satisfying the following conditions belongs to 
$\cD(I)$: $\be'_1=\bd$; if $\be_i\be_{i+1}$ is a semilattice edge
then $\be'_{i+1}=\be'_i\cdot\be_{i+1}$ (in which case either $\be'_i=\be'_{i+1}$ or $\be'_i\be'_{i+1}$ is a semilattice edge by Proposition~\ref{pro:good-operation}); and if $\be_i\be_{i+1}$
is a thin affine edge then $\be'_i\be'_{i+1}$ is some thin affine edge
in $\Sgg{\tms\zA n}{\be'_i,\be_{i+1}}$. 

\smallskip

Since $\bd,\vc\be k\in\cE(I)$, we also have $\vc{\be'}k\in\cE(I)$. Thus, 
we only need to check that $\pr_Y\ba\sqq^{as}\pr_Y\be'_s$ 
for any $s\in[k]$ and $Y\in J$. However, this condition follows from 
the assumption $\bd\in\cD(I)$ --- thus, 
$\pr_Y\ba\sqq^{as}\pr_Y\be'_1$ for $Y\in J$, --- and that 
$\pr_Y\be'_1\zd\pr_Y\be'_k$ is an as-path for each $Y\in J$ by
Corollary~\ref{cor:product-path}. 

\smallskip

{\sc Claim 2.}
Let $\bc\in\cD(I)$ be $\relo$-approximable, $U\subset W$, and let 
$\be\in\pr_U\relo$ be such that $\pr_U\bc\sqq^{as}\be$ in $\pr_U\relo$.
Then there is $\bc'\in\cD(I)$ such that it is $\relo$-approximable,  
$\pr_U\bc'=\be$, and $\bc\sqq^{as}\bc'$ in $\cE(I)$.

\smallskip

Let $\pr_U\bc=\bb_1\zd\bb_k=\be$ be an as-path in $\pr_U\relo$.
Since $\bb_i\in\pr_U\relo$ for each 
$i\in[k]$, by Corollary~\ref{cor:product-path}(1) this path can 
be extended to an as-path $\pr_W\bc_U=\vc{\bb'}k$ in $\relo$. 
Then, since $\bb'_i\in\pr_W\rel$
for each $i\in[k]$, applying again Corollary~\ref{cor:product-path}(1) 
the as-path $\vc{\bb'}k$ can be extended to an as-path 
$\vc{\bb''}k$ in $\rel$ such that $\pr_W\bb''_i=\bb'_i\in\relo$, 
$\pr_U\bb''_i=\bb_i$ for each $i\in[k]$. We define a sequence 
$\vc{\bd}k$ as follows: $\bd_1=\bc$, if $\bb''_i\bb''_{i+1}$ is a 
thin semilattice edge then set $\bd_{i+1}=\bd_i\cdot\bb''_{i+1}$, 
and if $\bb''_i\bb''_{i+1}$ is a thin affine edge then choose 
$\bd_{i+1}$ such that $\bd_i\bd_{i+1}$ is a thin affine edge in 
$\Sgg{\tms\zA n}{\bd_i,\bb''_{i+1}}$ and $\pr_U\bd_{i+1}=\bb_{i+1}$. 
Note that, since $\pr_U\bd_i=\bb_i$ and 
$\pr_U\bb''_{i+1}=\bb_{i+1}$ such a $\bd_{i+1}$ exists by 
Lemma~\ref{lem:product-edge}(1). 
Now, set $\bc'=\bd_k$. By Claim~1 $\bc'$ belongs to $\cD(I)$.

To show that $\bc'$ is $\relo$-approximable, 
for any $V\subset W$ 
consider the sequence 
$\pr_V\bc=\pr_V\bc_V=\be_1\zd\be_k=\pr_V\bd_k=\pr_V\be$,
where $\be_i=\pr_V\bd_i$. By construction this is an as-path in 
$\pr_V\relo$, so by Corollary~\ref{cor:product-path}(1) it can be 
extended to an as-path $\pr_W\bc_V=\be'_1\zd\be'_k$ in $\relo$ witnessing that $\pr_V\bc'\in\pr_V\relo$. 

\smallskip

In particular, Claim~2 implies that a $\relo$-approximable tuple $\bc\in\cD(I)$ 
can be chosen such that 
for any $i\in W$ the element $\bc[i]$ is as-maximal in $\pr_i\relo$.
We will assume it from now on. For $U=W-\{i\}$, $i\in W$, we 
denote $\bc_U$ by $\bc_i$.
Suppose that for some $i\in W$ the unary projection
$\pr_i\relo\ne\Sg{\bc[i],\bc_i[i]}$. Assume $i=1$. Then set  
\[
\relo'=\relo\cap\left(\Sg{\bc[1],\bc_1[1]}\tm
\prod_{i\in  W-\{1\}}\pr_i\relo\right). 
\]
Note that in this case $\bc$ is $\relo'$-approximable, and the 
result follows by the inductive hypothesis.
Let $\bc_i$, $i\in W$, be chosen such that $\Sg{\bc[i],\bc_i[i]}=\pr_i\relo$. 
It is also clear that $\relo$ can be chosen to be $\Sg{\vc{\pr_W\bc}\ell}$.

\smallskip

{\sc Claim 3.}
For any $i\in W$ there is $\bd\in\relo$ such that $\bc[i]\sqq^{as}\bd[i]$
and $\pr_U\bc\sqq^{as}\pr_U\bd$ in $\pr_U\relo$ where $U=W-\{i\}$. 

\smallskip

Without loss of generality assume $i=1$ and $U=W-\{i\}$. Consider 
the relation
\[
\relo'(x,y,z)=\exists x_4\zd x_\ell (\relo(x,y,z,x_4\zd
x_\ell)\meet(x_4=\bc[4]) \meet\ldots\meet(x_\ell=\bc[\ell])).
\]
Obviously, $\pr_{\{1,2,3\}}\bc_1,\pr_{\{1,2,3\}}\bc_2,
\pr_{\{1,2,3\}}\bc_3\in\relo'$. We show that $\relo'$ contains a tuple 
$\bd$ such that $\pr_{12}\bc\sqq^{as}\pr_{12}\bd$, 
$\pr_{13}\bc\sqq^{as}\pr_{13}\bd$, $\pr_{23}\bc\sqq^{as}\pr_{23}\bd$.
This would imply the claim, because, $\pr_{23}\bc\sqq^{as}\pr_{23}\bd$
means $\pr_U\bc\sqq^{as}\pr_U\bd$, and any of the first two connections
means that $\bc[i]\sqq^{as}\bd[i]$. 

If, say, the pair $(\bc[1],\bc[2])$ is not as-maximal in $\pr_{12}\relo'$, choose
an as-path $(\bc[1],\bc[2])=\be_1\zd \be_s$ in $\pr_{12}\relo'$ 
such that $\be_s$ is as-maximal in $\pr_{12}\relo'$. By 
Corollary~\ref{cor:product-path}(1) this as-path can be extended 
to an as-path $\pr_{\{1,2,3\}}\bc_3=\be'_1\zd\be'_s$ in $\relo'$. 
Now, as in the proof of Claim~2 we construct a sequence
$\pr_{\{1,2,3\}}\bc=\be''_1\zd\be''_s$, that is not necessarily from
$\relo'$, as follows. If $\be'_i\le\be'_{i+1}$, set 
$\be''_{i+1}=\be''_i\cdot\be'_{i+1}$. If $\be'_i\be'_{i+1}$ is
a thin affine edge, then set $\be''_{i+1}$ to be any tuple in
$\Sg{\be''_i,\be'_{i+1}}$ such that $\be''_i\be''_{i+1}$ is a thin
affine edge and $\pr_{12}\be''_{i+1}=\pr_{12}\be'_{i+1}$. 
The resulting tuple $\be''_s$ satisfies the following conditions: 
$\pr_{12}\be''_s=\be_s$
and 
\[
\pr_{13}\bc\sqq^{as}\pr_{13}\be''_s\in\pr_{13}\relo',\qquad
\pr_{23}\bc\sqq^{as}\pr_{23}\be''_s\in\pr_{23}\relo'.
\]

Repeating the procedure above for projections on $\{1,3\}$ and 
$\{2,3\}$ if necessary, we obtain a tuple $\bc'$ such that $\pr_{ij}\bc'$
is an as-maximal tuple in $\pr_{ij}\relo'$ for $i,j\in[3]$. Relation 
$\relo'$ and the tuples $\pr_{ij}\bc'$, $i,j\in[3]$, satisfy
the conditions of Lemma~\ref{lem:3-quasi}. Therefore, $\relo'$
contains a tuple $\bd$ such that $\pr_{ij}\bc\sqq^{as}\pr_{ij}\bd$ 
in $\pr_{ij}\relo'$ for $i,j\in[3]$. The result follows.

\smallskip

To complete the proof let $U=[\ell-1]$ (recall that $W=[\ell]$) and 
$\bd$ the tuple obtained in Claim~3 for $i=\ell$.
Then $\pr_U\bc\sqq^{as}\pr_U\bd$ in $\pr_U\relo$. 
By Claim~2 $\bc$ can be amended so that the new tuple $\bc'$ 
still supports $I$, but $\pr_U\bc'=\pr_U\bd$. Note that $\bc'[\ell]$ is 
in the same as-component of $\pr_\ell\relo$ as $\bd[\ell]$, therefore, 
$\bc'[\ell]\sqq^{as}\bd[\ell]$ in $\pr_\ell\relo$. 
If $\Sgg{\zA_\ell}{\bc'[\ell],\bd[\ell]}=\pr_\ell\relo$, then there is an as-path
from $\pr_W\bc'$ to $\bd$ in $\pr_W\cE(I)$: this is because 
$\pr_U\bc'=\pr_U\bd$ and $\{\pr_U\bd\}\tm\pr_\ell\relo\sse\pr_W\cE(I)$ 
in this case. Extend this path to a path in $\cE(I)$ as before, and 
change $\bc'$ using this path as in Claim~1. The resulting tuple 
$\bc''$ supports $I$, and 
$\pr_W\bc''=\bd\in\relo$ as required.
If $\zB=\Sgg{\zA_\ell}{\bc'[\ell],\bd[\ell]}\ne\pr_\ell\relo$, then set 
$\relo''=\relo\cap(\pr_{[\ell-1]}\relo\tm\zB)$. Since $\bc'$
is $\relo''$-approximable, the result follows from the inductive
hypothesis.

To finish the proof of Theorem~\ref{the:quasi-2-decomp} it suffices to 
take care of the requirement that the resulting tuple $\bb$ is such that 
$\pr_X\bb=\pr_X\ba$. Since $X\in I$ already in the base
case, the resulting tuple $\bb$ is such that $\pr_X\ba\sqq^{as}\pr_X\bb$. By 
Corollary~\ref{cor:product-path}(1) there is also a tuple $\bb'$ 
satisfying the same requirements and such that $\pr_X\bb'=\pr_X\ba$. 
\end{proof}

Now we generalize Theorem~\ref{the:quasi-2-decomp} to the variety
$\cV$ generated by $\cK$.

\begin{corollary}\label{cor:quasi-2-decomp-variety}
Let $\vc\zA n\in\cV$. Then any subdirect product $\rel$ of
$\tms\zA n$ is quasi-2-decomposable.
\end{corollary}

\begin{proof}
We first prove that quasi-2-decomposability holds when every $\zA_i$ 
is a subalgebra of a direct product of algebras from $\cK$. Let
$\zA_i$ be a subalgebra of $\zA_{i1}\tm\dots\tm\zA_{1r_i}$, where 
$\zA_{i1}\zd\zA_{ir_i}\in\cK$; we may assume $\zA_i$ is subdirect. 
Let also $\rel$ be a subdirect product of $\vc\zA n$ and $\ba\in\tms\zA n$
satisfying the premise of quasi-2-decomposability. 
Then $\rel$ can be represented as a subdirect product of 
\[
\prod_{i=1}^n\prod_{j=1}^{r_i}\zA_{ij},
\]
and $\ba$ as a tuple from this product. Let $\rel',\ba'$ denote such 
representations. We then apply Theorem~\ref{the:quasi-2-decomp}
to $\rel'$, $\ba'$, and a collection $J$ that is derived from 2-element
subsets of $[n]$, that is, 
\[
J=\{\{(i,1)\zd(i,r_i),(j,1)\zd(j,r_j)\}\mid i,j\in[n]\}.
\]

Next, let $\zA_i=\zA'_i\fac{\th_i}$, $\th_i\in\Con(\zA'_i)$, and 
quasi-2-decomposability hold for any subdirect product of the $\zA'_i$.
For a subdirect product $\rel\sse\tms\zA n$ and $\ba\in\tms\zA n$ 
satisfying the premise of quasi-2-decomposability, let 
\[
\rel'=\{(\vc bn)\in\tms{\zA'}n\mid (b_1\fac{\th_1}\zd b_n\fac{\th_n})\in\rel\}
\]
and $\ba'\in\tms{\zA'}n$ be such that $\ba[i]=\ba'[i]\fac{\th_i}$, $i\in[n]$. 
Then by Theorem~\ref{the:quasi-2-decomp} $\rel'$ is quasi-2-decomposable and
there is $\bb'\in\rel'$ such that 
$(\ba'[i],\ba'[j])\sqq^{as}_{\pr_{ij}\rel'}(\bb'[i],\bb'[j])$. Then 
$\bb=(\bb'[1]\fac{\th_1}\zd\bb'[n]\fac{\th_n})\in\rel$ and by 
Corollary~\ref{cor:quotient-path} satisfies the requirements of
quasi-2-decomposability.
\end{proof}

Another consequence of Theorem~\ref{the:quasi-2-decomp} is the 
existence of a very useful term.

\begin{theorem}\label{the:pseudo-majority}
There is a term operation $\maj$ of $\cV$ such that for any $\zA\in\cV$ 
and any $a,b\in\zA$, $\maj(a,a,b),\maj(a,b,a),\maj(b,a,a)\in\Filt_\zA^{as}(a)$.

In particular, if $a$ is as-maximal, the elements $\maj(a,a,b)$, $\maj(a,b,a)$, $\maj(b,a,a)$ 
belong to the as-component of $\zA$ containing $a$.
\end{theorem}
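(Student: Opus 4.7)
The plan is to construct $\maj$ uniformly across $\cK$ by realizing it as a ternary term on a single finite ``free-like'' algebra and exhibiting the desired element there. Let $\F$ be the subalgebra of $\prod_{\zA\in\cK,\,a,b\in\zA}\zA$ generated by the two diagonal tuples $\bx=(a)_{(\zA,a,b)}$ and $\by=(b)_{(\zA,a,b)}$; since $\cK$ is finite and all its members are finite, $\F$ is finite, and we may enlarge $\cK$ to a finite $\H\S$-closed class containing $\F$ so that thin edges and as-paths are well-defined inside $\F$. Every coordinate projection $\pi_{\zA,a,b}\colon\F\to\Sgg\zA{a,b}$ is a surjective homomorphism with $\pi_{\zA,a,b}(\bx)=a$ and $\pi_{\zA,a,b}(\by)=b$, and by Corollary~\ref{cor:quotient-path}(2) it carries $\Filt^{as}_\F(\bx)$ into $\Filt^{as}_{\Sgg\zA{a,b}}(a)\sse\Filt^{as}_\zA(a)$. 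So it suffices to build a ternary term $\maj$ with $\maj(\bx,\bx,\by),\maj(\bx,\by,\bx),\maj(\by,\bx,\bx)\in\Filt^{as}_\F(\bx)$.

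Next I reduce this to an existence claim inside a ternary relation. Let $\rel=\Sg{(\bx,\bx,\by),(\bx,\by,\bx),(\by,\bx,\bx)}\le\F^3$. Every element of $\rel$ is of the form $(t(\bx,\bx,\by),t(\bx,\by,\bx),t(\by,\bx,\bx))$ for some ternary term $t$, so the existence of a suitable $\maj$ is equivalent to $\rel\cap\Filt^{as}_\F(\bx)^3\ne\eps$. By Theorem~\ref{the:quasi-2-decomp}, $\rel$ is quasi-2-decomposable; hence it is enough to produce a target triple $(c_1,c_2,c_3)$ with $c_i\in\Filt^{as}_\F(\bx)$ for each $i\in[3]$ and $(c_i,c_j)\in\amax(\pr_{ij}\rel)$ for all $i\ne j$. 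Quasi-2-decomposability then delivers a tuple $\bt\in\rel$ with $(\bt[i],\bt[j])\in\as(c_i,c_j)$, which places each entry $\bt[i]$ inside $\as(c_i)\sse\Filt^{as}_\F(\bx)$, as required.

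The main obstacle is producing that consistent target triple. For every pair $i\ne j$ the pair $(\bx,\bx)$ lies in $\pr_{ij}\rel$, as witnessed by whichever of the three generators has $\bx$ in both coordinates $i$ and $j$, and extending $(\bx,\bx)$ along an as-path to an as-maximal element of $\pr_{ij}\rel$ yields a pair in $\amax(\pr_{ij}\rel)$ both of whose entries lie in $\Filt^{as}_\F(\bx)$. The difficulty is that the first-coordinate witnesses obtained independently in $\pr_{12}\rel$ and in $\pr_{13}\rel$ need not agree, and likewise for the other coordinates. My plan is to glue them together using the ``Moreover'' clause of Theorem~\ref{the:quasi-2-decomp} iteratively: start with an as-maximal tuple $\bu\in\rel$ obtained by lifting an as-path from $(\bx,\bx,\by)$ to an as-maximal element via Corollary~\ref{cor:product-path}(1), so that $\bu[1],\bu[2]\in\Filt^{as}_\F(\bx)$ and all three pairwise projections of $\bu$ are already as-maximal; then invoke the ``Moreover'' clause with $X=\{1,2\}$ to replace $\bu[3]$ by an element of $\Filt^{as}_\F(\bx)$, appealing to Theorem~\ref{the:connectivity} to navigate between as-components of $\pr_3\rel$ when necessary and to Corollary~\ref{cor:product-path}(1) to lift the required as-paths back into $\rel$.

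Finally, the ``in particular'' clause is immediate: if $a$ is as-maximal in $\zA$, then $\Filt^{as}_\zA(a)=\as_\zA(a)$, so the three values $\maj(a,a,b),\maj(a,b,a),\maj(b,a,a)$ automatically lie in the as-component of $a$.
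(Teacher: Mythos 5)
Your overall architecture matches the paper's: build a single generating relation indexed by all pairs in $\cK$ (the paper uses a $3n$-ary relation over $\zA_1^3\tm\dots\tm\zA_n^3$, you use the equivalent ternary relation over the free-like algebra $\F$), and invoke Theorem~\ref{the:quasi-2-decomp}. The reduction to "find a tuple $\bt\in\rel$ with every coordinate in $\Filt^{as}_\F(\bx)$" and the projection-back step via Corollary~\ref{cor:quotient-path}(2) are both correct, and you are right to flag that Theorem~\ref{the:quasi-2-decomp} needs a target tuple whose \emph{pairwise projections are as-maximal}, not merely present in $\pr_{ij}\rel$ — a step the paper itself leaves implicit with "this follows from the 2-Decomposition Theorem."

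The problem is in your plan for producing the consistent target triple. Starting from $\bu\in\amax(\rel)$ lifted from $(\bx,\bx,\by)$ you do get $\bu[1],\bu[2]\in\Filt^{as}_\F(\bx)$ and all three pairwise projections as-maximal; but $\bu[3]\in\Filt^{as}_\F(\by)$, and the "Moreover" clause with $X=\{1,2\}$, applied to the input tuple $\bu$, returns a $\bb$ with $\bb[3]\in\as(\bu[3])$ — i.e.\ in the \emph{same} as-component as $\bu[3]$, not in $\Filt^{as}_\F(\bx)$. To make the Moreover clause deliver what you want, you must first swap $\bu[3]$ for some $c_3\in\Filt^{as}_\F(\bx)\cap\pr_3\rel$ with $(\bu[1],c_3)\in\amax(\pr_{13}\rel)$ and $(\bu[2],c_3)\in\amax(\pr_{23}\rel)$, and you give no argument that such $c_3$ exists. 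The appeal to Theorem~\ref{the:connectivity} does not help here: it supplies \emph{asm}-paths between as-maximal elements, and lifting an asm-path perturbs coordinates $1$ and $2$ as well as $3$, nor do asm-paths preserve membership in $\Filt^{as}_\F(\bx)$. So the gap you identified is not actually closed.

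A clean way to close it, using only tools already in the paper: let $\cE\le\F^3$ consist of all triples whose three pairwise projections lie in $\pr_{12}\rel,\pr_{13}\rel,\pr_{23}\rel$ respectively. Then $\cE$ is a subalgebra containing $\rel$, and $\pr_{ij}\cE=\pr_{ij}\rel$ for all $i,j$ (containment one way is the defining constraint, the other way is $\rel\sse\cE$). Each of the three generators witnesses $(\bx,\bx)\in\pr_{ij}\rel$, so $(\bx,\bx,\bx)\in\cE$. Take $\bc\in\amax(\cE)$ with $(\bx,\bx,\bx)\sqq_{as}^{\cE}\bc$; then each $\bc[i]\in\Filt^{as}_\F(\bx)$, and by Corollary~\ref{cor:product-maximal}(2) each $(\bc[i],\bc[j])\in\amax(\pr_{ij}\cE)=\amax(\pr_{ij}\rel)$. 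Now apply Theorem~\ref{the:quasi-2-decomp} to $\rel$ and $\bc$ to get $\bt\in\rel$ with $\bt[i]\in\as(\bc[i])\sse\Filt^{as}_\F(\bx)$, and finish as you do. Your "in particular" paragraph is fine as written.
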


\begin{proof}
Since $\cV$ is finitely generated, it suffices to prove the result for finite 
subsets of $\cV$ and then apply the compactness argument. So, let
$\cK'\sse\cV$ be a finite set of finite algebras.

Let $\{a_1,b_1\}\zd\{a_n,b_n\}$ be a list of all pairs of elements from 
algebras of $\cK'$, let $a_i,b_i\in\zA_i$. Define a relation $\rel$ to be 
the subalgebra of the product of $\zA_1^3\tm\dots\tm\zA_n^3$ 
generated by $\ba_1,\ba_2,\ba_3$, where for every 
$i\in[n]$, $\pr_{3i-2,3i-1,3i}\ba_1=(a_i,a_i,b_i)$, 
$\pr_{3i-2,3i-1,3i}\ba_2=(a_i,b_i,a_i)$, $\pr_{3i-2,3i-1,3i}\ba_3=(b_i,a_i,a_i)$.
In other words the triples $(\ba_1[3i-2],\ba_2[3i-2],\ba_3[3i-2])$,
$(\ba_1[3i-1],\ba_2[3i-1],\ba_3[3i-1])$, $(\ba_1[3i],\ba_2[3i],\ba_3[3i])$ 
have the form $(a_i,a_i,b_i),(a_i,b_i,a_i),(b_i,a_i,a_i)$, respectively. Therefore 
it suffices to show that $\rel$ contains a tuple $\bb$ such that 
$a_i\sqq^{as}\bb[j]$, where $j\in\{3i,3i-1,3i-2\}$. However, since 
$(a_{i_1},a_{i_2})\in\pr_{j_1j_2}\rel$ for any $i_1,i_2\in[n]$ and
$j_1\in\{3i_1,3i_1-1,3i_1-2\}$, $j_2\in\{3i_2,3i_2-1,3i_2-2\}$, this 
follows from Corollary~\ref{cor:quasi-2-decomp-variety}.
\end{proof}

A function $\maj$ satisfying the properties from 
Theorem~\ref{the:pseudo-majority} will be called a 
\emph{quasi-majority function}. 

\section{Rectangularity for maximal components}%
\label{sec:max-rectangularity}

In this section we show a stronger rectangularity property --- involving 
multi-ary relations --- than that in Proposition~\ref{pro:max-gen}, 
but for maximal components, rather than as-components. 

An algebra $\zA$ is said to be \emph{maximal generated} if it is
generated by one of its maximal components.

\subsection{Simple maximal generated algebras}

We start with several auxiliary statements.

\begin{lemma}\label{lem:3-ary}
Let $\rel$ be a subdirect product of simple maximal generated algebras
$\zA_1,\zA_2,\zA_3\in\cV$, generated by their maximal components 
$C_1,C_2,C_3$, respectively. If $\zA_i\tm \zA_j\sse\pr_{ij}\rel$ 
for every $i,j\in[3]$ and $\rel\cap(C_1\tm C_2\tm C_3)\ne\eps$, 
then $\rel=\zA_1\tm \zA_2\tm \zA_3$.
\end{lemma}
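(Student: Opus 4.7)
The plan is to reduce to the containment $C_1\tm C_2\tm C_3\sse\rel$, since a coordinate-by-coordinate generation argument then yields the full product. Once $C_1\tm C_2\tm C_3\sse\rel$, writing any $a\in\zA_1$ as $a=t(d_1,\dots,d_n)$ with $d_j\in C_1$ and using idempotency gives $(a,c_2,c_3)=t((d_1,c_2,c_3),\dots,(d_n,c_2,c_3))\in\rel$ for every fixed $(c_2,c_3)\in C_2\tm C_3$; varying $(c_2,c_3)$ and iterating the same trick in coordinates two and three gives $\rel=\zA_1\tm\zA_2\tm\zA_3$.

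To prove the containment, I view $\rel$ as a binary subdirect product $\rel\sse\zA_1\tm(\zA_2\tm\zA_3)$ and study the link congruence on the simple algebra $\zA_1$, which is either the equality or the full congruence. The heart of the proof is ruling out the equality case. If it is the equality, the fibres $\rel[a]$ are pairwise disjoint and cover $\pr_{23}\rel=\zA_2\tm\zA_3$, so $\rel$ is the graph of a surjective homomorphism $h\colon\zA_2\tm\zA_3\to\zA_1$ with $\ker h$ a maximal congruence; the conditions $\pr_{12}\rel=\zA_1\tm\zA_2$ and $\pr_{13}\rel=\zA_1\tm\zA_3$ prevent $\ker h$ from being a projection kernel.

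I then appeal to Theorem~\ref{the:simple-idempotent} to classify the $\zA_i$. Under the maximal-generation hypothesis, types (a), (a$'$), and (b) all collapse to trivial algebras: (a) is excluded since $\cK$ omits type~\one; for (a$'$), in the idempotent reduct of a module $\cG_s$ has no edges, whence maximal components are singletons and $\Sg{\{c\}}=\{c\}$; for (b), the absorbing element $a^*$ has no outgoing thin edges of any type (each would require a term depending on its first argument, which $a^*$ absorbs), so $\{a^*\}$ is the unique maximal component and $\Sg{\{a^*\}}=\{a^*\}$. Thus each nontrivial $\zA_i$ is of type (c), with no skew congruence on $\zA_i^2$. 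The standard fact that the congruence lattice of a product of two non-isomorphic simple algebras consists only of the equality, the two projection kernels, and the full congruence forces $\zA_2\cong\zA_3$, and $\ker h$ is then a skew congruence on $\zA_2^2$, contradicting type~(c).

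With the equality case excluded, the link congruence on $\zA_1$ is full, and the symmetric argument gives fullness of the analogous link congruences on $\zA_2$ and $\zA_3$. Combining these with the fullness of the binary projections, one checks that $\rel\sse\zA_1\tm(\zA_2\tm\zA_3)$ is linked, so Proposition~\ref{pro:max-gen} applies with $B_1=\as(C_1)$ and $B_2=\as(C_2)\tm\as(C_3)$ (the latter is an as-component of $\zA_2\tm\zA_3$ by Lemma~\ref{lem:as-product}). Since $(c_1,c_2,c_3)\in\rel\cap(B_1\tm B_2)\neq\eps$, we get $B_1\tm B_2\sse\rel$ and hence $C_1\tm C_2\tm C_3\sse\rel$. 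The main obstacle is the Latin-square elimination: confirming that the maximal-generation hypothesis truly collapses the module and absorbing-element classes to trivial algebras, and then deploying the congruence-lattice fact on products of simple algebras to finish the argument.
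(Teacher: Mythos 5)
Your overall plan matches the paper's: reduce to $C_1\tm C_2\tm C_3\sse\rel$ and then propagate by the generating property; split on whether the link congruence of $\rel$ on $\zA_1$ is equality or full (the paper phrases this as whether some block $\rel[a]$ fails to be the graph of a map); in the linked case apply Proposition~\ref{pro:max-gen} via Lemma~\ref{lem:as-product}; in the graph case invoke Theorem~\ref{the:simple-idempotent}. Your observation that the maximal-generation hypothesis kills types~(a), (a$'$), (b) of Theorem~\ref{the:simple-idempotent} is correct and is essentially what the paper uses when it remarks that an absorbing element would be the unique maximal element. The generation argument in your first paragraph and the linked case at the end are fine (and the claim that fullness of $\lnk_1$ forces fullness of $\lnk_2$ for a subdirect product is a routine fact, so you do get linkedness).

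The gap is the appeal to ``the standard fact that the congruence lattice of a product of two non-isomorphic simple algebras consists only of the equality, the two projection kernels, and the full congruence.'' This is \emph{not} a standard fact of universal algebra; it is a theorem in congruence-modular (or skew-free) settings, but for general idempotent algebras omitting type~\one\ it requires proof, and nothing in the paper (or its references) delivers it in this generality. In particular, Theorem~\ref{the:simple-idempotent}(c) only controls skew congruences of $\zA^2$, not of $\zA\tm\zB$ for $\zB\not\cong\zA$; using your ``standard fact'' to conclude $\zA_2\cong\zA_3$ is therefore circular relative to the very thing you want to apply. The paper obtains $\zA_2\cong\zA_3$ without this: each fibre $\rel[a]$ is a subdirect product of the simple algebras $\zA_2,\zA_3$; its link congruences are equality or full, so $\rel[a]$ is either linked (in which case, using the maximal-generation hypothesis together with Corollary~\ref{cor:linkage-rectangularity}, $\rel[a]=\zA_2\tm\zA_3$, contradicting disjointness of the fibres of $h$) or the graph of a bijection, which, being a subalgebra, is automatically the graph of an \emph{isomorphism} $\pi_a\colon\zA_2\to\zA_3$. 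That gives $\zA_2\cong\zA_3$ directly, and then $\ker h=\lnk_{23}$ is a skew congruence of $\zA_2^2$, contradicting type~(c). Replace your ``standard fact'' with this block-by-block analysis (and note that you also need the hypothesis $\rel\cap(C_1\tm C_2\tm C_3)\ne\eps$ to supply the nonempty intersection required by Proposition~\ref{pro:max-gen} when ruling out a linked fibre).
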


\begin{proof}
Note that if $|\zA_i|=1$ the statement is trivial, so we assume 
$|C_i|>1$ for $i\in[3]$. We argue as in the proof of 
Lemma~\ref{lem:3-quasi}. Consider $\rel$ as a subdirect product of 
$\zA_1$ and $\zA_2\tm\zA_3$. Since $\zA_1$ is simple, the link
congruence $\lnk_1$ is either the equality relation or the full congruence. 
In the latter case, as $\rel\cap(C_1\tm C_2\tm C_3)\ne\eps$, by 
Proposition~\ref{pro:max-gen} $C_1\tm C_2\tm C_3\sse\rel$ and the
result follows. Suppose that $\lnk_1$ is the equality relation.

Recall that for $a\in \zA_1$ by $\rel[a]$ we 
denote the set $\rel[a]=\{(b_2,b_3)\mid (a,b_2,b_3)\in \rel\}$. 
Notice that, for every $a\in\zA_1$, $\rel[a]$ is a subalgebra of
$\pr_{23}\rel$, and, since $\pr_{12}\rel=\zA_1\tm\zA_2$,
$\pr_{13}\rel=\zA_1\tm\zA_3$, the algebra $\rel[a]$ is a subdirect
product of $\zA_2,\zA_3$. Since both $\zA_2,\zA_3$ are simple, $\rel[a]$ 
is either linked or the graph of a bijective mapping. Let
$(a_1,a_2,a_3)\in\rel\cap(C_1\tm C_2\tm C_3)$. If $\rel[a_1]$ is linked,
by Proposition~\ref{pro:max-gen} $C_2\tm C_3\sse\rel[a_1]$, which 
contradicts the assumption that $\lnk_1$ is the equality relation. Therefore
$\rel[a_1]$ is the graph of a mapping $\vf:\zA_2\to\zA_3$. The mapping
$\vf$ is an isomorphism between $\zA_2$ and $\zA_3$. Replacing 
$\zA_3$ with an isomorphic copy of $\zA_2$, the link congruence $\lnk_{23}$
(when considering $\rel$ as a subdirect product of $\zA_1$ and 
$\zA_2\tm\zA_3$) is a nontrivial congruence of $\zA_2^2$ one of whose 
blocks is $\Dl=\{(a,a)\mid a\in\zA_2\}$. This implies that $\zA_2$ is a 
module. In particular, $|C_2|=1$, a contradiction.
\end{proof}

Observe that maximal components in Lemma~\ref{lem:3-ary} (and 
hence in the remaining results from this section) cannot be replaced 
with as-components. Indeed, that would include subdirect products of 
modules, for which Lemma~\ref{lem:3-ary} is not true. 

The next lemma generalizes Lemma~\ref{lem:3-ary} to products of multiple algebras. 

\begin{lemma}\label{lem:n-ary}
Let $\rel$ be a subdirect product of simple maximal generated
algebras $\zA_1\zd\zA_n\in\cV$, say, $\zA_i$ is generated by a maximal 
component $C_i$. If $\zA_i\tm \zA_j\sse\pr_{ij}\rel$ for 
every $i,j\in[n]$ and $\rel\cap(\tms Cn)\ne\eps$, then 
$\rel=\zA_1\tm\ldots\tm \zA_n$.
\end{lemma}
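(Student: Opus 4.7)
The plan is to proceed by induction on $n$. The base case $n=3$ is Lemma~\ref{lem:3-ary}, and $n=2$ is trivial since then $\rel = \pr_{12}\rel = \zA_1 \tm \zA_2$ by hypothesis. For the induction step $n \ge 4$, the strategy is to fix the last coordinate, analyze the fibers, and invoke the inductive hypothesis at arity $n-1$.

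First I would use induction on all proper projections of $\rel$. For any $I \subsetneq [n]$ with $|I| \ge 3$, the projection $\pr_I\rel$ is a subdirect product of $\prod_{i \in I}\zA_i$, inherits pairwise-full projections from the hypothesis, and intersects $\prod_{i \in I}C_i$ via $\pr_I\bc$; so by the inductive hypothesis $\pr_I\rel = \prod_{i \in I}\zA_i$. Combined with the $|I|=2$ case supplied by hypothesis, this yields $\pr_I\rel = \prod_{i \in I}\zA_i$ for every proper $I \sse [n]$ with $|I| \ge 2$; in particular $\pr_{[n-1]}\rel = \zB := \zA_1 \tm \cdots \tm \zA_{n-1}$.

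Next, for each $a \in C_n$ consider the fiber $S_a = \{\bx \in \zB : (\bx, a) \in \rel\}$. Applying the previous step to $I \cup \{n\}$ for $|I| \le n-2$ gives $\pr_I S_a = \prod_{i \in I}\zA_i$ for every $I \sse [n-1]$ with $|I| \le n-2$, so $S_a$ is a subdirect product of $\zA_1 \zd \zA_{n-1}$ with pairwise-full projections. The remaining hypothesis of the lemma is $S_a \cap (C_1 \tm \cdots \tm C_{n-1}) \ne \eps$, which I would verify by lifting a semilattice path $c_n = a_0 \le \cdots \le a_k = a$ inside $C_n$ (available because $C_n$ is a strongly s-connected maximal component) to an s-path in $\rel$ starting at $\bc$ via Corollary~\ref{cor:product-path}(1). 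By Corollary~\ref{cor:product-path}(2) the projection of the lifted path to any coordinate $j<n$ is an s-path in $\zA_j$ starting at $c_j$, and since $c_j$ is maximal in $\zA_j$ it cannot leave $\see{\zA_j}{c_j} = C_j$. Thus the endpoint of the lifted path witnesses $(C_1 \tm \cdots \tm C_{n-1}) \tm \{a\} \cap \rel \ne \eps$, the inductive hypothesis yields $S_a = \zB$, and consequently $\zB \tm C_n \sse \rel$.

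Finally, because $\zA_n = \Sg{C_n}$, every $a \in \zA_n$ can be written as $a = t(c^1_n \zd c^m_n)$ for some term $t$ and elements $c^j_n \in C_n$. For any $\bx \in \zB$ the $m$ tuples $(\bx, c^j_n)$ all lie in $\rel$; applying $t$ coordinate-wise fixes the first $n-1$ coordinates by idempotency while the last becomes $a$, giving $(\bx, a) \in \rel$. Hence $\rel = \zA_1 \tm \cdots \tm \zA_n$. The step I expect to be most delicate is the path-lifting argument: verifying that the lifted s-path really remains coordinate-wise inside $\prod_{j<n}C_j$ depends crucially on the s-maximality of each $c_j$ in $\zA_j$, so that an s-path cannot escape its maximal component.
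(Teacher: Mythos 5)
Your proof follows essentially the same strategy as the paper's (induction on $n$, bootstrapping off Lemma~\ref{lem:3-ary}, and analyzing fibers $\rel[a]$ over one coordinate), and it is correct. Where you differ is in rigor rather than route: the paper's proof simply takes $a\in\zA_1$, deduces $\zA_i\tm\zA_j\sse\pr_{ij}\rel[a]$ from $\zA_1\tm\zA_i\tm\zA_j\sse\pr_{1,i,j}\rel$, and then invokes the induction hypothesis on $\rel[a]$ without checking the hypothesis $\rel[a]\cap(C_2\tm\dots\tm C_n)\ne\eps$. For arbitrary $a\in\zA_1$ this is not automatic. You fill this gap in two steps: first restricting to $a\in C_n$ and lifting a semilattice path inside $C_n$ via Corollary~\ref{cor:product-path}, using that an $s$-path starting at a maximal element cannot leave its maximal component to conclude the lifted endpoint lies in $C_1\tm\dots\tm C_{n-1}\tm\{a\}$; then extending from $\zB\tm C_n\sse\rel$ to $\zB\tm\zA_n\sse\rel$ by writing any $a\in\zA_n$ as a term in elements of $C_n$ and exploiting idempotency. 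Both additions are exactly what is needed to make the induction watertight, and your explicit identification of the path-lifting containment as the delicate point is well judged. In short, same decomposition and same key lemma, but a more complete treatment of the inductive hypothesis than the paper's terse version.
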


\begin{proof}
We prove the lemma by induction. The base case of induction $n=3$
has been proved in Lemma~\ref{lem:3-ary}. Suppose that
the lemma holds for each number less than~$n$. Take $a\in
C_1$ and recall that $\rel[a]=\{(b_2\zd b_n)\mid
(a,b_2\zd b_n) \in \rel\}$.  By Lemma~\ref{lem:3-ary}, $\zA_1\tm
\zA_i\tm \zA_j\sse \pr_{1,i,j}\rel$ for any $2\le i,j\le n$.
Then $\zA_i\tm \zA_j\sse\pr_{ij}\rel[a]$. Therefore by induction 
hypothesis $\rel[a]= \zA_2\tm\ldots\tm \zA_n$.  The lemma then follows from the assumption that $C_1$ generates $\zA_1$.
\end{proof}

Lemma~\ref{lem:n-ary} allows one to describe the structure of 
subdirect products of simple maximal generated algebras.

\begin{defin}
A relation $\rel\sse \zA_1\tm\ldots\tm \zA_n$ is said to be {\em
almost trivial} if there exists an equivalence relation $\th$ on the set
$[n]$ with classes $I_1\zd I_k$, such that
\[
\rel=\pr_{I_1}\rel\tm\ldots\tm\pr_{I_k}\rel
\]
where $\pr_{I_j}\rel=\{(a_{i_1},\pi_{i_2}(a_{i_1})\zd
\pi_{i_l}(a_{i_1}))\mid a_{i_1}\in \zA_{i_1}\}$, $I_j=\{i_1\zd i_l\}$,
for certain bijective mappings $\pi_{i_2}\colon \zA_{i_1}\to
\zA_{i_2}\zd \pi_{i_l}\colon \zA_{i_1}\to \zA_{i_l}$.
\end{defin}

\begin{lemma}\label{lem:at-for-simple}
Let $\rel$ be a subdirect product of simple maximal generated  
algebras $\vc\zA n$, say, $\zA_i$ is generated by a maximal component 
$C_i$; and let $\rel\cap(C_1\tm\ldots\tm C_n)\ne\eps$. Then $\rel$ 
is an almost trivial relation.  
\end{lemma}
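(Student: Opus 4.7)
The plan is to analyse the binary projections $\pr_{ij}\rel$, use them to partition $[n]$ into ``diagonal classes'', and then reconstruct $\rel$ from these classes via Lemma~\ref{lem:n-ary}.

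The first step is a binary dichotomy: for every $i,j$, the projection $\pr_{ij}\rel$ is either the full product $\zA_i\tm\zA_j$, or the graph of a bijection $\pi_{ij}\colon\zA_i\to\zA_j$. This is exactly the argument used inside the proof of Lemma~\ref{lem:3-ary}. Because $\zA_i,\zA_j$ are simple, both link congruences of $\pr_{ij}\rel$ are either equality or total. If one of them is equality then $\pr_{ij}\rel$ is the graph of a function, and subdirectness plus simplicity force this function to be a bijection (the degenerate case gives a trivial factor). Otherwise $\pr_{ij}\rel$ is linked, and Proposition~\ref{pro:max-gen} yields $B_i^*\tm B_j^*\sse\pr_{ij}\rel$ for the as-components $B_i^*,B_j^*$ containing $C_i,C_j$, with nonempty intersection inherited from the hypothesis on $\rel$. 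Since $\Sg{C_i}=\zA_i$ and operations are idempotent, each horizontal strip $B_i^*\tm\{b\}$ generates $\zA_i\tm\{b\}$, and iterating in the other coordinate upgrades this to $\zA_i\tm\zA_j\sse\pr_{ij}\rel$.

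I then define an equivalence $\th$ on $[n]$ by declaring $i\,\th\,j$ iff $\pr_{ij}\rel$ is the graph of a bijection. Reflexivity and symmetry are immediate, and transitivity follows from Step~1: if $i\,\th\,j$ and $j\,\th\,k$, then any $(a,b,c)\in\pr_{ijk}\rel$ must satisfy $b=\pi_{ij}(a)$ and $c=\pi_{jk}(b)$, forcing $\pr_{ik}\rel$ into (and, by subdirectness, equal to) the graph of $\pi_{jk}\circ\pi_{ij}$. Let $I_1,\ldots,I_k$ be the $\th$-classes. For each class the projection $\pr_{I_j}\rel$ is fully determined by its value on any chosen representative $i_1\in I_j$ and every value of $\zA_{i_1}$ is realised, producing exactly the diagonal form required by the definition of almost-trivial; in particular each $\pr_{I_j}\rel$ is isomorphic to $\zA_{i_1}$ and so is itself a simple maximal-generated algebra.

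Finally, I regard $\rel$ as a subdirect product of the algebras $\pr_{I_1}\rel,\ldots,\pr_{I_k}\rel$. For distinct classes $I_a,I_b$ with representatives $i_a,i_b$ we have $\pr_{i_ai_b}\rel=\zA_{i_a}\tm\zA_{i_b}$ by Step~1, and the diagonal shape of $\pr_{I_a}\rel$ and $\pr_{I_b}\rel$ then forces $\pr_{I_a\cup I_b}\rel=\pr_{I_a}\rel\tm\pr_{I_b}\rel$. Moreover the tuple witnessing $\rel\cap(\tms Cn)\ne\eps$ projects to a tuple in the product of the maximal components of the factors $\pr_{I_j}\rel$, so Lemma~\ref{lem:n-ary} applies and yields $\rel=\pr_{I_1}\rel\tm\cdots\tm\pr_{I_k}\rel$, the required almost-trivial form. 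I expect the main obstacle to be Step~1, where one must upgrade the as-component rectangle produced by Proposition~\ref{pro:max-gen} to a full Cartesian square using idempotency together with the fact that $C_i$ generates $\zA_i$; the rest is bookkeeping plus one invocation of Lemma~\ref{lem:n-ary}.
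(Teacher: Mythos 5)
Your proposal is correct and rests on the same two pillars as the paper's own argument --- the binary dichotomy (each $\pr_{ij}\rel$ is either $\zA_i\tm\zA_j$ or the graph of a bijection) and Lemma~\ref{lem:n-ary} --- but it reorganizes the global structure. The paper proceeds by induction on $n$: if some $\pr_{ij}\rel$ is a graph, drop coordinate $j$, apply the inductive hypothesis to get an almost-trivial decomposition of $\pr_{[n]-\{j\}}\rel$, and graft $j$ back onto the block containing $i$; the Lemma~\ref{lem:n-ary} case only appears as the terminal branch where every binary projection is full. You instead avoid the induction on $n$ entirely: you define the equivalence $\th$ up front, verify its transitivity directly from the graph-composition argument, observe that between distinct $\th$-classes the blockwise binary projections are full, and then make a single call to Lemma~\ref{lem:n-ary} on the contracted relation over $\pr_{I_1}\rel,\ldots,\pr_{I_k}\rel$. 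Both routes give the same result; yours is arguably cleaner because it makes the partition of $[n]$ an explicit object rather than something accreted coordinate by coordinate. Two small points worth being explicit about if you write this up: (i) the upgrade in Step~1 from $B_i^*\tm B_j^*\sse\pr_{ij}\rel$ (what Proposition~\ref{pro:max-gen} literally gives) to $\zA_i\tm\zA_j\sse\pr_{ij}\rel$ using $\Sg{C_i}=\zA_i$ and idempotency is genuinely needed and is glossed over when the paper cites Corollary~\ref{cor:linkage-rectangularity}; you do well to spell it out; (ii) Lemma~\ref{lem:n-ary} as written starts its explicit induction at $n=3$, so the cases $k\le2$ need the (trivial) pairwise rectangularity observation you already supply, rather than a literal invocation of that lemma.
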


\begin{proof}
We prove the lemma by induction on $n$. When $n=1$ the result 
holds trivially. 

We now prove the induction step. By Proposition~\ref{pro:max-gen}, 
for any pair $i,j\in[n]$ the projection $\pr_{ij}\rel$ is either $\zA_i\tm\zA_j$, 
or the graph of a bijective mapping. Assume that there exist $i,j$
such that $\pr_{ij}\rel$ is the graph of a mapping
$\pi\colon\zA_i\to\zA_j$. By the inductive hypothesis $\pr_{[n]-\{j\}}\rel$ 
is almost trivial, and therefore can be represented in the form
\[
\pr_{[n]-\{j\}}\rel=\pr_{I_1}\rel\tm\ldots\tm\pr_{I_k}\rel
\]
where $I_1\cup\ldots\cup I_k=[n]-\{j\}$. Suppose, for simplicity,
that $i$ is the last coordinate position in $I_1$, that is,
\begin{eqnarray*}
\pr_{I_1}\rel &=& \{(a_{i_1}\zd a_{i_k},a_i)\mid a_{i_1}\in\zA_{i_1}, \
a_{i_s}=\pi_{s1}(a_{i_1})\\
& & \hbox{for}\ s\in\{2\zd k\},\ a_i=\pi_i(a_{i_1})\}.
\end{eqnarray*}
Then 
\begin{eqnarray*}
\pr_{I_1\cup\{j\}}\rel &=& \{(a_{i_1}\zd a_{i_k},a_i,a_j)\mid a_{i_1}\in\zA_{i_1}, \
a_{i_s}=\pi_{s1}(a_{i_1})\\
& & \hbox{for}\ s\in\{2\zd k\},\ a_i=\pi_i(a_{i_1}),\ a_j=\pi\pi_i(a_{i_1})\},
\end{eqnarray*}
and we have $\rel=\pr_{I_1\cup\{j\}}\rel\tm\ldots\tm\pr_{I_k}\rel$, as
required. 

Finally, if $\pr_{ij}\rel=\zA_i\tm\zA_j$ for all $i,j\in\un n$, then
the result follows by Lemma~\ref{lem:n-ary}.
\end{proof}

\subsection{General maximal generated algebras}

Here we consider the case when factors of a subdirect product are maximal
generated, but not necessarily simple.

\begin{lemma}\label{lem:n-simple}
Suppose that $\rel$ is a subdirect product of maximal generated algebras
$\zA_1\zd\zA_n\in\cV$, where $\zA_1$ is simple. Let also $\zA_1$
be generated by a maximal component $C_1$, $\pr_{2\zd n}\rel$ 
is maximal generated, say, by a maximal component $\relo$, 
$\rel\cap(C_1\tm\relo)\ne\eps$, and $\pr_{1i}\rel=\zA_1\tm \zA_i$ for 
$i\in\{2\zd n\}$. Then $\rel=\zA_1\tm\pr_{2\zd n}\rel$.
\end{lemma}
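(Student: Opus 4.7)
The plan is to reduce the lemma to establishing the rectangle $C_1\tm\relo\sse\rel$, from which $\rel=\zA_1\tm\pr_{2\zd n}\rel$ follows by a standard term-generation argument. Indeed, once $C_1\tm\relo\sse\rel$ is in hand, for any $a\in\zA_1=\Sg{C_1}$ one may write $a=t(\vc ck)$ with $c_i\in C_1$, and idempotency gives $(a,\bd)=t((c_1,\bd)\zd(c_k,\bd))\in\rel$ for every $\bd\in\relo$, so $\zA_1\tm\relo\sse\rel$. Iterating the same trick with $\pr_{2\zd n}\rel=\Sg{\relo}$ (writing a general $\bb=s(\vc\bd m)$ with $\bd_j\in\relo$) extends this to $\zA_1\tm\pr_{2\zd n}\rel\sse\rel$. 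Since $\rel\cap(C_1\tm\relo)\ne\eps$ is hypothesised, Lemma~\ref{lem:buket} further reduces the rectangle $C_1\tm\relo\sse\rel$ to exhibiting a single $a\in\zA_1$ with $\{a\}\tm\relo\sse\rel$.

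To produce such an $a$ I would induct on $n$. The base case $n=2$ is immediate from $\pr_{12}\rel=\zA_1\tm\zA_2$. For the induction step, set $\zB_*=\pr_{2\zd n-1}\rel$ and $\relo_*=\pr_{2\zd n-1}\relo$. Corollary~\ref{cor:product-path}(2) shows that $\relo_*$ is strongly connected in $\cG_s(\zB_*)$, Corollary~\ref{cor:product-maximal}(2) that its elements are maximal in $\zB_*$, and $\zB_*=\Sg{\relo_*}$ follows by pushing any term expression through the projection; thus $\zB_*$ is maximal generated by the maximal component of $\zB_*$ containing $\relo_*$. The restricted relation $\pr_{1\zd n-1}\rel$ inherits all hypotheses of the lemma, so the inductive hypothesis gives $\pr_{1\zd n-1}\rel=\zA_1\tm\zB_*$. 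Viewing $\rel$ as a subdirect product of the three algebras $\zA_1$, $\zB_*$ and $\zA_n$, two of the three binary projections---$\zA_1\tm\zB_*$ and $\zA_1\tm\zA_n$---are already full.

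The final step is a variant of Lemma~\ref{lem:3-ary} in which only the first of the three factors is required to be simple. I would view $\rel$ as a subdirect product of $\zA_1$ and $\pr_{2\zd n}\rel$ and analyse the link congruence $\lnk_1$ of $\rel$ on $\zA_1$: simplicity of $\zA_1$ forces $\lnk_1\in\{\Delta,\nabla\}$. When $\lnk_1=\nabla$, Corollary~\ref{cor:linkage-rectangularity} yields as-components $B_1\sse\zA_1$ and $B_2$ of the relevant $\lnk_2$-block with $B_1\tm B_2\sse\rel$; since every maximal component of $\cG_s$ sits inside an as-component (because SCCs of $\cG_s$ refine those of $\cG_{as}$), this delivers $C_1\tm\relo\sse\rel$ once one checks, via the fullness of $\pr_{1\zd n-1}\rel=\zA_1\tm\zB_*$ and a lifting argument using Corollary~\ref{cor:product-path}(1), that $\relo$ is contained in the $\lnk_2$-block chosen.

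The main obstacle is the remaining case $\lnk_1=\Delta$, where $\rel$ is the graph of a surjective homomorphism $\vf\colon\pr_{2\zd n}\rel\to\zA_1$ and a contradiction with the hypotheses must be extracted. Theorem~\ref{the:simple-idempotent} applied to the simple maximal-generated $\zA_1$ shows that its only non-trivial possibility is that $\zA_1^2$ has no skew congruence, since the set, module and absorbing-element alternatives each collapse the unique maximal component $C_1$ to a singleton and hence, by maximal generation, force $|\zA_1|=1$ and the lemma holds trivially. I would then leverage the surjectivity of each $(\vf,\pr_i)\colon\pr_{2\zd n}\rel\to\zA_1\tm\zA_i$ together with the fact that $\pr_{2\zd n}\rel$ is maximal generated by $\relo$ to show that this rigid graph structure is incompatible with $\zA_1^2$ having no skew congruence, yielding the required contradiction and completing the induction.
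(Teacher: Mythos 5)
Your high-level strategy is genuinely different from the paper's, and the first two reductions are sound: the term-generation argument showing that $C_1\tm\relo\sse\rel$ implies $\zA_1\tm\pr_{2\zd n}\rel\sse\rel$, and the further reduction via Lemma~\ref{lem:buket} to exhibiting a single $a\in\zA_1$ with $\{a\}\tm\relo\sse\rel$, are both correct and cleaner than what the paper does. However, both branches of your $\lnk_1$ case analysis have gaps.

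In the $\lnk_1=\nabla$ branch you invoke Corollary~\ref{cor:linkage-rectangularity}, which speaks only of as-components, and then claim that $C_1$ ``sits inside an as-component because SCCs of $\cG_s$ refine those of $\cG_{as}$.'' That refinement only places $C_1$ inside \emph{some} SCC of $\cG_{as}$; it does not say that this SCC is \emph{maximal}, which is what ``as-component'' means in the paper. A maximal element of $\cG_s$ could have an outgoing affine edge it cannot return from, so the claim ``maximal implies as-maximal'' is not automatic and is nowhere proved. You would need either to establish that implication or to prove a maximal-component analogue of Proposition~\ref{pro:max-gen} (the paper gestures at such an analogue only in Lemma~\ref{lem:buket}, where the proof says ``for maximal components the proof is nearly identical,'' but never states it for Proposition~\ref{pro:max-gen} or Corollary~\ref{cor:linkage-rectangularity}).

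The $\lnk_1=\Delta$ branch is the real crux, and you only sketch it. You propose to extract a skew congruence on $\zA_1^2$, but it is unclear how the graph $\rel=\{(\vf(\bc),\bc)\}$ together with the fullness of $\pr_{1i}\rel$ yields such a congruence. The paper's analogous argument (inside Lemma~\ref{lem:3-ary}) produces a skew congruence on $\zA_2^2$ by showing that the fibers $\rel[a]$ for $a\in\zA_1$ are graphs of bijections $\zA_2\to\zA_3$ and hence that $\lnk_{23}$ is skew on $\zA_2\tm\zA_3\cong\zA_2^2$ --- but this crucially needs $\zA_2,\zA_3$ simple, which is \emph{not} among the hypotheses of Lemma~\ref{lem:n-simple}. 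Precisely because only $\zA_1$ is assumed simple here, the paper takes a completely different route: it sub-inducts on $|\zA_1|+|\zA_2|+|\zA_3|$, picks a maximal congruence $\th$ of the non-simple factor $\zA_3$, restricts and quotients to get $\rel',\rel'',\rel'''$, and splits on whether $\pr_{23}\rel'$ is a graph (Case~1) or full (Case~2), reducing to smaller instances and finishing via Lemma~\ref{lem:buket}. Until you can show that the graph case for $\vf:\pr_{2\zd n}\rel\to\zA_1$ is impossible under the stated hypotheses --- or find a different way around it --- the proof is incomplete.
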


\begin{proof}
We prove the lemma by induction on $n$. The case $n=2$ is obvious. 
Consider the case $n=3$. We use induction on 
$|\zA_1|+|\zA_2|+|\zA_3|$. The trivial case
$|\zA_1|+|\zA_2|+|\zA_3|=3$ gives the base  
case of induction. Let $\relo$ be a maximal component of $\pr_{23}\rel$
generating it. If both $\zA_2,\zA_3$ are simple, 
then the result follows from Lemma~\ref{lem:at-for-simple}. Otherwise, 
suppose that $\zA_3$ is not simple. 

Note that by Corollary~\ref{cor:product-path}(1) $\rel$ contains some subdirect product of $C_1$ and $\relo$. Take a maximal congruence $\th$
of $\zA_3$, fix a $\th$-class $D$ such that $\pr_3\relo\cap D\ne\eps$ (we keep the  indexing of coordinates as in $\rel$) and consider
$\rel\fac\th\sse\zA_1\tm\zA_2\tm\zA_3\fac\th$, $\rel_D\sse\rel$ such that
\begin{eqnarray*}
\rel\fac\th  &=& \{(a,b,c\fac\th)\mid (a,b,c)\in\rel\},\\
\rel_D &=& \{(a,b,c)\mid (a,b,c)\in\rel, c\in D\}.
\end{eqnarray*}
By what is observed above $\rel_D\cap(C_1\tm\relo)\ne\eps$. Pick 
$\ba\in\rel_D\cap(C_1\tm\relo)$ and consider a maximal component $\rela$ 
of $\rel_D$ such that $\ba\sqq\bb$ for some $\bb\in\rela$. By 
Corollary~\ref{cor:product-path}(1) $\rela\sse\rel_D\cap(C_1\tm\relo)$. Let 
$\rel'\sse\rel$ be the algebra generated by $\rela$. Clearly, 
$\pr_1\rela\cap C_1\ne\eps$ and 
$\pr_{23}\rela\cap\relo\ne\eps$. Note that since 
$\zA_1\tm\zA_3\sse\pr_{13}\rel$, by Corollary~\ref{cor:product-path}(1) 
this implies $\pr_1\rela\cap C_1=C_1$. Also, obviously,
$\pr_{13}\rel_D=\zA_1\tm D$. The projection $\pr_3\rela$ is a maximal 
component of $D$; denote it $E$ and denote by $E'$ the algebra generated by 
$E$. Since $\zA_1\tm E\sse\pr_{13}\rel$ and 
$(C_1\tm E)\cap\pr_{13}\rel'\ne\eps$, again by 
Corollary~\ref{cor:product-path}(1) we obtain $C_1\tm E\sse\pr_{13}\rel'$. Therefore $\zA_1\tm E'\sse\pr_{13}\rel'$.
By Proposition~\ref{pro:max-gen}, $\pr_{23}\rel\fac\th$ is either the graph of a 
mapping, or $\zA_2\tm\zA_3\fac\th$.  

\bigskip

\noindent
{\sc Case 1.}
$\pr_{23}\rel\fac\th$ is the graph of a mapping
$\pi\colon\zA_2\to\zA_3\fac\th$. 

\medskip

As before, Corollary~\ref{cor:product-path}(1) implies that $F=\pr_2\rela$ is a maximal 
component of $B=\pi^{-1}(D)$.  Then $F'=\pr_2\rel'$ is generated by $F$.
Since for each $(a,b)\in\zA_1\tm B\sse\pr_{12}\rel$ there is $c\in D$ with
$(a,b,c)\in\rel$, we have $\zA_1\tm B\sse\pr_{12}\rel_D$. Also, 
$\pr_{12}\rela\cap(C_1\tm F)\ne\eps$, say, $\ba\in\pr_{12}\rela\cap(C_1\tm F)$. 
As $\rela$ is a maximal component in $\rel_D$, for any $\bb\in C_1\tm F$ such 
that $\ba\sqq\bb$, we have $\bb\in\pr_{12}\rela\cap(C_1\tm F)$. This implies 
$C_1\tm F\sse\pr_{12}\rela$. Thus,  
$\pr_{12}\rel'=\zA_1\tm F'$. 

Since $|\zA_1|+|F'|+|E'|<|\zA_1|+|\zA_2|+|\zA_3|$, and
$\pr_{23}\rel'$ is maximal generated, inductive hypothesis
implies $\zA_1\tm\pr_{23}\rel'\sse\rel'$. In particular, 
there is $(a,b)\in\pr_{23}\rel'\cap\relo\sse\pr_{23}\rel$ such that
$\zA_1\tm\{(a,b)\}\sse\rel$. To finish the proof we just apply
Lemma~\ref{lem:buket}.

\bigskip

\noindent
{\sc Case 2.}
$\pr_{23}\rel\fac\th=\zA_2\tm\zA_3\fac\th$. 

\medskip

Since $|\zA_1|+|\zA_2|+|\zA_3\fac\th|<|\zA_1|+|\zA_2|+|\zA_3|$,
$\zA_3\fac\th$ is simple, and $\pr_{12}\rel=\zA_1\tm\zA_2$, by
inductive hypothesis, $\rel\fac\th=\zA_1\tm\zA_2\tm\zA_3\fac\th$. Therefore,
$\pr_{12}\rel_D=\zA_1\tm\zA_2$. Then $\pr_{12}\rel'=\zA_1\tm\zA_2$. 
Indeed, let $C_2=\pr_2\relo$, it is a maximal component of $\zA_2$, and
$\zA_2$ is generated by $C_2$. Moreover, $\zA_1\tm\zA_2$ is generated
by $C_1\tm C_2$. By the choice of $\rel'$, there is 
$(a,b,c)\in\rel'\cap(C_1\tm C_2\tm E)$. By Corollary~\ref{cor:product-path}(1)
for any $(a',b')\in C_1\tm C_2$ there is a path from $(a,b,c)$ to $(a',b',c')$ for
some $c'\in E$.

Now we argue as in Case~1, except in this case $F'=\zA_2$.

\medskip

Let 
us assume that the lemma is proved for $n-1$. Then
$\zA_1\tm\pr_{3\zd n}\rel\sse\pr_{1,3\zd n}\rel$. Denoting $\pr_{3\zd
n}\rel$ by $\rel'$ we have $\rel\sse\zA_1\tm\zA_2\tm\rel'$, and the
conditions of the lemma hold for this subdirect product. Thus
$\rel=\zA_1\tm\pr_{2\zd n}\rel$ as required.
\end{proof}

Lemma~\ref{lem:n-simple} serves as the base case for the following  more
general statement.

\begin{lemma}\label{lem:relation-direct-product}
Let $\rel$ be a subdirect product of algebras $\zA_1\zd\zA_n\in\cV$. 
Let $\zA_1$ be generated by a maximal component $C_1$, $\pr_{2\zd n}\rel$ 
is maximal generated, say, by a maximal component $\relo$, 
$\rel\cap(C_1\tm\relo)\ne\eps$, and $\pr_{1,i}\rel=\zA_1\tm \zA_i$ for 
$i\in\{2\zd n\}$, Then $\rel=\zA_1\tm\pr_{2\zd n}\rel$.
\end{lemma}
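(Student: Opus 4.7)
I propose to prove the lemma by induction on $|\zA_1|$, with Lemma~\ref{lem:n-simple} supplying the base case: when $\zA_1$ is simple, its hypotheses coincide with ours and give $\rel=\zA_1\tm\pr_{2\ldots n}\rel$ immediately.

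For the inductive step, let $\zA_1$ be non-simple and fix a maximal congruence $\th$ of $\zA_1$, so $\zA_1\fac\th$ is simple. First apply Lemma~\ref{lem:n-simple} to $\rel\fac\th$ viewed as a subdirect product of $(\zA_1\fac\th)\tm\zA_2\tm\ldots\tm\zA_n$: its hypotheses transfer since, by Corollary~\ref{cor:quotient-max}(2), the image of $C_1$ lies in a maximal component $C_1^*$ of $\zA_1\fac\th$ and $\zA_1\fac\th=\Sg{C_1\fac\th}\sse\Sg{C_1^*}$; $\pr_{2\ldots n}(\rel\fac\th)=\pr_{2\ldots n}\rel$ is unchanged; any witness for $\rel\cap(C_1\tm\relo)$ projects to one for $(\rel\fac\th)\cap(C_1^*\tm\relo)$; and the binary-projection conditions descend. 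Hence $\rel\fac\th=(\zA_1\fac\th)\tm\pr_{2\ldots n}\rel$, meaning every $\bb\in\pr_{2\ldots n}\rel$ extends within $\rel$ to a tuple whose first coordinate lies in any prescribed $\th$-class. Pick $a_0\in C_1$ and $\bb_0\in\relo$ with $(a_0,\bb_0)\in\rel$, set $D=a_0\fac\th$, and let $\rel_D=\rel\cap(D\tm\zA_2\tm\ldots\tm\zA_n)$. Then $\pr_{2\ldots n}\rel_D=\pr_{2\ldots n}\rel$ and $\pr_{1i}\rel_D=D\tm\zA_i$, while $C_1\cap D$ is a maximal component of $D$ (maximality in $\zA_1$ descends to subalgebras, and s-paths among elements of $C_1$ stay within $D$ by congruence). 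Applying the inductive hypothesis to the restriction of $\rel_D$ to the max-generated subalgebra $D'=\Sgg{D}{C_1\cap D}$, one concludes $D'\tm\pr_{2\ldots n}\rel\sse\rel$, and in particular $\{a_0\}\tm\relo\sse\rel$.

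Now invoke Lemma~\ref{lem:buket} with $\rel$ viewed as a subdirect product of $\zA_1\tm\pr_{2\ldots n}\rel$, taking maximal components $C_1,\relo$ and the witness $a_0$ to obtain $C_1\tm\relo\sse\rel$. A standard idempotent term-composition argument then shows $\Sg{C_1\tm\relo}=\zA_1\tm\pr_{2\ldots n}\rel$: given $(a,\bb)\in\zA_1\tm\pr_{2\ldots n}\rel$ with $a=f(c_1,\ldots,c_k)$, $c_i\in C_1$, and $\bb=g(b_1,\ldots,b_m)$, $b_j\in\relo$, the term $p(x_{ij})=g(f(x_{11},\ldots,x_{k1}),\ldots,f(x_{1m},\ldots,x_{km}))$ applied to the pairs $(c_i,b_j)\in C_1\tm\relo$ yields $(a,\bb)$ by idempotence. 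Hence $\zA_1\tm\pr_{2\ldots n}\rel\sse\rel$, and the reverse containment is immediate.

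The principal obstacle is the application of the inductive hypothesis to the restriction to $D'$: we need the projection $\pr_{2\ldots n}$ of that restriction to still equal $\pr_{2\ldots n}\rel$, so that max-generation by $\relo$ is inherited. When $\Sgg{D}{C_1\cap D}$ is a proper subalgebra of $D$, this seems to require strengthening the inductive statement to permit $\zA_1$ that is not max-generated, with the weakened conclusion $\Sg{C_1}\tm\pr_{2\ldots n}\rel\sse\rel$; the lemma as stated then follows as the special case $\zA_1=\Sg{C_1}$.
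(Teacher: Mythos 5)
Your high-level plan — base case via Lemma~\ref{lem:n-simple}, inductive step by quotienting through a maximal congruence $\th$ and then descending into a $\th$-class $D$ — matches the paper's proof. But the claim that ``$C_1\cap D$ is a maximal component of $D$'', justified by ``maximality in $\zA_1$ descends to subalgebras,'' is wrong. If $a\in C_1\cap D$ and $a\sqq_D b$, then $a\sqq_{\zA_1}b$, and since $a$ is maximal in $\zA_1$ we get $b\sqq_{\zA_1}a$; but the witnessing $s$-path back to $a$ may leave $D$, so $b\sqq_D a$ need not hold. Thus $a$ can fail to be maximal in $D$, and $C_1\cap D$ need not be a strongly connected component of $\cG_s(D)$. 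What is true, and what the paper actually uses, is that $D$ contains \emph{some} maximal component $B$ with $B\sse C_1$: take $d\in C_1\cap D$, note $\Filt_D(d)\sse\Filt_{\zA_1}(d)=\se d=C_1$, and let $B$ be any maximal component of $D$ reachable from $d$ inside $D$.

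The gap you flag — that restricting to a proper subalgebra $D'\subsetneq D$ may shrink $\pr_{2\ldots n}$ — is genuine, and the paper resolves it by strengthening the inductive invariant: it poses the statement for subalgebras $\rela\le\zA_1$ satisfying an explicit list of conditions, one of which asserts directly that $\relo\sse\pr_{2\ldots n}(\rel\cap(\rela\tm C_2\tm\ldots\tm C_n))$, with the weaker conclusion $\{d\}\tm\relo\sse\rel$ for $d$ maximal in $C_1\cap\rela$; these conditions are shown to persist down the induction via Corollary~\ref{cor:product-path}(1). Your proposed fix (dropping the max-generation requirement on the first factor, concluding only $\Sg{C_1}\tm\pr_{2\ldots n}\rel\sse\rel$) is in the same spirit and would close the induction, provided you (i) replace $C_1\cap D$ by a maximal component $B\sse C_1$ of $D$ as above, (ii) produce a witness in $\rel_D\cap(B\tm\relo)$ by extending the $s$-path from $a_0$ to some $b\in B$ via Corollary~\ref{cor:product-path}(1), and (iii) note that $\pr_{2\ldots n}\rel_D=\pr_{2\ldots n}\rel$ follows directly from $\rel\fac\th=(\zA_1\fac\th)\tm\pr_{2\ldots n}\rel$. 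Your finishing steps — Lemma~\ref{lem:buket} to get $C_1\tm\relo\sse\rel$, then the idempotent term-composition identity to pass to all of $\zA_1\tm\pr_{2\ldots n}\rel$ — are correct; the second of these is implicit in the paper and you have usefully spelled it out.
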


\begin{proof}
For every $i\in\{2\zd n\}$ the set $C_i=\pr_i\relo$ is a maximal component.
Moreover, $\zA_i$ is generated by $C_i$, and therefore is also maximal
generated.
 
We show that for any subalgebra $\rela$ of $\zA_1$ such that
(a) $\rela\cap C_1\ne\eps$, (b) $\rela$ is maximal generated by its
elements from $C_1$, 
and 
(c) $\relo\sse\pr_{2\zd n}(\rel\cap(\rela\tm C_2\tm\dots\tm C_n))$,
the following holds: $\{d\}\tm\relo\sse\rel$ for any $d\in\max(C_1\cap\rela)$.

We prove by induction on the size of $\rela$. If $|\rela|=1$, then its
only element belongs to $C_1$ by (a) and is maximal. Then (c) is
equivalent to the claim. Suppose that the result holds for all
subalgebras satisfying (a)--(c) smaller than $\rela$. If
$\rela$ is simple then the result follows from Lemma~\ref{lem:n-simple},
since conditions (a)--(c) imply the premises of 
Lemma~\ref{lem:n-simple}. (Note that $\pr_{1,i}\rel=\zA_1\tm \zA_i$ implies $\max(C_1\cap\rela)\tm C_i\sse\pr_{1i}\rel$ for $i\in\{2\zd n\}$.) Otherwise let $\th$ be a maximal congruence of
$\rela$, let $\rel'=\{(c_1,c_2\zd c_n)\in\rel\mid c_1\in\rela\}$, and let  
\[
\rel^\th=\{(c_1\fac\th,c_2\zd c_n)\mid (c_1,c_2\zd c_n)\in \rel'\}.
\]
By Lemma~\ref{lem:n-simple}
$\rel^\th=\rela\fac\th\tm\pr_{2\zd n}\rel$. Take a class $\rela'$ of
$\th$ containing elements from $C_1$. Observe that $\rela'$
satisfies condition (c). As is easily seen there is a maximal component $B$
of $\rela'$ containing elements from $C_1$. Indeed, take any
$d\in C_1\cap\rela'$, then $\Filt_{S'}^s(d)\sse C_1$. Let $\rela''$
be a subalgebra of $\rela'$ generated by $B$, we show it satisfies
(a)--(c). 

Conditions (a) and (b) are true by the choice of $\rela''$.
For condition (c) observe first that
$\relo\sse\pr_{2\zd n}(\rel\cap(\max(\rela')\tm \relo))$.
As for any $d\in C_1\cap \rela'$ there is $(a_2\zd a_n)\in\relo$ 
with $(d,a_2\zd a_n)\in\rel$ (it follows from the condition 
$\rel\cap(C_1\tm\relo)\ne\eps$ of the lemma and 
Corollary~\ref{cor:product-path}(1)), applying 
Corollary~\ref{cor:product-path}(1)  again
$\relo\sse \pr_{2\zd n}(\rel\cap(B\tm\relo))$,  
and (c) is also true for $\rela''$. By the inductive hypothesis 
$\{d\}\tm\relo\sse \rel$ for $d\in B$. Applying 
Proposition~\ref{pro:max-gen} we obtain the result.  

Finally, since $\zA_1$ contains $C_1$ and satisfies conditions (a)--(c), it follows 
that $C_1\tm\relo\sse\rel$, and the result is proved. 
\end{proof}

\begin{corollary}\label{cor:max-comp-product}
Let $\rel$ be a subdirect product of algebras
$\zA_1\zd\zA_n\in\cV$ such that $\pr_{1i}\rel$ is linked for any $i\in\{2\zd n\}$. 
Let also $\ba\in\rel$ be such that $\ba[1]\in\max(\zA_1)$ and 
$\pr_{2\dots n}\ba\in\max(\pr_{2\dots n}\rel)$. Then 
$\se{\ba[1]}\tm\se{\pr_{2\dots n}\ba}\sse\rel$.
\end{corollary}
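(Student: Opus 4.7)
The plan is to reduce this corollary to Lemma~\ref{lem:relation-direct-product} by restricting each factor to the subalgebra generated by the relevant maximal component, and then using the linkage hypothesis together with Proposition~\ref{pro:max-gen} to verify the ``product projection'' condition $\pr_{1i}=\zA_1\tm\zA_i$ that Lemma~\ref{lem:relation-direct-product} requires.

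First I would set $C_1=\see{\zA_1}{\ba[1]}$ and $C''=\see{\pr_{2\dots n}\rel}{\pr_{2\dots n}\ba}$, and let $\zB_1=\Sgg{\zA_1}{C_1}$, $\cS=\Sgg{\pr_{2\dots n}\rel}{C''}$, and $\zB_i=\pr_i\cS$ for $i\in\{2\zd n\}$. Define the restricted relation $\rel^*=\rel\cap(\zB_1\tm\zB_2\tm\dots\tm\zB_n)$, which equals $\rel\cap(\zB_1\tm\cS)$. The first task is to show $\rel^*$ is subdirect in $\zB_1\tm\cS$: given $a\in C_1$, extend an s-path from $\ba[1]$ to $a$ in $\zA_1$ to an s-path in $\rel$ starting at $\ba$ via Corollary~\ref{cor:product-path}(1); since $\pr_{2\dots n}\ba$ is maximal, the endpoint $\bc$ satisfies $\pr_{2\dots n}\bc\in C''\sse\cS$, so $\bc\in\rel^*$, and closure under term operations yields $\zB_1\sse\pr_1\rel^*$. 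A symmetric argument gives $\cS\sse\pr_{2\dots n}\rel^*$. At the same time, elements of $C_1$ remain strongly s-connected within $\zB_1$ (using that for $a,a'\in C_1$, the element $a\cdot a'\in\zB_1$ is s-above $a$ and also in $C_1$, and iterating), so $C_1$ is a single maximal component of $\zB_1$; an analogous statement holds for $C''$ inside $\cS$. Thus $\zB_1$ and $\cS=\pr_{2\dots n}\rel^*$ are maximal generated, and $\ba\in\rel^*\cap(C_1\tm C'')$.

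The critical hypothesis of Lemma~\ref{lem:relation-direct-product} left to verify is $\pr_{1i}\rel^*=\zB_1\tm\zB_i$ for every $i\in\{2\zd n\}$. Since $\pr_{1i}\rel$ is linked and $\ba[1],\ba[i]\in\max\sse\amax$, Proposition~\ref{pro:max-gen} applied to $\pr_{1i}\rel$ gives $\as(\ba[1])\tm\as(\ba[i])\sse\pr_{1i}\rel$, and in particular $C_1\tm\pr_i C''\sse\pr_{1i}\rel$. To upgrade this to membership in $\pr_{1i}\rel^*$, I take an arbitrary $(a,b)\in C_1\tm\pr_i C''$: first produce a tuple $\bd\in C''$ with $\bd[i]=b$ by extending an s-path from $\ba[i]$ to $b$ in $\zA_i$ to an s-path in $\pr_{2\dots n}\rel$ starting at $\pr_{2\dots n}\ba$; then extend an s-path from $\ba[1]$ to $a$ in $\zA_1$ to a path in $\rel$ starting at $\ba$, producing $\bc^{(1)}\in\rel^*$ with $\bc^{(1)}[1]=a$, $\pr_{2\dots n}\bc^{(1)}\in C''$; finally connect $\pr_{2\dots n}\bc^{(1)}$ to $\bd$ by an s-path inside $C''$ (possible by strong s-connectivity of $C''$) and lift through $\rel$ to reach a tuple in $\rel^*$ whose $i$-coordinate is $b$ and first coordinate lies in $C_1$. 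Applying term operations and the linkage-based rectangularity once more extends this to all of $\zB_1\tm\zB_i$.

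Once the four hypotheses are in place, Lemma~\ref{lem:relation-direct-product} yields $\rel^*=\zB_1\tm\cS$, and hence $\se{\ba[1]}\tm\se{\pr_{2\dots n}\ba}=C_1\tm C''\sse\zB_1\tm\cS=\rel^*\sse\rel$, which is the desired conclusion. The main obstacle is the last bullet of the verification of Lemma~\ref{lem:relation-direct-product}'s hypotheses, namely $\pr_{1i}\rel^*=\zB_1\tm\zB_i$: the linkage hypothesis only supplies rectangularity on as-components via Proposition~\ref{pro:max-gen}, and one must carefully manufacture witnesses lying in $\rel^*$ (that is, with \emph{all} coordinates inside the generated subalgebras) rather than merely in $\rel$. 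This is where the path-extension lemmas (Corollary~\ref{cor:product-path}, Lemma~\ref{lem:as-rectangularity}) and the fact that $C''$ is itself an s-strongly-connected component do the real work.
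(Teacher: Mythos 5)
Your overall plan matches the paper's: restrict $\rel$ to the subalgebras generated by the relevant maximal components, verify the hypotheses of Lemma~\ref{lem:relation-direct-product}, and conclude. The paper works with $\rel'$, the subalgebra of $\rel$ generated by $\rel\cap(C\tm D)$ where $C=\se{\ba[1]}$, $D=\se{\pr_{2\dots n}\ba}$, while you use $\rel^*=\rel\cap(\zB_1\tm\cS)$; these can differ as sets ($\rel'\sse\rel^*$ in general), but either choice can be made to work, and the setup steps (subdirectness of $\rel^*$, $C_1$ and $C''$ being maximal components of $\zB_1$ and $\cS$) are fine.

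The genuine gap is exactly where you flag it, in the verification of $\pr_{1i}\rel^*=\zB_1\tm\zB_i$. Your construction is sequential: first an s-path in $\zA_1$ lifted through $\rel$ to reach a tuple $\bc^{(1)}$ with first coordinate $a$, then an s-path in $\pr_{2\dots n}\rel$ lifted through $\rel$ to fix coordinate $i$ at $b$. The problem is that during the second lift the first coordinate is free to wander; what you end up with is some tuple whose $i$-th coordinate is $b$ and whose first coordinate is merely \emph{some} element of $C_1$, not the prescribed $a$. The closing sentence ``applying term operations and the linkage-based rectangularity once more'' does not repair this: $\pr_{1i}\rel^*$ need not be linked, Proposition~\ref{pro:max-gen} does not apply to it directly, and Lemma~\ref{lem:buket} would require a single $a$ with $\{a\}\tm\pr_i C''\sse\pr_{1i}\rel^*$, which your argument has not produced. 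So as written, $C_1\tm\pr_i C''\sse\pr_{1i}\rel^*$ is not established, and neither is the full-product hypothesis of Lemma~\ref{lem:relation-direct-product}.

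The fix, which is what the paper's proof does, is to treat coordinates $1$ and $i$ \emph{together} rather than one at a time. By Lemma~\ref{lem:as-product} (since $C_1\tm\pr_i C''\sse\pr_{1i}\rel$ via Corollary~\ref{cor:linkage-rectangularity}), $C_1\tm\pr_i C''$ is a single maximal component of $\pr_{1i}\rel$; hence there is a single s-path from $(\ba[1],\ba[i])$ to $(a,b)$ in $\pr_{1i}\rel$. Lifting this one path through $\rel$ starting at $\ba$ via Corollary~\ref{cor:product-path}(1) gives a tuple $\bb$ with $\pr_{1i}\bb=(a,b)$, and then maximality of $\ba[1]$ in $\zA_1$ and of $\pr_{2\dots n}\ba$ in $\pr_{2\dots n}\rel$ force $\bb[1]\in C_1$ and $\pr_{2\dots n}\bb\in C''$, so $\bb\in\rel^*$ and indeed $\bb\in\rel\cap(C\tm D)\sse\rel'$. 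Once $C_1\tm\pr_i C''\sse\pr_{1i}\rel^*$, closure of $\pr_{1i}\rel^*$ under term operations upgrades this to $\Sg{C_1}\tm\Sg{\pr_i C''}=\zB_1\tm\zB_i$, and Lemma~\ref{lem:relation-direct-product} finishes the proof.
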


\begin{proof}
Consider $\rel'$, the relation generated by 
$\rel\cap(C\tm D)$, $C=\se{\ba[1]}$, $D=\se{\pr_{2\dots n}\ba}$. 
By Corollary~\ref{cor:product-path} $\se{\ba[1]}\sse\pr_1\rel'$ and
$\se{\pr_{2\dots n}\ba}\sse\pr_{2\dots n}\rel'$. Moreover, $C$ and 
$D$ are maximal components in $\pr_1\rel'$ and $\pr_{2\dots n}\rel'$,
respectively. Since $\pr_{1i}\rel$ is linked for $i\in\{2\zd n\}$, 
$C\tm\pr_i D\sse\pr_{1i}\rel$ by Proposition~\ref{pro:max-gen}. 
As $\rel'$ is generated by $\rel\cap(C\tm D)$, $\pr_{1i}\rel'$ is generated 
by $\pr_{1i}\rel\cap(C\tm\pr_i D)$ implying $C\tm\pr_i D\sse\pr_{1i}\rel'$.
Subalgebras $\pr_1\rel'$ and $\pr_i\rel'$ are generated by $C$ and
$\pr_i D$, respectively, therefore, $\pr_{1i}\rel'=\pr_1\rel'\tm\pr_i\rel'$.
Now by Lemma~\ref{lem:relation-direct-product} the result follows.
\end{proof}

\section*{Declarations}

\subsection*{Data availability}
Data sharing not applicable to this article as datasets were neither generated nor analyzed.

\subsection*{Compliance with ethical standards}
The author is a member of the Editorial Board of Algebra Universalis. Apart from this the author declares that he has no conflict of interest.

\bibliographystyle{spmpsci}

\begin{thebibliography}{1}
\providecommand{\url}[1]{{#1}}
\providecommand{\urlprefix}{URL }
\expandafter\ifx\csname urlstyle\endcsname\relax
  \providecommand{\doi}[1]{DOI~\discretionary{}{}{}#1}\else
  \providecommand{\doi}{DOI~\discretionary{}{}{}\begingroup \urlstyle{rm}\Url}\fi

\bibitem{Baker75:chinese-remainder}
Baker, K., Pixley, A.: Polynomial interpolation and the chinese remainder theorem.
\newblock Mathematische Zeitschrift \textbf{143}, 165--174 (1975)

\bibitem{Barto12:absorbing}
Barto, L., Kozik, M.: Absorbing subalgebras, cyclic terms, and the constraint satisfaction problem.
\newblock Logical Methods in Computer Science \textbf{8}(1) (2012)

\bibitem{Bulatov06:semilattice}
Bulatov, A.: Combinatorial problems raised from 2-semilattices.
\newblock Journal of Algebra \textbf{298}(2), 321--339 (2006)

\bibitem{Bulatov11:conservative}
Bulatov, A.A.: Complexity of conservative constraint satisfaction problems.
\newblock {ACM} Trans. Comput. Log. \textbf{12}(4), 24 (2011)

\bibitem{Bulatov16:graph}
Bulatov, A.A.: Graphs of finite algebras, edges, and connectivity.
\newblock CoRR \textbf{abs/1601.07403} (2016)

\bibitem{Bulatov20:graph}
Bulatov, A.A.: Local structure of idempotent algebras {I}.
\newblock CoRR \textbf{abs/2006.09599} (2020)

\bibitem{Herrmann79:affine}
Herrmann, C.: Affine algebras in congruence modular varieties.
\newblock {A}cta {S}ci. {M}ath. ({S}zeged) \textbf{41}, 119--125 (1979)

\bibitem{Hobby88:structure}
Hobby, D., McKenzie, R.: The Structure of Finite Algebras, \emph{Contemporary Mathematics}, vol.~76.
\newblock American Mathematical Society, Providence, R.I. (1988)

\bibitem{Jeavons98:consist}
Jeavons, P., Cohen, D., Cooper, M.: Constraints, consistency and closure.
\newblock Artificial Intelligence \textbf{101}(1-2), 251--265 (1998)

\end{thebibliography}

\end{document}